\renewcommand*{\NAT@spacechar}{ }
\let\oldhref\href
\renewcommand{\href}[2]{\oldhref{#1}{\hbox{#2}}}
\newcommand{\mypar}[1]{\medskip\textbf{#1}\medspace}
\newlength{\RoundedBoxWidth}
\newsavebox{\GrayRoundedBox}
\newenvironment{GrayBox}[1]%
   {\setlength{\RoundedBoxWidth}{.93\columnwidth}
    \def\boxheading{#1}
    \begin{lrbox}{\GrayRoundedBox}
       \begin{minipage}{\RoundedBoxWidth}}%
   {   \end{minipage}
    \end{lrbox}
    \begin{center}
    \begin{tikzpicture}%
     \node(Text)[draw=black!20,fill=white,rounded corners,inner sep=2ex,text width=\RoundedBoxWidth]
           {\usebox{\GrayRoundedBox}};
      \coordinate(x) at (current bounding box.north west);
      \node [draw=white,rectangle,inner sep=3pt,anchor=north west,fill=white]
      at ($(x)+(6pt,.75em)$) {\boxheading};
    \end{tikzpicture}
    \end{center}}
\newenvironment{defproblemx}[1]{\noindent\ignorespaces%
                                \FrameSep=0pt%
                                \parindent=0pt%
                \vspace*{-1em}%
                \begin{GrayBox}{#1}%
                \begin{tabular*}{\RoundedBoxWidth}{@{} >{\itshape}
		 p{.15\RoundedBoxWidth} @{} p{.85\RoundedBoxWidth} @{}}%
            }{
                \end{tabular*}%
                \end{GrayBox}%
                \ignorespacesafterend
            }
\newcommand{\np}{\ensuremath{\mathrm{NP}}}
\newcommand{\fpt}{\ensuremath{\mathrm{FPT}}}
\newcommand{\xp}{\ensuremath{\mathrm{XP}}}
\newcommand{\wone}{\ensuremath{\mathrm{W[1]}}}
\newcommand{\wtwo}{\ensuremath{\mathrm{W[2]}}}
\newcommand{\udiff}{\ensuremath{u_{\diff}}}
\DeclareMathOperator{\score}{score}
\DeclareMathOperator{\lexpref}{>_\mathcal{\mathcal{F}}}
\newcommand{\strpref}[1]{\ensuremath{\operatorname{>}_{\mathcal{S}}^{#1}}}
\DeclareMathOperator{\lex}{lex}
\DeclareMathOperator{\opt}{opt}
\DeclareMathOperator{\pess}{pess}
\DeclareMathOperator{\util}{util}
\DeclareMathOperator{\egal}{egal}
\DeclareMathOperator{\candegal}{candegal}
\DeclareMathOperator{\eval}{eval}
\DeclareMathOperator{\bhav}{bhav}
\DeclareMathOperator{\secured}{C^*}
\DeclareMathOperator{\mful}{Mful}
\DeclareMathOperator{\fbid}{Fbid}
\DeclareMathOperator{\used}{\overline{used}}
\DeclareMathOperator{\fused}{\overline{fullyused}}
\DeclareMathOperator{\diff}{diff}
\newcommand{\naturals}{\ensuremath{{\mathbb{N}}}}
\newcommand{\calR}{\ensuremath{{{\mathcal{R}}}}\xspace}
\newcommand{\calF}{\ensuremath{{{\mathcal{F}}}}\xspace}
\newcommand{\Cconf}{\ensuremath{C^+}}
\newcommand{\Cpend}{\ensuremath{P}}
\newcommand{\Crej}{\ensuremath{C^-}}
\newcommand{\CM}[3]{${#1}$-${#2}$-${#3}$-\textsc{CM}\xspace}
\newcommand{\CMlong}[3]{${#1}$-${#2}$-${#3}$-\textsc{Co\-a\-li\-tio\-nal
Ma\-nipu\-la\-tion}\xspace}
\newcommand{\CCM}[3]{${#1}$-${#2}$-${#3}$-\textsc{CM} with consistent manipulators\xspace}
\newcommand{\CCMlong}[3]{${#1}$-${#2}$-${#3}$-\textsc{Coalitional Manipulation}
with consistent manipulators\xspace}
\newcommand{\SNTV}{\text{SNTV}}
\newcommand{\Bloc}{\text{Bloc}}
\newcommand{\lBloc}[1]{\ensuremath{#1}\text{-\Bloc{}}}
\newcommand{\ellBloc}{\lBloc{\ell}}
\newcommand{\Flex}{\ensuremath{\calF_{\lex}}}
\newcommand{\Fopt}{\ensuremath{\calF_{\opt}}}
\newcommand{\Fpess}{\ensuremath{\calF_{\pess}}}
\newcommand{\TIE}[2]{\textsc{$\calF^{#1}_{#2}$-TB}\xspace}
\newcommand{\TIElong}[2]{\textsc{$\calF^{#1}_{#2}$-Tie-Break\-ing}\xspace}
\newcommand{\elitegroup}{excellence-group}
\newcommand{\elgroup}{egroup}
\newcommand{\namedelitegroup}[1]{\ensuremath{#1}-\elitegroup}
\newcommand{\namedelgroup}[1]{\ensuremath{#1}-\elgroup}
\newcommand{\kelgroup}{\namedelgroup{k}}
\newcommand{\kelitegroup}{\namedelitegroup{k}}
\newcommand{\obligatory}{\ensuremath{\operatorname{obl}}}
\newcommand{\probDefLong}[5][]{%
 \begin{defproblemx}{#2 (#3)}
  \ifx&#1&%
  \else%
  &#1\\
  \fi\\%
  \textbf{Input:} & #4\\
  \textbf{Question:} & #5
 \end{defproblemx}%
}
\newtheorem{theorem}{Theorem}
\newtheorem{proposition}{Proposition}
\newtheorem{corollary}{Corollary}
\newtheorem{observation}{Observation}
\theoremstyle{definition}
\newtheorem{definition}{Definition}
\newtheorem{example}{Example}
\newtheorem{claim}{Claim}
\newcommand{\candPalong}{Uematsu: Final Fantasy}
\newcommand{\candPblong}{Badelt: Pirates of the Caribbean}
\newcommand{\candMalong}{Mozart: Clarinet Concerto}
\newcommand{\candMblong}{Mozart: Jeunehomme Piano Concerto}
\newcommand{\candBalong}{Beethoven: Piano Concerto No.\,5}
\newcommand{\candBblong}{Beethoven: Symphony No.\,6}
\newcommand{\candPa}{\ensuremath{o_1}\xspace}
\newcommand{\candPb}{\ensuremath{o_2}\xspace}
\newcommand{\candMa}{\ensuremath{m_1}\xspace}
\newcommand{\candMb}{\ensuremath{m_2}\xspace}
\newcommand{\candBa}{\ensuremath{b_1}\xspace}
\newcommand{\candBb}{\ensuremath{b_2}\xspace}
\newcommand{\vvoteYa}{\ensuremath{v_{\text{y}_1}}\xspace}
\newcommand{\vvoteYb}{\ensuremath{v_{\text{y}_2}}\xspace}
\newcommand{\vvoteYc}{\ensuremath{v_{\text{y}_3}}\xspace}
\newcommand{\voteMa}{\ensuremath{v_{\text{m}_1}}\xspace}
\newcommand{\voteMb}{\ensuremath{v_{\text{m}_2}}\xspace}
\newcommand{\voteBa}{\ensuremath{v_{\text{b}_1}}\xspace}
\newcommand{\voteBb}{\ensuremath{v_{\text{b}_2}}\xspace}
\newcommand{\exampleCandidates}{\ensuremath{C=\{\candBa,\allowbreak
 \candBb,\allowbreak \candMa,\allowbreak \candMb,\allowbreak \candPa,\allowbreak
 \candPb\}}}
\newcommand{\orderedsetof}[2]{\ensuremath{\{{#1}_1, {#1}_2, \ldots, {#1}_{#2}\}}}
\newcommand{\namedorderedsetof}[3]{\ensuremath{{#1}=\orderedsetof{#2}{#3}}}
\newcommand{\manipnumber}{\ensuremath{r}}
\providecommand{\keywords}[1]{\par\textbf{\textit{Keywords:}} #1}
\begin{document}
\title{On Coalitional Manipulation for Multiwinner Elections:
  Shortlisting\footnote{A preliminary version of this article appeared in
   the \emph{Proceedings of the Twenty-Sixth International Joint Conference on Artificial
   Intelligence (IJCAI '17)}~\citep{BKN17}.}}
\author{Robert Bredereck}
\author{Andrzej Kaczmarczyk}
\author{Rolf Niedermeier}
\affil{%
 TU Berlin,
 Faculty~IV,
 Algorithmics and Computational Complexity,\\
 Ernst-Reuter-Platz 7, D-10587 Berlin, Germany%
}
\maketitle
\begin{abstract}
Shortlisting of candidates---selecting a group of~``best'' candidates---is 
a special case 
of multiwinner elections.
We provide the first in-depth study of the computational complexity of strategic
voting for shortlisting based on the perhaps most basic voting rule in this
scenario, \ellBloc{} (every voter approves $\ell$~candidates). In particular, we
investigate the influence of several tie-breaking mechanisms (e.g.,\ pessimistic
versus optimistic) and group evaluation functions (e.g.,\ egalitarian versus
utilitarian). Among other things, conclude that in an egalitarian setting
strategic voting may indeed be computationally intractable regardless of the
tie-breaking rule. Altogether, we~provide a~fairly comprehensive picture of the
computational complexity landscape of this scenario.
\end{abstract}

\keywords{%
 computational social choice; utility aggregation; strategic voting;
 parameterized computational complexity; tie-breaking; \SNTV{}; \Bloc{}%
}
\section{Introduction}
A university wants to select the two favorite pieces in classical style to
be played during the next graduation ceremony.
The students were asked to submit their favorite pieces. Then a jury
consisting of seven members (three juniors and four seniors) from the university
staff selects from the six most frequently submitted pieces as follows:
Each jury member approves two pieces and the two winners are those
obtaining most of the approvals.
The six options provided by the students are
``\candBalong{}~(\candBa)'',
``\candBblong{}~(\candBb)'',
``\candMalong{}~(\candMa)'',
``\candMblong{}~(\candMb)'',
``\candPalong{}~(\candPa)'', and
``\candPblong{}~(\candPb).''
The three junior jury members are excited about recent audio-visual presentation
arts (both interactive and passive) and approve \candPa and \candPb.
Two of the senior jury members are Mozart enthusiasts, and the other two senior jury
members are Beethoven enthusiasts. Hence, when voting truthfully, two of them
would approve the two Mozart pieces and the other two would approve the
two Beethoven pieces.
The winners of the selection process would be \candPa and \candPb, both
receiving three approvals whereas every other piece receives only two
approvals.

The senior jury members meet every Friday evening and discuss important academic
issues including the graduation ceremony music selection processes,
why ``movie background noise'' recently counts as classical
music,\footnote{\url{http://www.classicfm.com/radio/hall-of-fame/}} and the
influence of video games on the ability of making important decisions.
During such a meeting they agreed that a graduation ceremony should always be
accompanied by pieces of traditional, first-class composers. Thus, finally all
four senior jury members decide to approve~\candBa and~\candMa so these two
pieces are played during the graduation ceremony.

Already this toy example above (which will be the basis of our running example
throughout the paper) illustrates important aspects of strategic voting in
multiwinner elections.
In case of coalitional manipulation for single-winner elections (where a
coalition of voters casts untruthful votes in order to influence the outcome of an
election; a topic which has been intensively studied in the
literature~\citep{Rothe15,handbookCOMSOC})
one can always assume that a coalition of manipulators agrees on trying to make
a distinguished alternative win the election.
In case of \emph{multiwinner elections}, however, already determining concrete
possible goals of a coalition seems to be a non-trivial task:
There may be exponentially many different outcomes which can be reached through
strategic votes of the coalition members and each member could have its
individual evaluation of these outcomes.

Multiwinner voting rules come up very naturally whenever one has to select from
a large set of candidates a smaller set of ``the best'' candidates.
Surprisingly, although at least as practically relevant as single-winner voting
rules, the multiwinner literature is much less developed than the single-winner
literature. In recent years (see a survey of \citet{COMSOCtrends17multi}),
however, research into multiwinner voting rules, their properties, and
algorithmic complexity grew significantly~\citep{ABCEFW17, BSU13, AEFLS17,
 AGGMMW15, BGLMR13, BC08, BC10, EFSS17, FSST16, FST17, MPRZ08, OZE13, Sko16,
SFS15, LS18}. When selecting a group of winning candidates, various criteria can be
interesting; namely, proportional representation, diversity, or excellence
(see~\citet{EFSS17}). We focus on the last scenario. Here the goal is to select
the best (say highest-scoring) group of~candidates obtaining a so-called
\emph{short list}.

Shortlisting comes very naturally in the context of selection committees; for
instance, for human resources departments that need to select, for a fixed
number of positions, the best qualified applicants. A standard way of candidate
selection in the context of shortlisting is to use scoring-based voting
rules.
We focus on the two most natural ones: \SNTV{} (single non-transferable
vote---each voter gives one point to one candidate) and \ellBloc{} (each voter
gives one point to each of $\ell$ different candidates, so \SNTV{} is the same
as \lBloc{1}).\footnote{Although, in general, \ellBloc{} does not satisfy
\emph{committee monotonicity} which is considered as a necessary condition for
shortlisting~\citep{COMSOCtrends17multi}, this rule seems quite frequent in
practice---for example The Board of Research Excellence in Poland was elected
using a variant of~\ellBloc{}~\citep{RDNVote}.} Obviously, for such voting rules
it is trivial to determine the score of each individual candidate. The main goal
of our work is to model and understand coalitional manipulation in a
computational sense---that is, to introduce a formal description of how a group
of manipulators can influence the election outcome by casting strategic votes
and whether it is possible to find an effective strategy for the manipulators to
change the outcome in some desired way. In this fashion, we complement
well-known work: manipulation for single-winner rules initiated
by~\citet{BTT89}, coalitional manipulation for single-winner rules initiated
by~\citet{CSL07}, and (non-coalitional) manipulation for multiwinner rules
initiated by~\citet{MPRZ08}. In coalitional manipulation scenarios, given full
knowledge about other voters' preferences, one has a set of manipulative voters
who want to influence the election outcome in a favorable way by casting their
votes strategically.

To come up with a useful framework for coalitional manipulation for multiwinner
elections, we first have to identify the exact mathematical model and questions
to be asked. A couple of straightforward extensions of coalitional manipulation
for single-winner elections or (non-coalitional) manipulation for multiwinner
elections do not fit. Extending the single-winner variant directly, one would
probably assume that the coalition agrees on making a distinguished candidate
part of the winners or that the coalition agrees on making a distinguished
candidate group part of the winners. The former is unrealistic because in
multiwinner settings one typically cares about more than just one
candidate---especially in shortlisting it is natural that one wants rather some
group of ``similarly good'' candidates to be winning instead of only one
representative of such a group. The latter, that is, agreeing on a distinguished
candidate group to be part of the winners is also problematic since there may be
exponentially many ``equally good'' candidate groups for the coalition. Notably,
this was not a problem in the single-winner case; there, one can test for a
successful manipulation towards each possible candidate avoiding an exponential
increase of the running time (compared to the running time of such a test for a
single candidate).

We address the aforementioned issue of modeling coalitional manipulation for
multiwinner election by extending a single-manipulator model for multiwinner
rules of~\citet{MPRZ08}. In their work, the manipulator specifies the utility of
each candidate and the utility for a candidate group is obtained by adding up
the utilities of each group member. We build on their idea and let each
manipulator report the utility of each candidate. However, aggregating
utilities for a coalition of manipulators (in other words, computing a
collective utility of manipulators) becomes nontrivial---especially for a
coalition of manipulators who have diversified utilities for single candidates
but still have strong incentives to work together (e.g.,\ as we have seen in our
introductory example).

\mypar{Our Contributions.}
We devise a formal description of coalitional manipulation in multiwinner
elections arriving at a new, nontrivial model capturing two~types of
manipulators' attitudes and a few natural ways of utility aggregation. To this
end, in our model, we distinguish between optimistic and pessimistic
manipulators and we formalize aggregation of utilities in a utilitarian and an
egalitarian way.

Using our model, we analyzed the computational complexity of finding a
successful manipulation for a coalition of voters, assuming elections under
rules from the family of~\ellBloc{} voting rules. We show that, even for these
fairly simple rules, the computational complexity of coalitional manipulation is
diverse. In particular, we observed that finding a manipulation maximizing the
utility of a worst-off manipulator (egalitarian aggregation) is~\np-hard
(regardless of the manipulators' attitude). This result stands in sharp contrast to
the polynomial-time algorithms that we give for finding a manipulation
maximizing the sum of manipulators' utilities (utilitarian aggregation).
Additionally, we show how to circumvent the cumbersome~\np-hardness for the
egalitarian aggregation providing an (FPT) algorithm that is efficient for
scenarios with few manipulators and few different values of utility that
manipulators assign to agents. We survey all our computational complexity
results in~\autoref{tab:results} (\autoref{sec:conclusion}).

\mypar{Related Work.}
To the best of our knowledge, there is no previous work on \emph{coalitional}
manipulation in the context of multiwinner elections. We refer to recent
textbooks for an overview of the huge literature on single-winner (coalitional)
manipulation \citep{Rothe15,handbookCOMSOC}. Most relevant to our work,
\citet{Lin11} showed that coalitional manipulation in single-winner elections
under $\ell$-Approval is solvable in linear time by a greedy algorithm.
\citet{MPRZ08} introduced (non-coalitional) manipulation for multiwinner
elections. While pinpointing manipulation for several voting rules as NP-hard,
they showed that manipulation remains polynomial-time solvable for
\Bloc{} (which can be interpreted as a multiwinner equivalent of $1$-Approval).
\citet{OZE13} extended the latter result for different tie-breaking strategies
and identified further tractable special cases of multiwinner scoring rules but
conjectured manipulation to be hard in general for (other) scoring rules.
Summarizing, \Bloc{} is simple but comparably well-studied, and hence we
selected it as a showcase for our study of the presumably computationally harder
\emph{coalitional} manipulation.

\mypar{Organization.}
\autoref{sec:Prelim} introduces basic notation and formal concepts.
In \autoref{sec:model}, we develop our model for coalitional manipulation in
multiwinner elections. Its variants respect different ways of evaluating
candidate groups (utilitarian vs.\ egalitarian) and two kinds of
manipulators behavior (optimistic vs.\ pessimistic).
In \autoref{sec:compl:tie-breaking}, we present algorithms and complexity
results for computing the output of several tie-breaking rules that allow
to model optimistic and pessimistic manipulators.
In \autoref{sec:coal-man}, we formally define the coalitional manipulation
problem and explore its computational complexity using \ellBloc{} as a showcase.
We refer to our conclusion and \autoref{tab:results} (\autoref{sec:conclusion})
for a detailed overview of our findings.

\section{Preliminaries}
\label{sec:Prelim}
For a positive integer $n$, let $[n] := \{1, 2, \ldots, n\}$.  We use the
toolbox of parameterized complexity~\citep{CFKLMPPS15,DF13,FG06,Nie06} to
analyze the computational complexity of our problems in a fine-grained way. To
this end, we always identify a \emph{parameter}~$\rho$ that is typically a
positive integer. We call a problem parameterized by~$\rho$
\emph{fixed-parameter tractable} (in $\fpt$) if it is solvable in $f(\rho) \cdot
|I|^{O(1)}$ time, where $|I|$ is the size of a given instance encoding, $\rho$
is the value of the parameter, and $f$ is an arbitrary computable (typically
super-polynomial) function. To preclude fixed-parameter tractability, we use an
established complexity hierarchy of classes of parameterized problems, $\fpt
\subseteq \wone \subseteq \wtwo \subseteq \cdots \subseteq \xp$. It is widely
believed that all inclusions are proper. The notions of hardness for
parameterized classes are defined through parameterized reductions similar to
classical polynomial-time many-one reductions---in this work, it suffices to
ensure that the value of the parameter in the problem we reduce to depends only
on the value of the parameter of the problem we reduce from. Occasionally, we
use a \emph{combined parameter} $\rho' + \rho''$ which is a more explicit way of
expressing a parameter $\rho = \rho' + \rho''$.

An \emph{election}~$(C,V)$ consists of a set~$C$ of $m$~candidates and a
multiset~$V$ of $n$~votes. Votes are linear orders over~$C$---for example, for
$C=\{c_1,c_2,c_3\}$ we write $c_1 \succ_v c_2 \succ_v c_3$ to express that
candidate~$c_1$ is the most preferred and candidate~$c_3$ is the least preferred
according to vote~$v$. We~write~$\succ$ if the corresponding vote is clear from
the context.

A \emph{multiwinner voting rule}%
\footnote{Some literature use the name \emph{multiwinner voting correspondence}
for what we called multiwinner voting rule. In that case, a multiwinner voting
rule is required to return exactly one set of the desired size. This is usually
achieved by combining a multiwinner voting correspondence with a tie-breaking
rule.} is a function that, given an election~$(C,V)$ and an integer $k\in
[|C|]$, outputs a family of co-winning size-$k$ subsets of $C$ representing the
co-winning \emph{\kelitegroup{}s}. We use \emph{\kelgroup{}} as an abbreviation
for \kelitegroup{}; we also use \emph{\elgroup{}} if the size of an
\elitegroup{} is either not relevant or clear from the context. The reason we do
not use the established term ``committee'' is that in shortlisting applications
``committee'' traditionally rather refers to voters and not to candidates.

We consider \emph{scoring rules}---multiwinner voting rules that assign points
to candidates based on their positions in the votes. By~$\score(c)$, we denote
the total number of points that candidate~$c$ obtains, and we
use~$\score_{V'}(c)$ when restricting the election to a subset~$V'\subset V$ of
voters. A (multiwinner) scoring rule selects a family~$\mathcal{X}$ of
co-winning \kelgroup{}s with the maximum total sum of scores. It holds that $X
\in \mathcal{X}$ if and only if $\forall c \in X, c' \in C \setminus X\colon
\score(c) \ge \score(c')$. We focus on the family of \ellBloc{} multiwinner
voting rules, that is a family of scoring rules that assign, for each vote, one
point to each of the top~$\ell < |C|$ candidates.\footnote{The case where~$\ell$
 coincides with the size~$k$ of the \elgroup{} is typically referred to as
 \Bloc; \lBloc{1} corresponds to \SNTV{} \citep{MPRZ08}. The case where~$1 <
\ell < k$ is also referred to as Limited Vote (or Limited Voting). }

\begin{example} \label{ex:unfairUtil}
Referring back to our introductory example, we have a set \exampleCandidates{} of candidates
and a set $V=\{\vvoteYa, \vvoteYb, \vvoteYc, \voteBa, \voteBb, \voteMa, \voteMb\}$ of voters.
The
voters \vvoteYa{}, \vvoteYb{}, and \vvoteYc{} represent the three junior jury members,
whereas \voteBa{}, \voteBb{} and \voteMa{}, \voteMb{} represent, respectively,
the Beethoven and Mozart enthusiasts among the senior jury members. In the example,
we described a way of manipulating the election by the senior jury members which
leads to selecting two music pieces. There are several ways to
illustrate this manipulation using our model. Below we present one of the
possible sets of casted votes that represents the manipulated election:
\begin{align*}
 \vvoteYa, \vvoteYb, \vvoteYc\colon\quad \candPa \succ \candPb \succ \candBa \succ
 \candBb \succ \candMa \succ \candMb, \\
 \voteBa, \voteBb, \voteMa, \voteMb \colon\quad \candBa \succ \candMa \succ
 \candBb \succ \candMb \succ \candPa \succ \candPb.
\end{align*}
Following the introductory example, we are choosing an \elgroup{} of size~$k=2$.
Using the \Bloc{} multiwinner voting rules (which coincides with our
introductory example), the winning \namedelgroup{2} consist of candidates
\candBa{} and \candMa{}. However, under the \SNTV{} voting rule the
situation would change, and the winners would be \candPa{} and \candBa{}.
\SNTV{} and \Bloc{} alike output a single winning \elgroup{} in this example,
and thus tie-breaking is ineffective.
\end{example}

To select a single \kelgroup{} from the set of co-winning \kelgroup{}s one has
to consider tie-breaking rules.
A \emph{multiwinner tie-breaking rule} is a mapping that, given 
an election and a family of co-winning \kelgroup{}s, outputs a single \kelgroup{}.
Among them, there is a set of natural rules that is of particular interest in
order to model the behavior of manipulative voters.
Indeed, in case of a single manipulator both \emph{pessimistic} tie-breaking
as well as \emph{optimistic} tie-breaking have been considered in addition
to lexicographic and randomized tie-breaking~\citep{MPRZ08,OZE13}.
To model optimistic and pessimistic tie-breaking in a meaningful manner\footnote{We
cannot simply use ordinal preferences:
\citet{OZE13} observed that already in case of a single manipulator one cannot simply set the fixed
lexicographic order of the manipulators' preferences (resp.\ the reverse of it)
over candidates to model
optimistic (resp.\ pessimistic) tie-breaking.
For example, it is a strong restriction to assume that a manipulator would always prefer
its first choice together with its fourth choice towards
its second choice together with its third choice.
It might be that only its first choice is really acceptable (in which case the
assumption is reasonable) or that the first three choices are comparably good
but the fourth choice is absolutely unacceptable (in which case the assumption
is wrong).},
we use the model introduced by \citet{OZE13} in which a manipulative voter~$v$
is described not only by the preference order~$\succ_v$ of the candidates
but also by a utility function $u\colon C \rightarrow \naturals$.
To cover this in the tie-breaking process, \emph{coalition-specific}
tie-breaking rules get---in addition to the original election, the manipulators'
votes, and the co-winning \elitegroup{}s---the manipulators' utility functions in
the input. The formal implementations of these rules and their properties are
discussed in \autoref{sec:tie-breaking}.

\section{Model for Coalitional Manipulation}
\label{sec:model}
In this section, we formally define and explain our model and the respective
variants. To this end, we discuss how we evaluate a \kelgroup{} in terms of
utility for a coalition of manipulators and introduce tie-breaking rules that
model optimistic or pessimistic viewpoints of the manipulators.

\subsection{Evaluating \kelgroup{}s}
\label{sec:variants}
As already discussed in the introduction, one should not extend the model of
coalitional manipulation for single-winner elections to multiwinner elections in
the simplest way (e.g.,\ by assuming that the manipulators agree on some
distinguished candidate or on some distinguished \elgroup{}). Instead, we follow
\citet{MPRZ08} and assume that we are given a utility function over the
candidates for each manipulator and a utility level which, if achieved,
indicates a successful manipulation. \citet{MPRZ08} compute the utility of an
\elgroup{} by summing~up the utility values the manipulator assigns to each
member of~the \elgroup{}.

At first glance, summing up the utility values assigned by each manipulator to
each member of an \elgroup{} seems to be the most natural extension for a
coalition of manipulators. However, this \emph{utilitarian variant} does not
guarantee single manipulators to gain non-zero utility. In extreme cases it
could even happen that some manipulator is worse off compared to voting
sincerely, as demonstrated in
\autoref{ex:manipWorseOff}.
\begin{example} \label{ex:manipWorseOff}
 Consider the election $E=(C,V)$ where \exampleCandidates{} is a set
 of~candidates and  $V=\{v_1,v_2,v_3\}$ is the following multiset of three
 votes:
 \begin{align*}
  v_1, v_2 \colon\quad o_1 \succ o_2 \succ m_1 \succ m_2 \succ b_1 \succ b_2, \\
  v_3\colon\quad m_2 \succ m_1 \succ b_2 \succ b_1 \succ o_1 \succ o_2.
 \end{align*}
 Additionally, consider two
 manipulators, $u_1$ and $u_2$, that report utilities to the candidates as
 depicted in the table below.
 \begin{center}
  \begin{tabular}{c|c c c c c c}
   $u(\cdot)$ & $b_1$ & $b_2$ & $m_1$ & $m_2$ & $o_1$ & $o_2$ \\ \hline
   $u_1$ & $10$ & $5$ & $4$ & $0$ & $0$ & $0$\\ 
   $u_2$ & $1$ & $2$ & $5$ & $7$ & $0$ & $0$
  \end{tabular}
 \end{center}
 Let us analyze the winning \namedelgroup{2} under the \SNTV{} voting rule.
 Observe that if the manipulators vote sincerely, then together they give one
 point to~$b_1$ and one to~$m_2$ (one point from each manipulator). Combining
 the manipulators' votes with the non-manipulative ones, the winning
 \namedelgroup{2} consists of candidates $o_1$ and $m_2$ that both have score
 two; no other candidate has greater or equal score, so tie-breaking is
 unnecessary. The value of such a group is equal to seven according to the
 utilitarian evaluation variant.  Manipulator $u_2$'s utility is seven. However,
 both manipulators can do better by giving their points to candidate $b_1$.
 Then, the winners are candidates $o_1$ and $b_1$, giving the total utility of
 $11$ (according to the utilitarian variant).  Observe that in spite of growth
 of the total utility, the utility value gained by $u_2$, which is one, is lower
 than in the case of sincere voting.
\end{example}

In \autoref{ex:manipWorseOff} manipulator~$u_2$ devotes its satisfaction to the
utilitarian satisfaction of the group of the manipulators; that is, $u_2$ is
worse off voting strategically compared to voting sincerely. Despite this issue,
however, the utilitarian viewpoint can be justified if the manipulators are able
to compensate such losses of utility of some manipulators, for example, by
paying money to each other. For~cases where~manipulators cannot do that, we
introduce two egalitarian evaluation variants. The \emph{(\elgroup-wise)
egalitarian} variant aims at maximizing the minimum satisfaction of the
manipulators with the whole \kelgroup{}. Thus, intuitively, in the egalitarian
variant one is looking for an \elgroup{} that is maximizing satisfaction of the
least satisfied manipulator. The \emph{candidate-wise egalitarian} variant aims
at maximizing the manipulators' satisfaction resulting from the summation of the
minimum satisfactions every single candidate contributes. This variant is
motivated, for instance, by the following scenario. We associate candidates with
investments, manipulators with financial experts and utilities with a measure of
a benefit the experts predict for each investment. The candidate-wise
egalitarian is then modelling a ``guaranteed'' benefit of a selection of
investments. The computed benefit is ``guaranteed'' in a sense that, for each
investment, one counts the benefit predicted by the most pessimistic
expert
We do not distinguish ``candidate-wise utilitarian'' variant since this variant
would be equivalent to the (regular) utilitarian variant.

We formalize the described variants of \kelgroup{} evaluation (for
$r$~manipulators) with~\autoref{def:utility}.
\begin{definition} \label{def:utility}
Given a set of candidates $C$, a \kelgroup{} $S \subseteq C$, $|S|=k$, and a
family of manipulators' utility functions $U=\{u_1, u_2, \dots,
u_\manipnumber\}$ where $u_i \colon C \rightarrow \naturals$, we consider the
following functions:
\begin{itemize}
 \item $\util_{U}(S):=\sum_{u \in U} \sum_{c \in S} u(c)$, \item
 $\egal_{U}(S):=\min_{u \in U} \sum_{c \in S} u(c)$, \item
 $\candegal_{U}(S):=\sum_{c \in S} \min_{u \in U} u(c)$.
\end{itemize}
\end{definition} 
Intuitively, these functions determine the utility of a \kelgroup{} $S$
according to, respectively, the utilitarian and the egalitarian variants of
evaluating~$S$ by a group of \manipnumber~manipulators (identifying manipulators
with their utility functions). We omit subscript~$U$ when $U$ is clear from the
context. To illustrate \autoref{def:utility} we apply it in
\autoref{ex:utilities}.
\begin{example} \label{ex:utilities}
 Consider our example set of candidates \exampleCandidates{} and two
 manipulators $u_1$, $u_2$ whose utility functions over the candidates are
 depicted in the table below.
 \begin{center}
  \begin{tabular}{c|c c c c c c}
   $u(\cdot)$ & $b_1$ & $b_2$ & $m_1$ & $m_2$ & $o_1$ & $o_2$ \\ \hline $u_1$ &
   $10$ & $5$ & $4$ & $0$ & $0$ & $0$\\ $u_2$ & $1$ & $2$ & $5$ & $7$ & $0$ &
   $0$
  \end{tabular}
 \end{center}
 Then, evaluating the utility of \namedelgroup{2} $S=\{b_1, m_1\}$ applying the
 three different evaluation variants gives:
 \begin{itemize}
  \item $\util(S)=(10+4) + (1+5) = 20$, \item $\egal(S)=\min\{(10+4);(1+5)\}=6$,
 \item $\candegal(S)=\min\{10,1\} + \min\{4, 5\} = 5$.
 \end{itemize}
\end{example}

Analyzing \autoref{ex:utilities}, we observe that we can compute the utilitarian
value of \elgroup{}~$S$ by summing up the overall utilities that each candidate
in~$S$ contributes to all manipulators; for instance candidate~$b_1$ always
contributes the utility of~$11=10+1$ to the manipulators, independently of other
candidates in the~\elgroup{}. Following this observation, instead of coping with
a collection of utility function, we can ``contract'' all manipulator's
functions to a single function. The new function assigns each candidate a
utility value equal to the sum of utilities that the contracted functions assign
to this candidate. Analogously, we can deal with the candidate-wise egalitarian
variant by taking the minimum utility associated to each candidate as the the
utility of this candidate in a new function. Thus, in both variants, we can
consider a single utility function instead of a family of functions. The
conclusion from the above discussion is summarized in the following observation.

\begin{observation} \label{obs:util-functions}
 Without loss of generality, one can assume that there is a single non-zero
 valued utility function over the candidates under the utilitarian or
 candidate-wise egalitarian evaluation.
\end{observation}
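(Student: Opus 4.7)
The plan is to verify the observation directly by constructing, in each of the two cases, a single utility function~$u^\star$ whose evaluation of every $k$-\elgroup{} coincides with that of the whole family~$U$. For the utilitarian variant, I would set $u^\star(c) := \sum_{u \in U} u(c)$; the equality $\util_U(S) = \util_{\{u^\star\}}(S)$ then follows from swapping the order of the two nested sums in the definition of $\util$, so the verification is a one-line calculation. For the candidate-wise egalitarian variant, I would set $u^\star(c) := \min_{u \in U} u(c)$; the identity $\candegal_U(S) = \candegal_{\{u^\star\}}(S)$ is again immediate, because in the definition of $\candegal$ the minimum over~$U$ is already taken separately for each candidate before the summation over~$S$.

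Since both identities hold for every~$S$, any comparison between \elgroup{}s under $\util_U$ (resp.\ $\candegal_U$) is preserved when we replace the family~$U$ by the singleton~$\{u^\star\}$, which is exactly the ``without loss of generality'' phrasing in the statement. The constructed~$u^\star$ remains in~$\naturals$ because it is built from non-negative integers via summation or minimization, so it is a legitimate utility function in the model of \autoref{def:utility}.

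I do not anticipate a serious obstacle: the whole argument is an elementary exchange of aggregation operators. The only genuinely informative point is negative, namely that the analogous contraction fails for the \elgroup{}-wise egalitarian variant $\egal_U$, where the outer minimum over manipulators is taken \emph{after} the sum over the \elgroup{} and therefore cannot be pushed inside the sum without loss of information; this is precisely why the observation excludes that variant. A further subtlety, worth flagging but outside the scope of the observation itself, is that the full profile~$U$ must still be retained whenever the coalition-specific tie-breaking rules discussed later are applied, since those rules may depend on individual utility vectors rather than only on the aggregated evaluation.
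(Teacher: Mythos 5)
Your proposal is correct and follows essentially the same route as the paper: contract the family~$U$ to the single function $c \mapsto \sum_{u\in U}u(c)$ (utilitarian) or $c \mapsto \min_{u\in U}u(c)$ (candidate-wise egalitarian) and note that the evaluation of every \elgroup{} is unchanged by an elementary exchange of aggregation operators. The only difference is that the paper pads the contracted function with $\manipnumber-1$ all-zero functions to keep the number of manipulators fixed, whereas you pass to the singleton $\{u^\star\}$; your version is, if anything, slightly cleaner, since a literal reading of the padded family would let the zero functions interfere with the candidate-wise minimum.
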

\begin{proof}
 Consider a multiset of manipulators' utility functions $U=\{u_1, u_2,
 \ldots, u_\manipnumber\}$ and a \kelgroup{} $S=\{c_1, c_2, \ldots, c_k\}$.
 For the utilitarian variant, create a new utility function~$u'$ that assigns to
 each candidate the sum of utilities given to this candidate by all
 manipulators;
 that is, $u'(c):=\sum_{i=1}^r u_r(c)$ for all $c \in C$.
 Technically, we need also a special utility function $u_0$
 that assigns to each candidate the utility of zero.
 We construct a new family
 $U'$ consisting of function $u'$ and $\manipnumber - 1$ copies of
 function~$u_0$. Since function $u'$ is the only one which gives non-zero
 utility in family $U'$ and for each candidate this function returns the sum of
 utilities given to a candidate by all functions from family $U$, it holds that
 $\util_{U}(S)=\util_{U'}(S)$.  

 We follow a similar strategy proving \autoref{obs:util-functions} for the
 candidate-wise variant. We introduce a function~$u'$ defined as~$u'(c):=\min_{u
 \in U}u(c)$ for each candidate~$c \in C$. Then we create a new family of
 utility functions~$U'$ with function~$u'$ and $\manipnumber -1$ zero-valued
 function~$u_0$. Naturally, $\egal_{U'}(S)=\egal_{U}(S)$ because the values of
 function~$u'$ for each candidate exactly follow the definition of the
 candidate-wise egalitarian evaluation (see \autoref{def:utility})  and $u'$ is
 the only non-zero valued function in $U'$.
\end{proof}

\subsection{Breaking Ties}
\label{sec:tie-breaking}

Before formally defining our tie-breaking rules, we briefly discuss some
necessary notation and central concepts. Consider an election $(C,V)$, a
size~$k$ for the \elgroup{} to be chosen, and a scoring-based multiwinner voting
rule~$\calR$. We can partition the set of candidates~$C$
into three sets \Cconf{}, \Cpend{}, and \Crej{} as follows: The set~\Cconf{}
contains the \emph{confirmed candidates}, that is, candidates that are in all
co-winning \kelgroup{}s. The set~\Cpend{} contains the \emph{pending
candidates}, that is, candidates that are only in some co-winning \kelgroup{}s.
The~set~\Crej{} contains the \emph{rejected candidates}, that is, candidates
that are in no co-winning \kelgroup{}. Observe that $|\Cconf| \le k$,
$|\Cconf\cup\Cpend|\ge k$, and that every candidate from~$\Cpend \cup \Crej$
receives fewer points than every candidate from~\Cconf{}. Additionally, all
candidates in \Cpend{} receive the same number of points.

We define the following families of tie-breaking rules which are considered in
this work. In order to define optimistic and pessimistic rules, we assume that
in addition to~\Cconf{}, \Cpend{}, and $k$, we are given a family of utility
functions which are used to evaluate the \kelgroup{}s as discussed in
\autoref{sec:variants}.

 \textbf{Lexicographic \Flex.}
 A tie-breaking~$\calF$ belongs to $\Flex$ if and only if ties are broken
 lexicographically with respect to some predefined order~$\lexpref$ of the
 candidates from~$C$. That is, $\calF$~selects all candidates from~\Cconf{} and
 the top $k-|\Cconf|$ candidates from~$P$ with respect to~$\lexpref$.

 \textbf{Optimistic $\Fopt^{\eval}$, $\eval \in \{\util,\egal,\candegal\}$.}
 A tie-breaking belongs to $\Fopt^{\eval}$ if and only if
 it always selects some \kelgroup~$S$ such that $\Cconf
 \subseteq S \subseteq (\Cconf \cup P)$ and~there is no other \kelgroup~$S'$
 with $\Cconf \subseteq S' \subseteq (\Cconf \cup P)$ and $\eval(S') >
 \eval(S)$.

 \textbf{Pessimistic $\Fpess^{\eval}$, $\eval \in \{\util,\egal,\candegal\}$.}
 A tie-breaking belongs to $\Fpess^{\eval}$ if and only if
 it always selects some \kelgroup~$S$ such that $\Cconf
 \subseteq S \subseteq (\Cconf \cup P)$ and~there is no other \kelgroup~$S'$
 with $\Cconf \subseteq S' \subseteq (\Cconf \cup P)$ and $\eval(S') <
 \eval(S)$.

We remark that the definitions above come in two, substantially different
variants. For each lexicographic tie-breaking rule, there is always exactly one
\elgroup{} that will be selected by the rule for a particular set of pending
set candidates. However, it is not the case for the families of pessimistic and
optimistic families of rules. In fact, there might be many possible \elgroup{}s
whose value, computed in terms of a respective evaluation variant, is exactly
the same. Such a feature seems to contradict the idea of a tie-breaking rule
that should not, by itself, introduce ties again. However, we argue that
choosing arbitrary equally-valued (``tied'') \elgroup{} is a proper way to
circumvent this problem. Indeed, according to a particular evaluation all
\elgroup{}s with the same value are indistinguishable from each other.

\subsection{Limits of Lexicographic Tie-Breaking}
\label{subsec:limits}
\newcommand{\fixprefix}[1]{\ensuremath{\{#1\}}}
\newcommand{\fixsetprefix}[1]{\ensuremath{#1}}

\newcommand{\fixsimulation}[1]{\fixprefix{#1}-simulation}
\newcommand{\fixsimulate}[1]{\fixprefix{#1}-simulate}
\newcommand{\fixsimulates}[1]{\fixsimulate{#1}s}
\newcommand{\fixsetsimulation}[1]{\fixsetprefix{#1}-simulation}
\newcommand{\fixsetsimulate}[1]{\fixsetprefix{#1}-simulate}
\newcommand{\fixsetsimulates}[1]{\fixsetsimulate{#1}s}

\newcommand{\fixsimulator}[1]{\fixprefix{#1}-simulator}
\newcommand{\fixsetsimulator}[1]{\fixsetprefix{#1}-simulator}
From the above discussion, we can conclude that lexicographic tie-breaking is
straightforward in~the~case~of~scoring-based multiwinner voting
rules.
Basically any~subset of the desired cardinality from the set
of pending candidates can be chosen.
In particular, the~best pending candidates with respect to the given order
can~be~chosen.
We remark that applying lexicographic tie-breaking may be more
complicated for general multiwinner voting rules.

It remains to be clarified whether one can find a reasonable order of the
pending candidates in order to model optimistic or pessimistic tie-breaking
rules in a simple way. We show that this is possible for every
$\calF^{\eval}_{\bhav}$, $\eval \in \{\util,\candegal\}$, $\bhav \in
\{\opt,\pess\}$, using the fact that in these cases we can safely assume that
there is only one non-zero valued utility function (see
\autoref{obs:util-functions}). On the contrary, there is a counterexample for
$\eval=\egal$ and $\bhav \in \{\opt,\pess\}$. On the way to prove these claims we
need to formally define what it means that one family of tie-breaking rules can
be used to simulate another family of tie-breaking rules.
\begin{definition} \label{def:simulation}
 We call a triplet consisting of an election with candidate set~$C$, a size of an
 \elgroup{}, and a family of utility functions a \emph{tie-breaking perspective
 over~$C$}. Let~$\mathbb{E}_C(\Cconf)$, $\mathbb{E}_C(\Cpend)$,
 $\mathbb{E}_C(U)$, and $\mathbb{E}_C(k)$ be all possible tie-breaking
 perspectives over~$C$ admitting, respectively, a set~\Cconf{} of
 confirmed candidates, a set~\Cpend{} of pending candidates, a family~$U$ of
 utility functions, and a size~$k$ of an \elgroup. For some non-empty
 subset~$\mathcal{P}$ of the set $\{\Cconf{},\Cpend{}, U, k\}$,
 let~$\mathbb{E}_C(\mathcal{P}):=\bigcap_{x \in \mathcal{P}} \mathbb{E}_C(x)$.\\
 Then, for two families~$\calF$ and~$\calF'$ of tie-breaking rules we say that
 $\calF$ can \emph{\fixsetsimulate{\mathcal{P}}$~\calF'$} if there exists a
 rule~$F \in \calF$ such that for each tie-breaking perspective
 in~$\mathbb{E}_C(\mathcal{P})$ there exists a rule~$F' \in \calF'$ such that
 $F$ and~$F'$ yield the same output for this perspective. We call rule~$F$
 a~\emph{\fixsetsimulator{\mathcal{P}}}.
\end{definition}

At first glance, \autoref{def:simulation} might seem overcomplicated. However,
it is tailored to grasp different degrees of simulation possibilities. On the one
hand, one can always find a lexicographic order and use it for breaking ties if
all: confirmed candidates, pending candidates, utility functions, and the size
of an \elgroup{} are known. Thus, one needs some flexibility in the definition
of simulation for it to be non-trivial.
On the other hand, it is somewhat obvious that without fixing the utility functions,
one cannot simulate optimistic or pessimistic tie-breaking rules.
In other words, we have:

\begin{observation}\label{obs:lex-simulation-withoutU}
 Let $C$~be a fixed set of candidates, \Cconf{} be a set of confirmed
 candidates, \Cpend{} be a set of pending candidates, and $k$ be a size of an
 \elgroup{}. Let $\bhav \in \{\opt,\pess\}$ and $\eval \in \{\util,\candegal,
 \egal\}$. The family of lexicographic tie-breaking rules does not
 \fixsimulate{\Cconf, \Cpend, k}~$\calF^{\eval}_{\bhav}$.
\end{observation}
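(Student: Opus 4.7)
The plan is to furnish a small, explicit counterexample showing that, once $\Cconf$, $\Cpend$, and $k$ are fixed, every rule in $\Flex$ is committed to a single $k$-subset of $\Cconf \cup \Cpend$ independently of the utility functions, whereas the rules in $\calF^{\eval}_{\bhav}$ must adapt to $U$. Since a candidate \fixsimulator{\Cconf,\Cpend,k} has to be a single fixed $F\in\Flex$ that works simultaneously for all utility profiles, exhibiting just two incompatible profiles will suffice.

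First I would pick a minimal non-trivial instance: take $C$ to contain at least two distinguished candidates $c_1,c_2$, set $\Cconf=\emptyset$, $\Cpend=\{c_1,c_2\}$, and $k=1$, so the only \elgroup{}s admissible for tie-breaking are $\{c_1\}$ and $\{c_2\}$. Any $F\in\Flex$ fixes an order $\lexpref$ on $C$ in advance; with $\Cconf,\Cpend,k$ fixed, the output of $F$ is therefore determined by $\lexpref$ alone—call it $\{c_i\}$ with $i\in\{1,2\}$. Crucially, this output does not depend on $U$.

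Next I would produce a utility profile $U$ ruling out $\{c_i\}$ as the output of every $F'\in\calF^{\eval}_{\bhav}$. For $\bhav=\opt$ I take $U=\{u\}$ with $u(c_i)=0$, $u(c_{3-i})=1$, and $u\equiv 0$ on all other candidates. Because $|U|=1$, \autoref{def:utility} gives $\util_{U}=\egal_{U}=\candegal_{U}$, so for every $\eval\in\{\util,\egal,\candegal\}$ we have $\eval_{U}(\{c_{3-i}\})=1>0=\eval_{U}(\{c_i\})$; thus every optimistic rule must output $\{c_{3-i}\}$, and $F$'s output $\{c_i\}$ agrees with no rule in $\calF^{\eval}_{\opt}$. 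For $\bhav=\pess$ I simply swap the two non-zero values so that $\{c_{3-i}\}$ becomes the strict minimizer, again forcing every pessimistic rule away from $\{c_i\}$.

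Since $F\in\Flex$ was arbitrary, this establishes that no lexicographic rule can \fixsimulate{\Cconf,\Cpend,k} any of the six families $\calF^{\eval}_{\bhav}$ with $\eval\in\{\util,\candegal,\egal\}$ and $\bhav\in\{\opt,\pess\}$. There is no genuine technical obstacle here; the only mild point is to confirm that a single one-manipulator utility function with a single non-zero entry simultaneously separates all three evaluation variants, which is immediate from \autoref{def:utility} when $|U|=1$. The conceptual content is simply that a rule in $\Flex$ is a function of $C$ alone, while the rules in $\calF^{\eval}_{\bhav}$ genuinely depend on $U$.
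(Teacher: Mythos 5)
Your proposal is correct and takes essentially the same approach as the paper: a two-candidate pending set with $k=1$, $\Cconf=\emptyset$, and a single $0/1$ utility function chosen adversarially against the fixed lexicographic order (the paper phrases it as two utility profiles that no single order can serve, which is the same counterexample). Your observation that $|U|=1$ collapses all three evaluation variants is also implicit in the paper's proof, so nothing further is needed.
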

\begin{proof}
 Suppose $k=1$, $\Cconf=\emptyset$, and $\Cpend=\{\candBa, \candBb\}$;
 that is, we are going to select either \candBa{} or \candBb{} who are tied.
 Let us fix a family $U=\{u_1\}$ of utility functions such that $u_1(\candBa)=1$ and
 $u_1(\candBb)=0$.
 For the family~$U$ of utility functions clearly $\calF^{\eval}_{\opt}$ selects
 candidate \candBa{}.
 Now, consider a family $U'=\{u'_1\}$ of utility functions where $u'_1$ assigns
 utility one to candidate \candBb{} and zero otherwise.
 For this family, $\calF^{\eval}_{\opt}$ selects candidate \candBb.  This means
 that we cannot find a~\fixsimulator{\Cconf, \Cpend, k}~$F$ from family $\Flex$
 of tie-breaking rules because in the first case $F$ would have to
 choose~\candBa{} and in the second case \candBb{} would have to be chosen.
 This is impossible using a single preference order over~$\{\candBa, \candBb\}$.
 Similar families of functions (obtained by exchanging each one with zero and
 vice versa) yield a proof for~$\calF^{\eval}_{\pess}$ as well.
\end{proof}

Next, we show that for some cases it is sufficient to fix just the
utility functions in order to simulate optimistic or pessimistic tie-breaking rules
(see~\autoref{prop:lex-simulation-util}).
For other cases, however, one \emph{has} to fix all: confirmed candidates, pending candidates,
utility functions, and the size of an \elgroup{} (see \autoref{prop:lex-simulation-egal}).

\begin{proposition}\label{prop:lex-simulation-util}
 Let $C$~be a set of candidates, $U$~be a family of utility functions, $\bhav
 \in \{\opt,\pess\}$, and $\eval \in \{\util,\candegal\}$. Let~$|C|=m$ and
 $|U|=r$. Then the family of lexicographic tie-breaking rules~\Flex{} can
 \fixsimulate{U}~$\calF^{\eval}_{\bhav}$, and a~\fixsimulator{U} $F \in \Flex$ can be
 found in $O(m \cdot (\manipnumber + \log m))$~time.
\end{proposition}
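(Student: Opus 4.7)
The plan is to invoke \autoref{obs:util-functions} to collapse the family $U$ of utility functions into a single non-zero utility function~$u'$, and then exhibit a lexicographic order over~$C$ that, once fixed, produces an optimal (or pessimal) \elgroup{} under every tie-breaking perspective sharing the given~$U$. The construction of $u'$ differs between the two evaluation functions: for $\eval = \util$ we set $u'(c) := \sum_{u \in U} u(c)$, and for $\eval = \candegal$ we set $u'(c) := \min_{u \in U} u(c)$. In both cases the evaluation of any \elgroup~$S$ decomposes as $\eval(S) = \sum_{c \in S} u'(c)$, so optimizing $\eval$ over all \elgroup{}s with $\Cconf \subseteq S \subseteq \Cconf \cup \Cpend$ reduces to picking the $k - |\Cconf|$ candidates in~$\Cpend$ that maximize (for $\bhav = \opt$) or minimize (for $\bhav = \pess$) the cumulative $u'$-value.

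Given that reduction, I~would define the required lexicographic rule $F \in \Flex$ as follows. First I~would compute $u'(c)$ for every $c \in C$. Then, for $\bhav = \opt$, I~would set $\lexpref$ to be any total order that ranks candidates by non-increasing $u'$-value (breaking remaining ties arbitrarily, e.g., by candidate index); for $\bhav = \pess$, I~would sort by non-decreasing $u'$-value. The rule $F$ associated with $\lexpref$ selects $\Cconf$ plus the top $k - |\Cconf|$ candidates from~$\Cpend$ in this order. Because the order depends only on~$u'$ (and hence only on~$U$), the same rule~$F$ works for every tie-breaking perspective in~$\mathbb{E}_C(U)$.

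Next I~would verify correctness: for any tie-breaking perspective with confirmed set~$\Cconf$, pending set~$\Cpend$, and \elgroup{} size~$k$, the \elgroup~$S = F(\Cconf,\Cpend,k)$ satisfies $\Cconf \subseteq S \subseteq \Cconf \cup \Cpend$, and $\sum_{c \in S \cap \Cpend} u'(c)$ is extremal (maximal for $\opt$, minimal for $\pess$) among all size-$(k - |\Cconf|)$ subsets of~$\Cpend$, as this is exactly what sorting by~$u'$ guarantees. Therefore there exists an $F' \in \calF^{\eval}_{\bhav}$ producing the same \elgroup~$S$, witnessing $\{U\}$-simulation.

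For the time bound, computing~$u'$ takes $O(m \cdot r)$ time (summing or minimizing $r$~values per candidate), and sorting~$C$ by $u'$ takes $O(m \log m)$ time, giving the claimed $O(m \cdot (r + \log m))$ bound. There is no genuine obstacle here beyond correctly invoking \autoref{obs:util-functions} and confirming that a single fixed order suffices uniformly across all perspectives sharing~$U$; the case $\eval = \egal$ is excluded precisely because the minimum-over-manipulators structure does not decompose candidate-wise, which is what \autoref{prop:lex-simulation-egal} will exploit.
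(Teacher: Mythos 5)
Your proposal is correct and follows essentially the same route as the paper: collapse $U$ to a single non-zero utility function $u'$ via \autoref{obs:util-functions}, sort the candidates by $u'$ (descending for $\opt$, ascending for $\pess$), and take the resulting order as the lexicographic tie-breaking rule, with the same $O(m\cdot(\manipnumber+\log m))$ accounting. Your explicit remark that $\eval(S)$ decomposes candidate-wise as $\sum_{c\in S}u'(c)$ (and that this is precisely what fails for $\egal$) is a slightly more detailed justification of the correctness step than the paper gives, but it is the same argument.
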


\begin{proof}
 Recall from \autoref{obs:util-functions} that if $\eval \in \{\util,\candegal\}$,
 then there is always a set of utility functions with just one non-zero valued
 utility function~$u'$ that is equivalent to~$U$.
 Hence, we compute such a function~$u'$ in $O(m \cdot \manipnumber)$ time as follows:
 In the utilitarian case, function $u'$ assigns every candidate the sum of
 utilities the manipulators give to the candidate.
 Considering the candidate-wise egalitarian evaluation, function $u'$ assigns every
 candidate the minimum utility value among utilities given to the candidate over all
 manipulators.
 We say an order~$\lexpref$ of the~candidates is \emph{consistent}
 with some utility function~$u$ if $c \lexpref c'$ implies $u(c) \ge u(c')$ for
 optimistic tie-breaking and $c \lexpref c'$ implies $u(c) \le u(c')$ for
 pessimistic tie-breaking.
 Any lexicographic tie-breaking rule defined by an
 order~$\lexpref$ that is consistent with~the utility function~$u'$ simulates
 $\calF^{\eval}_{\bhav}$.
 We compute a~consistent order by sorting the
 candidates according to~$u'$ in $O(m \cdot \log m)$ time.         
\end{proof}

\autoref{prop:lex-simulation-util} describes a strong feature of optimistic
utilitarian and candidate-wise egalitarian tie-breaking and their pessimistic
variants. Intuitively, the proposition says that for these tie-breaking
mechanisms one can compute a respective linear order of candidates. Then one can
forget all the details of the initial tie-breaking mechanism and use the~order
to determine winners. The~order can be computed even without knowing the details
of an election.
Unfortunately, the simulation of pessimistic and optimistic egalitarian
tie-breaking turns out to be more complicated.

\begin{proposition}\label{prop:lex-simulation-egal}
 Let $C$~be a set of candidates, $U$~be a family of utility functions, \Cconf{}
 be a set of confirmed candidates, \Cpend{} be a set of pending candidates, and
 $k$ be a size of an \elgroup{}. For each $\mathcal{P} \subseteq \{\Cconf,
 \Cpend, U, k\}$, $0<|\mathcal{P}|<4$, the lexicographic tie-breaking family of
 rules does not \fixsetsimulate{\mathcal{P}}~$\calF^{\egal}_{\bhav}$ assuming
 $\bhav \in \{\opt,\pess\}$.
\end{proposition}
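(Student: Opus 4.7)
The plan is to reduce the claim to three maximal subcases via a monotonicity observation about simulation, and then defeat each remaining subcase by constructing small utility families whose egalitarian optima (or pessima) violate the rigid nested-prefix structure of every lexicographic rule.

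First, I would observe that if $\Flex$ can \fixsetsimulate{\mathcal{P}}~$\calF^{\egal}_{\bhav}$, then it can also \fixsetsimulate{\mathcal{P}'}~$\calF^{\egal}_{\bhav}$ for every $\mathcal{P}' \supseteq \mathcal{P}$, because enlarging~$\mathcal{P}$ only shrinks the set $\mathbb{E}_C(\mathcal{P})$ of admissible perspectives. Hence \autoref{obs:lex-simulation-withoutU} already settles every $\mathcal{P} \subseteq \{\Cconf, \Cpend, k\}$, leaving only the three maximal subcases containing~$U$---namely $\{U, \Cconf, \Cpend\}$, $\{U, \Cpend, k\}$, and $\{U, \Cconf, k\}$.

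The common engine used for the first two of these is that a lexicographic rule restricted to a candidate set fixes a single linear order $x_1 \lexpref x_2 \lexpref \cdots$, so its top-$j$ selection is always the prefix $\{x_1, \ldots, x_j\}$. For $\mathcal{P} = \{U, \Cconf, \Cpend\}$ and $\mathcal{P} = \{U, \Cpend, k\}$, the effective selection size in $\Cpend$ takes multiple values---either by varying $k \in \{1, 2, 3\}$ directly, or by varying $\Cconf$ over $\emptyset, \{e_1\}, \{e_1, e_2\}$ with zero-utility dummy candidates $e_1, e_2$ outside~$\Cpend$ absorbing the excess slots. It thus suffices to exhibit utilities on $\{a, b, c, d\}$ whose egalitarian optima at sizes $1$, $2$, and $3$ (for $\opt$) or egalitarian pessima (for $\pess$) do not share a common prefix chain. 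For $\bhav = \opt$, the choice $u_1 = (3, 1, 0, 2)$, $u_2 = (0, 2, 3, 1)$ works: the optima are $\{b\}/\{d\}$, $\{a, c\}/\{b, d\}$, and $\{a, b, c\}/\{a, c, d\}$, and the top-$1$ constraint forces $x_1 \in \{b, d\}$, whence top-$2$ forces $\{x_1, x_2\} = \{b, d\}$, but then no top-$3$ option contains both $b$ and $d$. For $\bhav = \pess$, the choice $u_1 = (0, 3, 3, 3)$, $u_2 = (M, 1, 1, 2)$ with large~$M$ makes the pessima of sizes~$1$ and~$2$ uniquely $\{a\}$ and $\{b, c\}$, already incompatible with any prefix chain.

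For $\mathcal{P} = \{U, \Cconf, k\}$ the selection size stays fixed; I would instead present the simulator with two different pending sets forcing incompatible pairwise orderings. With the optimistic family above at $k = 2$, $\Cpend = \{a, b, c\}$ has unique opt-optimal pair $\{a, c\}$ (forcing $a \lexpref b$), while $\Cpend = \{a, b, d\}$ has unique opt-optimal pair $\{b, d\}$ (forcing $b \lexpref a$)---a contradiction. A closely analogous family such as $u_1 = (10, 0, 5, 0)$, $u_2 = (0, 10, 0, 5)$ realizes the same contradiction on the pessimistic side for the same two pending sets. The main obstacle is calibrating the utility values so that the desired sets really are the unique optima or pessima; once such families are in hand, verification reduces to a finite enumeration over subsets of size at most three.
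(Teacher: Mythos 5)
Your proposal is correct and follows essentially the same route as the paper's proof: reduce, via the monotonicity of $\mathcal{P}$-simulation under enlarging $\mathcal{P}$, to the three maximal size-three subsets containing $U$ (with \autoref{obs:lex-simulation-withoutU} covering every $\mathcal{P}\subseteq\{\Cconf,\Cpend,k\}$), and then defeat each by an explicit small utility family whose egalitarian optima or pessima at varying selection sizes (or varying pending sets) cannot all be prefixes of a single lexicographic order. Your concrete families check out---I verified the claimed optima and pessima---so the only differences from the paper are the particular numbers, your use of separate families for the optimistic and pessimistic variants, and your welcome explicitness about the monotonicity step that the paper leaves implicit.
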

\begin{proof}
 From \autoref{obs:lex-simulation-withoutU} we already know that the family of
 lexicographic tie-breaking rules cannot \fixsimulate{\Cconf, \Cpend, k}
 the family of egalitarian pessimistic tie-breaking rules or the family of
 egalitarian optimistic tie-breaking rules.

 Next, we build one counterexample for each of the remaining size-three subsets
 of~$\{\Cconf, \Cpend, U, k\}$ to show our theorem.
 To this end, let us fix a set of candidates~$C=\{b_1, b_2, m_1, m_2, o_1, o_2\}$
 (compatible with our running example) and a family~$U=\{u_1, u_2\}$ of utility
 functions as depicted in the table below.
 \begin{center}
  \begin{tabular}{c|c c c c c c}
   $u(\cdot)$ & $b_1$ & $b_2$ & $m_1$ & $m_2$ & $o_1$ & $o_2$ \\ \hline
   $u_1$ & $10$ & $5$ & $4$ & $0$ & $0$ & $0$\\ 
   $u_2$ & $1$ & $2$ & $5$ & $7$ & $0$ & $0$
  \end{tabular}
 \end{center}

 First, we prove that the family \Flex{} cannot \fixsimulate{\Cconf, \Cpend,
 U}~$\calF^{\egal}_{\bhav}$ for $\bhav \in \{\opt,
 \pess\}$.  Let us fix $\Cconf=\emptyset$, $\Cpend=C \setminus \{o_1, o_2\}$. We
 consider the optimistic variant of egalitarian tie-breaking for~$k = 1$, so we are
 searching for a \namedelgroup{1}. Looking at the values~of~$U$,
 we see that candidate~$m_1$ gives the best possible egalitarian evaluation
 value which is four. This means that a~\fixsimulator{\Cconf, \Cpend, U}~$F \in
 \Flex$ has to use an order where $m_1$ precedes both $b_1$ and $m_2$. However,
 it turns out that if we set $k=2$, then the best \namedelgroup{2} consists
 exactly of candidates $b_1$ and $m_2$.  This leads to a contradiction because
 now candidates $b_1$ and $m_2$ should precede $m_1$ in $F$'s lexicographic
 order. Consequently, family~$\Flex$ does not \fixsimulate{\Cconf, \Cpend,
 U}~$\calF^{\egal}_{\opt}$.  Using the same values of utility functions and the
 same sequence of the values of $k$ we get a proof for the pessimistic variant
 of egalitarian evaluation.

 Second, we prove that the family \Flex{} cannot \fixsimulate{\Cpend, k,
 U}~$\calF^{\egal}_{\bhav}$ for $\bhav \in
 \{\opt, \pess\}$. This time, we fix $\Cpend=C\setminus\{o_1, o_2\}$, $k=2$. We
 construct the first case by setting $\Cconf=\{o_1\}$. Using the fact that in
 both functions candidate $o_1$ has utility zero, we choose exactly the same
 candidate as in the proof of \fixsimulation{\Cconf, \Cpend, U} for the case
 $k=1$; that is, for the optimistic variant, the winning \namedelgroup{2}
 is~$m_1$ and~$o_1$. Consequently, $m_1$ precedes~$b_1$ and~$m_2$ in the
 potential \fixsimulator{\Cpend, k, U}'s lexicographic order. Towards a
 contradiction, we set $\Cconf=\emptyset$. The situation is exactly the same as
 in the proof of the \fixsimulation{\Cconf, \Cpend, U} case. Now, the winning
 \namedelgroup{2} consists of $b_1$ and $m_2$ which ends the proof for the
 optimistic case. By almost the same argument, the result holds for the
 pessimistic variant.

 Finally, we prove that the family \Flex{} cannot \fixsimulate{\Cconf, k,
 U}~$\calF^{\egal}_{\bhav}$ for $\bhav \in
 \{\opt, \pess\}$.  We fix $\Cconf=\emptyset$, $k=2$. For the first case we pick
 $\Cpend=\{b_2, m_1, m_2\}$. The best egalitarian evaluation happens for the
 \namedelgroup{2} consisting of $b_2$ and $m_1$. This imposes that, in the
 potential \fixsimulator{\Cconf, k, U}'s order, $b_2$ and~$m_1$ precede the
 remaining candidates (in particular, $m_1$ precedes $m_2$).  However, for
 $\Cpend=C$ the best \namedelgroup{2} changes to that consisting of $b_1$ and
 $m_2$ which gives a contradiction ($m_2$~precedes $m_1$).  As in the previous
 cases, the same argument provides a proof for the pessimistic variant.
\end{proof}

\autoref{prop:lex-simulation-egal} implies that pessimistic and optimistic
egalitarian tie-breaking cannot be simulated without having full knowledge about
an election. In terms of computational complexity, however, finding winners for
pessimistic egalitarian tie-breaking remains tractable whereas the same task for
optimistic egalitarian tie-breaking is intractable. We devote the next section
to show this dichotomy as well as to establish computational hardness of
computing winners for the other introduced tie-breaking rules.

\section{Complexity of Tie-Breaking}
\label{sec:compl:tie-breaking}

It is natural to ask whether the tie-breaking rules proposed
in~\autoref{sec:tie-breaking} are practical in terms of their computational
complexity. If not, then there is little hope for coalitional manipulation
because tie-breaking might be an inevitable subtask to be solved by the
manipulators. Indeed, manipulators might not be ``powerful'' enough to secure
victory of their desired \elgroup{} completely avoding tie-breaking.

Clearly, we can apply every lexicographic tie-breaking rule that is
defined through some predefined order of the candidates in linear time.
Hence, we focus on the rules that model optimistic or pessimistic manipulators.
To this end, we analyze the following computational problem.

\probDefLong[$\eval \in \{\util, \egal, \candegal\}$, $\bhav \in \{\opt,\pess\}$]
{\TIElong{\eval}{\bhav}}{\TIE{\eval}{\bhav}}
{A set of candidates~$C$ partitioned into
 a set~$\Cpend$ of pending candidates and a set~$\Cconf$ of confirmed
 candidates, the size~$k$ of the \elitegroup{} such that $|\Cconf|<k<|C|$, a
 family of manipulators' utility functions $U=\{u_1, u_2, \dots,
 u_\manipnumber\}$ where $u_i \colon C \rightarrow \naturals$, and a
 non-negative, integral evaluation threshold~$q$.}
{Is there a size-$k$ set $S \subseteq C$ such that $S$~is selected according
 to $\calF^{\eval}_{\bhav}$, $\Cconf \subseteq S$, and $\eval(S) \ge q$?}

Naturally, we may assume that the number of candidates and the number of utility
functions are polynomially bounded in the size of the input. However, both the
evaluation threshold and the utility function values are encoded in binary.

Note that an analogous problem has not been considered for single-winner
elections. The reason behind this is that, for single-winner elections,
optimistic and pessimistic tie-breaking rules can be easily simulated by
lexicographic tie-breaking rules. To obtain them, it is sufficient to order the
candidates with respect to their value to manipulators, computed separately for
every candidate. However, one cannot simply apply this approach for \elgroup{}s,
because there might be exponentially many different \elgroup{}s to consider.
Even if this exponential blow-up were acceptable, it would be unclear how to
derive an order of candidates from the computed values of \elgroup{}s. Yet,
using a different technique, we can simulate tie-breaking in multiwinner
elections with a lexicographic tie-breaking rule for several variants of
evaluation.

\subsection{Utilitarian and Candidate-Wise Egalitarian: Tie-Breaking Is Easy}
As a warm-up, we observe that tie-breaking can be applied and evaluated
efficiently if the \kelgroup{}s are evaluated according to the utilitarian or
candidate-wise egalitarian variant. The corresponding result follows almost
directly from \autoref{prop:lex-simulation-util}.

\begin{corollary}\label{cor:TBeasy}
 Let $m$ denote the number of candidates and $\manipnumber$ denote the number of
 manipulators. Then one can solve \TIElong{\eval}{\bhav} in $O(m \cdot
 (\manipnumber + \log m))$ time for $\eval \in \{\util,\candegal\}$, $\bhav \in
 \{\opt,\pess\}$.
\end{corollary}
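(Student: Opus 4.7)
The plan is to obtain the algorithm as a near-immediate consequence of \autoref{prop:lex-simulation-util}. First, I would invoke the construction in that proposition to compute, in $O(m \cdot (\manipnumber + \log m))$ time, the single-contraction utility function $u'$ (the per-candidate sum of the $u_i$ for $\eval=\util$, or the per-candidate minimum for $\eval=\candegal$) together with a linear order $\lexpref$ of~$C$ that is consistent with~$u'$ in the direction required by $\bhav$ (descending for $\opt$, ascending for $\pess$). The proposition guarantees that any lexicographic rule defined by this~$\lexpref$ produces exactly the same \kelgroup{} as~$\calF^{\eval}_{\bhav}$.

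Second, I would construct the candidate output \kelgroup{} by taking $S := \Cconf \cup T$, where $T$ consists of the top $k-|\Cconf|$ elements of $\Cpend$ according to $\lexpref$. Since $\lexpref$ is already available (and computing it already involved a sort), extracting~$T$ costs only $O(m)$ additional time. By construction, $\Cconf \subseteq S \subseteq \Cconf \cup \Cpend$, so $S$~is a legitimate output of $\calF^{\eval}_{\bhav}$.

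Third, I would verify the threshold condition. Because $u'$ was designed so that $\eval(S) = \sum_{c \in S} u'(c)$ for both evaluation variants under consideration (\autoref{obs:util-functions}), the value $\eval(S)$ can be computed in $O(k) \subseteq O(m)$ time after $u'$ is known. The algorithm answers \emph{yes} exactly if $\eval(S) \ge q$. Crucially, every \kelgroup{} that $\calF^{\eval}_{\bhav}$ may return has the same $\eval$-value (they all attain the same optimum over $\Cconf \cup \Cpend$), so this single check is conclusive regardless of which optimal set the rule picks.

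The only delicate point is observing that checking ``there exists a selected $S$ with $\eval(S)\ge q$'' reduces to checking a single representative selected $S$, which follows from the fact that all outputs of $\calF^{\eval}_{\bhav}$ are $\eval$-equivalent by definition. Everything else is bookkeeping, and the total running time is dominated by the order construction at $O(m \cdot (\manipnumber + \log m))$, matching the claimed bound.
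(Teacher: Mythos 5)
Your proposal is correct and follows essentially the same route as the paper: compute a lexicographic \fixsimulator{U} via \autoref{prop:lex-simulation-util}, apply it to obtain the selected \kelgroup{}, and evaluate that single representative against~$q$. Your explicit remark that all outputs of $\calF^{\eval}_{\bhav}$ are $\eval$-equivalent (so one check suffices) is a nice touch the paper leaves implicit, but it does not change the argument.
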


\begin{proof}
 The algorithm works in two steps. First, compute a~lexicographic tie-breaking
 rule~$\Flex$ that simulates $\calF^{\eval}_{\bhav}$ in $O(m \cdot (\manipnumber
 + \log m))$ time as described in \autoref{prop:lex-simulation-util}. Second,
 apply tie-breaking rule $\Flex$, and evaluate the resulting \kelgroup{} in $O(k
 \cdot \manipnumber)$ time. The running time of applying a lexicographic
 tie-breaking rule is linear with respect to the input length (see
 \autoref{subsec:limits}).
\end{proof}

\subsection{Egalitarian: Being Optimistic Is Hard}
\label{sec:hardTB}

In this subsection, we consider the optimistic and pessimistic
tie-breaking rules when applied for searching a \kelgroup{} evaluated according
to the egalitarian variant. First, we show that applying and evaluating
egalitarian tie-breaking is computationally easy for pessimistic manipulators
but computationally intractable for optimistic manipulators even if the size of
the \elgroup{} is small. Being pessimistic, the main idea is to ``guess'' the
manipulator that is least satisfied and select the candidates appropriately. We
show the computational worst-case hardness of the optimistic case via a
reduction from the \wtwo{}-complete \textsc{Set Cover} problem parameterized by
solution size~\citep{DF13}.

\newcommand{\SCuniverse}{\ensuremath{X}}
\newcommand{\SCunivelement}{\ensuremath{x}}

\begin{theorem}\label{thm:egalTB}
 Let $m$ denote the number of candidates, $\manipnumber$ denote the number of
 manipulators, $q$ denote the evaluation threshold, and $k$ denote the size of
 an \elgroup{}. Then one can solve \TIElong{\egal}{\pess} in $O(\manipnumber
 \cdot m \log m)$ time, but \TIElong{\egal}{\opt} is \np-hard and \wtwo-hard
 when parameterized by~$k$ even if $q=1$ and every manipulator only gives either
 utility one or zero to each candidate.
\end{theorem}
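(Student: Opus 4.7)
The plan has three parts matching the three claims of the theorem.

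For the tractability of \TIE{\egal}{\pess}, I would exploit that $\min_{S}\min_{u\in U}\sum_{c\in S} u(c) = \min_{u\in U}\min_{S}\sum_{c\in S} u(c)$, so finding the pessimistic choice reduces to $\manipnumber$ independent single-utility minimizations, one per manipulator. For each~$u_i$, the subtask is to pick the $k-|\Cconf|$ candidates from~$\Cpend$ with smallest $u_i$-values, which amounts to sorting~$\Cpend$ by~$u_i$ in $O(m\log m)$ time; one then adds the mandatory contribution $\sum_{c\in\Cconf} u_i(c)$. Taking the minimum across all~$\manipnumber$ manipulators yields $\min_{S}\egal(S)$, which I compare against~$q$; the total running time is the claimed $O(\manipnumber\cdot m\log m)$.

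For the two hardness claims about \TIE{\egal}{\opt}, I would give a single reduction from \textsc{Set Cover} parameterized by the cover size, which is both \np-hard and \wtwo-hard. Given an instance with universe~$X$, set family~$\mathcal{F}$, and budget~$k$, I would create one candidate~$c_F$ per set~$F\in\mathcal{F}$ and one manipulator~$u_x$ per element~$x\in X$, defining $u_x(c_F)=1$ if $x\in F$ and zero otherwise. I would set $\Cconf=\emptyset$, $\Cpend=\{c_F : F\in\mathcal{F}\}$, keep the cover size~$k$ unchanged, and put $q=1$. All utilities then lie in~$\{0,1\}$, the parameter~$k$ is preserved, and the construction is clearly polynomial.

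Correctness would hinge on the observation that the optimistic rule returns some \kelgroup{}~$S$ maximizing~$\egal(S)$, so the constructed instance is a yes-instance iff there exists a size-$k$ candidate set~$S$ with $\egal(S)\ge 1$, that is, one in which each manipulator~$u_x$ receives total utility at least one. By construction, this is the same as saying that the chosen $k$~sets cover every element of~$X$, which is precisely a \textsc{Set Cover} yes-instance. The only technicality to verify is the problem's requirement $|\Cconf|<k<|C|$, which is automatic whenever $0<k<|\mathcal{F}|$, a condition that may be assumed for nontrivial \textsc{Set Cover} inputs. The main conceptual obstacle I anticipate is convincing the reader that the min--min commutation really works for the pessimistic case but that the analogous max--min commutation fails for the optimistic one, which is precisely what separates the two tractability statuses and makes the reduction necessary.
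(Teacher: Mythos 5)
Your proposal is correct and takes essentially the same approach as the paper: the pessimistic algorithm is the paper's ``guess the least-satisfied manipulator'' loop, merely rephrased via the commutation $\min_S\min_u = \min_u\min_S$, and the hardness argument is the identical \textsc{Set Cover} reduction (candidates as sets, manipulators as elements, $0/1$ utilities, $q=1$, parameter $k$ preserved) yielding both \np-hardness and \wtwo-hardness. No gaps.
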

\begin{proof}
 For the pessimistic case, it is sufficient to ``guess'' the least satisfied
 manipulator~$x$ by iterating through \manipnumber{} possibilities. Then, select
 $k-|\Cconf|$~pending candidates with the smallest total utility for this
 manipulator in $O(m \log m)$ time. Finally, comparing the \kelgroup{} with the
 worst minimum satisfaction over all manipulators to the lower bound $q$ on
 satisfaction level given in the input solves the problem.
 
 We prove the hardness for the optimistic case reducing from the $\wtwo$-hard
 \textsc{Set Cover} problem which, given a collection $\mathcal{S}=\{S_1, S_2,
 \dots, S_m\}$ of subsets of universe
 \namedorderedsetof{\SCuniverse}{\SCunivelement}{n} and
 an integer $h$, asks whether there exists a family $\mathcal{S'} \subseteq
 \mathcal{S}$ of size at most~$h$ such that $\bigcup_{S \in \mathcal{S'}} S =
 \SCuniverse$. Let us fix an instance $I=(\SCuniverse, \mathcal{S}, h)$ of
 \textsc{Set Cover}. To construct an \TIElong{\egal}{\opt} instance, we
 introduce pending candidates $\Cpend=\{c_1, c_2, \ldots, c_m\}$ representing
 subsets in $\mathcal{S}$ and manipulators $u_1, u_2, \ldots, u_n$ representing
 elements of the universe. Note that there are no confirmed and rejected
 candidates. Each manipulator $u_i$ gives utility one to candidate~$c_j$ if
 set~$S_j$ contains element~$\SCunivelement_i$ and zero otherwise. We set the
 \elitegroup{} size $k:=h$ and the threshold~$q$ to be $1$.
 
 Observe that if there is a size-$k$ subset $S \subseteq \Cpend$ such that
 ${\min_{i \in [n]} \sum_{s \in S} u_i(s) \geq 1}$, then there exists a family
 $\mathcal{S}'$---consisting of the sets represented by candidates in~$S$---such
 that each element of the universe belongs to the set $\bigcup_{S \in
 \mathcal{S'}} S$. On the contrary, if we cannot pick a group of candidates of
 size $k$ for which every manipulator's utility is at least one, then instance
 $I$ is a `no' instance. This follows from the fact that for each size-$k$
 subset $S \subseteq \Cpend$ there exists at least one manipulator $u^*$ for
 whom ${\sum_{s \in S} u^*(s) = 0}$.  This translates to the claim that there
 exists no size-$h$ subset $\mathcal{S}' \subseteq \mathcal{S}$ such that all
 elements in~$\SCuniverse$ belong to the union of the sets in $\mathcal{S}'$.
 
 Since \textsc{Set Cover} is $\np$-hard and $\wtwo$-hard with
 respect to parameter~$h$, we obtain that our problem is also $\np$-hard and
 $\wtwo$-hard when parameterized by the size~$k$ of an \elitegroup{}.
\end{proof}

\newcommand{\MCC}{\textsc{Multicolored Clique}}

Inspecting the \wtwo{}-hardness proof of \autoref{thm:egalTB}, we learn that
a small \elgroup{} size (alone) does not make \TIElong{\egal}{\opt}
computationally tractable even for very simple utility functions. Next, using a
parameterized reduction from the \wone{}-complete~\MCC{} problem~\citep{FHRV09},
we show that there is still no hope for fixed-parameter tractability (under
standard assumptions) even for the combined parameter ``number of
manipulators and \elgroup{} size''; intuitively, this parameter covers
situations where few manipulators are going to influence an election for a small
\elgroup{}.

\begin{theorem}\label{thm:egalTBcomb}
 Let $k$ denote the size of an \elgroup{} and $\manipnumber$ denote the number of
 manipulators. Then, parameterized by~$\manipnumber+k$, \TIElong{\egal}{\opt} is
 \wone{}-hard.
\end{theorem}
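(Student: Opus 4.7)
The plan is to give a parameterized reduction from \MCC{} (which is $\wone$-hard when parameterized by the clique size~$k^\star$) to \TIElong{\egal}{\opt} in which both the number of manipulators and the size of the \elgroup{} are $O((k^\star)^2)$, so that the combined parameter $\manipnumber+k$ of the target instance is bounded by a function of~$k^\star$. Given a graph $G$ with color classes $V_1,\dots,V_{k^\star}$, I would introduce one pending candidate~$c_v$ for every vertex~$v$, one pending candidate~$d_e$ for every edge~$e$, plus a single zero-utility dummy to ensure $|C|>k$; set $\Cconf=\emptyset$; fix the \elgroup{} size $k:=k^\star+\binom{k^\star}{2}$; pick $M:=|V(G)|+1$; and use the threshold $q:=M$.

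The manipulators split into two families. The \emph{presence} manipulators enforce the combinatorial shape of the \elgroup{}: for each color~$i$ a manipulator~$V_i$ that gives utility~$M$ to each $c_v$ with $v\in V_i$ (and $0$ elsewhere), and for each pair $i<j$ a manipulator~$E_{ij}$ that gives utility~$M$ to each $d_e$ with $e$ between $V_i$ and $V_j$ (and $0$ elsewhere). Since the \elgroup{} has exactly $k^\star+\binom{k^\star}{2}$ slots, these thresholds combined with a pigeonhole argument force every feasible~$S$ to contain exactly one vertex candidate per color and exactly one edge candidate per color pair (and, in particular, never the dummy). The \emph{incidence} manipulators then force those edges to actually connect the chosen vertices. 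For each ordered pair $(i,j)$ with $i\neq j$ I~fix an injection $\mathrm{id}_i\colon V_i\to\{1,\dots,|V_i|\}$ and introduce two manipulators,~$u_{ij}^+$ and~$u_{ij}^-$, with complementary values: $u_{ij}^+(c_v)=M-\mathrm{id}_i(v)$ for $v\in V_i$ and $u_{ij}^+(d_e)=\mathrm{id}_i(v)$ when $v$~is the $V_i$-endpoint of $e\in E(V_i,V_j)$, while $u_{ij}^-$~swaps these two values (both assign $0$ elsewhere). On a feasible~$S$ with $v^*$ the chosen $V_i$-vertex and $v'$ the $V_i$-endpoint of the chosen $E(V_i,V_j)$-edge, the utility of~$u_{ij}^+$ equals $M+\mathrm{id}_i(v')-\mathrm{id}_i(v^*)$, meeting the threshold iff $\mathrm{id}_i(v')\ge\mathrm{id}_i(v^*)$, and $u_{ij}^-$~enforces the opposite inequality; together they force $v^*=v'$. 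Doing this for both orderings of each color pair forces each chosen edge to be incident to both corresponding chosen vertices, i.e.\ to connect them.

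Correctness then has two routine directions: a multicolored $k^\star$-clique $\{v_1,\dots,v_{k^\star}\}$ together with the $\binom{k^\star}{2}$ induced edges yields an \elgroup{} on which every manipulator scores exactly~$M$; and any~$S$ with $\egal(S)\ge M$ corresponds, by the presence and incidence arguments above, to a multicolored clique. Counting manipulators gives $k^\star+\binom{k^\star}{2}+2k^\star(k^\star-1)=O((k^\star)^2)$, so $\manipnumber+k=O((k^\star)^2)$, as required for an FPT~reduction.

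The main obstacle is encoding the genuinely conjunctive constraint ``the chosen edge of $E(V_i,V_j)$ is incident to the chosen $V_i$-vertex'' using only per-candidate additive utilities. My plan resolves this via the complementary-label trick, splitting the equality $\mathrm{id}_i(v^*)=\mathrm{id}_i(v')$ into two one-sided linear inequalities enforced separately by~$u_{ij}^+$ and~$u_{ij}^-$; choosing $M>\max_i|V_i|$ keeps all utilities non-negative and cleanly separates the ``good'' and ``bad'' sides of the threshold.
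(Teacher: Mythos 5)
Your proposal is correct and follows essentially the same route as the paper's proof: a parameterized reduction from \MCC{} with one candidate per vertex and per edge, selector manipulators forcing exactly one vertex per color and one edge per color pair, and paired ``verification'' manipulators whose complementary label-based utilities ($\mathrm{id}_i(v)$ versus $M-\mathrm{id}_i(v)$) turn the incidence constraint into two one-sided inequalities. The only differences are cosmetic: the paper enforces the selection constraints via a confirmed candidate~$c^*$ carrying utility~$y$ plus unit utilities on class members (threshold $y+1$), whereas you put a large utility~$M$ directly on each class member; both yield the same pigeonhole argument and the same parameter bound.
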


\begin{proof}
 We describe a parameterized reduction from the \wone{}-hard \MCC{}
 problem~\citep{FHRV09}. In this problem, given an undirected graph~$G=(V,E)$, a
 non-negative integer~$h$, and a vertex coloring~$\phi \colon V\to \{1, 2,
 \ldots, h\}$, we ask whether graph~$G$ admits a colorful $h$-clique, that is, a
 size-$h$ vertex subset~$Q\subseteq V$ such that the vertices in $Q$ are
 pairwise adjacent and have pairwise distinct colors. Without loss of
 generality, we assume that the number of vertices of each color is the same; to
 be referred as~$y$ in the following. Let $(G,\phi)$, $G=(V,E)$, be a \MCC{}
 instance. Let $V(i) = \{v_1^{i}, v_2^{i}, \ldots, v_y^{i}\}$ denote the set of
 vertices of color~$i \in [h]$, and let $E(i,j) = \{e_1^{i,j}, e_2^{i,j},
 \ldots, e_{|E(i,j)|}^{i,j}\}$, defined for $i, j \in [h]$, $i<j$, denote the
 set of edges that connect a vertex of color~$i$ to a vertex of color~$j$.
 
 \mypar{Candidates.}
 We create one confirmed candidate~$c^*$ and~$|V|+|E|$ pending candidates. More
 precisely: for each $\ell \in [y]$, we create one \emph{vertex
 candidate}~$a_\ell^{i}$ for each vertex~$v_\ell^{i} \in V(i)$, $i \in [h]$ and
 for each $i,j \in [h]$ such that $i<j$ we create one \emph{edge
 candidate}~$b_t^{i,j}$ for each edge ~$e_t^{i,j} \in E(i,j)$, $t \in [E(i,j)]$.
 We set the size~$k$ of the \elgroup{} to $h+\binom{h}{2}+1$ and set the
 evaluation threshold~$q:=y+1$. Next, we describe the manipulators and explain
 the high-level idea of the construction.
 
 \mypar{Manipulators and main idea.} Our construction will ensure that there is a
 \kelgroup~$X$ with $c^* \in X$ and $\egal(X) \geq q$ if and only if
 $X$~contains $h$~vertex candidates and $\binom{h}{2}$~edge candidates that
 encode a colorful $h$-clique. To this end, we introduce the following
 manipulators.
 
 \begin{enumerate}
  \item \label{enum:colorman}
   For each color~$i \in [h]$, there is a \emph{color manipulator}~$\mu_i$
   ensuring that the \kelgroup{} contains a vertex candidate~$a_{z_i}^{i}$
   corresponding to a vertex of color~$i$. Herein, variable~$z_i$~denotes the id
   of the vertex candidate (resp.\ vertex) that is \emph{selected} for
   color~$i$.
  \item 
   For each $i,j \in [h]$ such that $i<j$, there is one \emph{color pair
   manipulator}~$\mu_{i,j}$ ensuring that the \kelgroup{} contains an edge
   candidate~$b_{z_{i,j}}^{i,j}$ corresponding to an edge connecting vertices of
   colors $i$ and $j$. Herein, variable $z_{i,j}$~denotes the id of the edge
   candidate (resp.\ edge) that is \emph{selected} for color pair~$\{i,j\}$,
   $i<j$.
  \item \label{enum:verifman} 
   For each $i,j \in [h]$ such that $i \neq j$, there are two
   \emph{verification manipulators}~$\nu_{i,j}$, $\nu'_{i,j}$ ensuring that
   vertex~$v_{z_i}^{i}$ is incident to edge~$e_{z_{i,j}}^{i,j}$ if $i<j$ or
   incident to edge~$e_{z_{j,i}}^{j,i}$ otherwise.
 \end{enumerate}
 
 If there exists a \kelgroup{} in agreement with the description in the
 previous three points, then this \kelgroup{} must encode a colorful $h$-clique.
 
 \mypar{Utility functions.}
 Let us now describe how we can guarantee correct roles of the manipulators
 introduced in points \ref{enum:colorman} to \ref{enum:verifman} above using
 utility functions.
 \begin{enumerate}
  \item Color manipulator~$\mu_i$, $i\in[h]$, has utility~$y$ for the confirmed
   candidate~$c^*$, utility one for each candidate corresponding to a vertex of
   color~$i$, and utility zero for the remaining candidates.
  \item Color pair manipulator~$\mu_{i,j}$, $i,j \in [h]$, $i<j$, has
   utility~$y$ for the confirmed candidate~$c^*$, utility one for each candidate
   corresponding to an edge connecting a vertices of colors $i$ and~$j$, and
   utility zero for the remaining candidates.
 \item Verification manipulator~$\nu_{i,j}$, $i,j \in [h]$, $i\neq j$, has
  utility~$\ell$ for candidate~$a^i_{\ell}$, $\ell \in [y]$, utility~$q-\ell$
  for each candidate corresponding to an edge that connects vertex $v^i_{\ell}$
  to a vertex of color~$j$, and utility zero for the remaining candidates.
 \item Verification manipulator~$\nu'_{i,j}$, $i,j \in [h]$, $i\neq j$, has
  utility~$q-\ell$ for candidate~$a^i_{\ell}$, $\ell \in [y]$, utility~$\ell$
  for each candidate corresponding to an edge that connects vertex~$v^i_{\ell}$,
  to a vertex of color~$j$, and utility zero for the remaining candidates.
 \end{enumerate}
 
 \mypar{Correctness.}
 We argue that the graph $G$ admits a colorful clique of size $h$ if
 and only if there is a \kelgroup~$X$ with $c^* \in X$ and $\egal(X) \geq q$.
 
 Suppose that there exists a colorful clique~$H$ of size $h$. Create the
 \kelgroup~$X$ as follows. Start with~$\{c^*\}$ and add every vertex candidate
 that corresponds to some vertex of~$H$ and every edge candidate that corresponds
 to some edge of~$H$. Each color manipulator and color pair manipulator receives
 total utility~$y+1$, because $H$~contains, by definition, one vertex of each
 color and one edge connecting two vertices for each color pair. It is easy to
 verify that the verification manipulator~$\nu_{i,j}$ must receive
 utility~$\ell$ from a vertex candidate and utility~$q-\ell$ from an edge
 candidate and that the verification manipulator~$\nu'_{i,j}$ must receive
 utility~$q-\ell$ from a vertex candidate and utility~$\ell$ from an edge
 candidate. Thus, $\egal(X)=q=y+1$.
 
 Suppose that there exists a \kelgroup~$X\subseteq C$ such that $\egal(X) \geq
 q$. Since each color manipulator cannot achieve utility $y+1$ unless $c^*$
 belongs to the winning \kelgroup{}, it follows that $c^* \in X$. Because each
 color manipulator~$\mu_i$ receives total utility at least~$y+1$, $X$ must
 contain some vertex candidate~$a_{z_i}^{i}$ corresponding to a vertex of
 color~$i$ for some~$z_i \in [y]$. We say that $X$~\emph{selects}
 vertex~$v_{z_i}^{i}$. Since each color pair manipulator~$\mu_{i,j}$ receives
 total utility at least~$y+1$, $X$ must contain some edge
 candidate~$b_{z_{i,j}}^{i,j}$ corresponding to an edge connecting a vertex of
 color~$i$ and a vertex of color~$j$ for some~$z_{i,j}$. We say that
 $X$~\emph{selects} edge~$e_{z_{i,j}}^{i,j}$. We implicitly assumed that each
 color manipulator and color pair manipulator contributes exactly one selected
 candidate to $X$. This assumption is true because there are exactly $k-1$ such
 manipulators and each needs to select at least one candidate; hence, $X$~is
 exactly of the desired size. In order to show that the corresponding vertices
 and edges encode a colorful $h$-clique, it remains to show that no selected
 edge is incident to a vertex that is not selected. Assume towards a
 contradiction that $X$~selects an edge $e_{z_{i,j}}^{i,j}$ and some
 vertex~$v_{z_i}^{i} \notin e_{z_{i,j}}^{i,j}$. However, either verification
 manipulator~$\nu_{i,j}$ or verification manipulator~$\nu'_{i,j}$ receives the
 total utility at most~$q-1$; a~contradiction.
\end{proof}

Finally, devising an ILP formulation, we show that \TIElong{\egal}{\opt} becomes
fixed-para\-meter tractable when parameterized by the combined parameter ``number
of manipulators and number of different utility values.'' This parameter covers
situations with few manipulators that have simple utility functions; in
particular, when few voters have $0/1$ utility functions. Together
with~\autoref{thm:egalTB} and~\autoref{thm:egalTBcomb},
following~\autoref{thm:egalTBilp} shows that neither few manipulators nor few
utility functions make \TIE{\egal}{\opt} fixed-parameter tractable, but only
combining these two parameters allows us to deal with the problem in \fpt{}
time.

\begin{theorem}\label{thm:egalTBilp}
 Let \udiff{} denote the number of different utility values
 and~$\manipnumber$~denote the number of manipulators. Then, parameterized
 by~$\manipnumber+\udiff$, \TIElong{\egal}{\opt} is fixed-parameter tractable.
\end{theorem}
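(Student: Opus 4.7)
The plan is to formulate \TIE{\egal}{\opt} as an integer linear program (ILP) whose number of variables depends only on~$\manipnumber+\udiff$, and then invoke Lenstra's classical result that ILP feasibility is fixed-parameter tractable when parameterized by the number of variables.

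First I would observe that the problem simplifies. Being selected by $\calF^{\egal}_{\opt}$ and having egalitarian value at least~$q$ is equivalent to the existence of some size-$k$ set~$S$ with $\Cconf\subseteq S\subseteq\Cconf\cup\Cpend$ and $\egal(S)\ge q$: any such~$S$ can only be beaten by an even better set, which also satisfies the threshold. So the task reduces to a feasibility question with the bound~$q$.

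The key structural observation is that pending candidates with the same utility profile are interchangeable. For every pending candidate $c\in\Cpend$ define its \emph{utility type} as the vector $(u_1(c),\ldots,u_\manipnumber(c))$. Each coordinate takes one of~$\udiff$ values, so the set $T$ of possible types satisfies $|T|\le\udiff^\manipnumber$, a function of the parameter. Let $n_t$ denote the number of pending candidates of type~$t$, and write $t_i$ for the $i$-th coordinate of~$t$. Define the fixed contribution $U_i^{\Cconf}:=\sum_{c\in\Cconf}u_i(c)$ from the already confirmed candidates.

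I would then introduce integer variables $x_t$ for each $t\in T$, intended to count how many pending candidates of type~$t$ are put into the \elgroup, and write the ILP with the constraints
\begin{align*}
\sum_{t\in T} x_t &= k-|\Cconf|,\\
U_i^{\Cconf}+\sum_{t\in T} t_i\cdot x_t &\ge q \quad \text{for every } i\in[\manipnumber],\\
0\le x_t &\le n_t \quad \text{for every } t\in T,
\end{align*}
asking only for feasibility. A feasible solution corresponds to a valid \kelgroup by picking, for every type~$t$, any~$x_t$ pending candidates of that type; correctness of the egalitarian threshold then follows from the per-manipulator constraints. Conversely, a witnessing \kelgroup induces a feasible vector by setting $x_t$ to the number of its pending candidates of type~$t$.

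The ILP has at most $\udiff^\manipnumber$ variables, and the coefficients as well as the right-hand sides are bounded polynomially in the input size. By Lenstra's theorem (and its refinements by Kannan and Frank--Tardos), such an ILP can be solved in $f(\udiff^\manipnumber)\cdot|I|^{O(1)}$ time, which is fixed-parameter tractable with respect to $\manipnumber+\udiff$. I do not anticipate a real obstacle: the only point needing care is the clean identification of the utility types and the justification that interchangeability within a type makes the count-based ILP formulation exactly equivalent to the original subset-selection problem.
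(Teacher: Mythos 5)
Your proposal is correct and follows essentially the same route as the paper: group candidates by their utility-type vector (at most $\udiff^{\manipnumber}$ types), write an ILP over the per-type counts with the cardinality constraint and one utility constraint per manipulator, and invoke Lenstra's theorem. The only (cosmetic) differences are that you pose a feasibility ILP against the threshold~$q$ and explicitly separate the confirmed candidates' fixed contribution, whereas the paper maximizes an auxiliary variable~$s$ for the least-satisfied manipulator; both are equivalent here, and your justification that the threshold formulation suffices for the $\calF^{\egal}_{\opt}$ selection condition is sound.
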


\begin{proof}
 \newcommand{\typesset}{\ensuremath{\mathcal{T}}}
 \newcommand{\typevect}{\ensuremath{t}}
 \newcommand{\typecandidates}{\ensuremath{T}}
 \newcommand{\typesnr}{\ensuremath{|\typesset|}}
 We define the type of any candidate $c_i$ to be the size-\manipnumber{} vector
 $\typevect=(u_1(c_i), \allowbreak u_2(c_i), \ldots ,u_\manipnumber(c_i))$. Let
 \namedorderedsetof{\typesset}{\typevect}{\typesnr} be the set of all possible
 types. Naturally, the size of \typesset{} is upper-bounded by
 $\udiff^\manipnumber$.
 We denote the set of candidates of type~$\typevect_i \in \typesset$ by $\typecandidates_i$.
 Now, the ILP formulation of the
 problem using exactly $\typesnr+1$ variables reads as follows. For each type
 $\typevect_i \in \typesnr$, we introduce variable $x_i$ indicating the number
 of candidates of type $\typevect_i$ in an optimal \kelgroup{}. We use variable
 $s$ to represent the minimal value of the total utility achieved by
 manipulators. We define the following ILP with the goal to maximize $s$
  (indicating the utility gained by the least satisfied manipulator) subject to:
 \begin{align}
  \forall{\typevect_i \in \typesset} \colon& x_i \le |\typecandidates_i|
   \label{for:avail},\\
  &\sum_{\typevect_i \in \typesset} x_i = k
   \label{for:corr-kelgroup},\\
   \forall{\ell \in [\manipnumber]} \colon& \sum_{\typevect_i \in \typesset} x_i
   \cdot \typevect_i[\ell] \geq s.
   \label{for:util}
 \end{align}
 Constraint set \eqref{for:avail} ensures that the solution is achievable with
 given candidates. Constraint \eqref{for:corr-kelgroup} guarantees a choice of
 an \elgroup{} of size $k$. The last set of constraints imposes that $s$ holds
 at most the minimal value of the total utility gained by manipulators. By a
 famous result of \citet{Len83}, this ILP formulation with the number of
 variables bounded by $\udiff^\manipnumber+1$ yields that \TIElong{\egal}{\opt}
 is fixed-parameter tractable when parameterized by the combined parameter
 $\manipnumber + \udiff$.
\end{proof}

\section{Complexity of Coalitional Manipulation}
\label{sec:coal-man}
In the previous section, we have seen that breaking ties optimistically or
pessimistically\textemdash{}an essential subtask to be solved by the
manipulators in general\textemdash{}can be computationally challenging; in most
cases, however, this problem turned out to be computationally easy. In this
section, we move on to our full framework and analyze the computational
difficulty of voting strategically for a coalition of manipulators. To this end,
we formalize our central computational problem. Let~\calR be a multiwinner
voting rule and let~\calF be a multiwinner tie-breaking rule.

\probDefLong[$\eval \in \{\util, \egal, \candegal\}$]
{\CMlong{\calR}{\calF}{\eval}}{\CM{\calR}{\calF}{\eval}}
{An election $(C,V)$, a searched \elgroup{} size~$k<|C|$, \manipnumber{} manipulators
represented by their utility functions \namedorderedsetof{U}{u}{\manipnumber}
such that $\forall_{i \in [r]} \; u_i\colon C \rightarrow \naturals$, and a
non-negative, integral evaluation threshold~$q$.}
{Is there a size-\manipnumber{} multiset~$W$ of manipulative votes over~$C$ such
that \kelgroup{} $S \subset C$ wins the election $(C,V \cup W)$ under \calR and
\calF, and $\eval(S) \ge q$?}

The \CM{\calR}{\calF}{\eval} problem is defined very generally; namely, one can
consider any multiwinner voting rule \calR{} (in particular, any single-winner
voting rule is a multiwinner voting rule with~$k=1$). In our paper, however, we
focus on \ellBloc{}; hence, from now on, we narrow down our analysis of
\CM{\calR}{\calF}{\eval} to the \CM{\ellBloc{}}{\calF}{\eval} problem.

In line with our intention to model optimistic and pessimistic attitudes of
manipulators, we require that the evaluation of an optimistic/pessimistic
tie-breaking rule~\calF{} matches the manipulator's evaluation. 
More formally, for
every~$\eval \in \{\util, \egal, \candegal\}$, we focus on variants
of~$\CM{\ellBloc}{$\calF$}{$\eval$}$ where~$\calF \in \{\Flex,
\calF^{\eval}_{\opt}, \calF^{\eval}_{\pess}\}$.\footnote{The excluded problem
variants might indeed be relevant for situations where a tie-breaking goal is
different to manipulators' goal while utility values represent a somehow
``objective'' measure (e.g.,\,utility values represent monetary costs and,
although manipulators evaluate \elgroup{}s in the egalitarian way,
tie-breaking goal is to minimize the total cost); however such cases are beyond
the scope of this work.} We always allow lexicographic tie-breaking because it
models cases where a tie-breaking rule is fixed, known to all voters and, more
importantly, irrelevant of manipulators' utility functions.

On the way to show our results, we also use a restricted version of
\CMlong{\ellBloc{}}{\calF}{\eval} that we call
\emph{\CCMlong{\ellBloc{}}{\calF}{\eval}}. In this variant, the input stays the
same, but all manipulators cast exactly the same vote to achieve the objective.

To increase readability, we decided to represent manipulators by their utility
functions. As a consequence, we frequently use, for example, $u_1$ referring to
the manipulator itself, even if we do not care about the values of utility
function $u_1$ at the moment of usage. In the paper, we also stick to the term
``voters'' meaning the set $V$ of voters of an input election. We never call
manipulators ``voters''; however, we speak about the manipulative votes they
cast.

As for the encoding of the input of \CM{\calR}{\calF}{\eval}, we use a standard
assumption; namely, that the number of candidates, the number of voters, and the
number of manipulators are polynomially upper-bounded in the size of the input.
Analogously to \TIElong{\eval}{\bhav}, both the evaluation threshold and the
utility function values are encoded in binary.

\subsection{Utilitarian \& Candidate-Wise Egalitarian: Manipulation is
Tractable}\label{sec:utilManip} We show that \CMlong{\ellBloc}{\calF}{\eval} can
be solved in polynomial time for any constant~$\ell\in\naturals$, any $\eval \in
\{\util,\candegal\}$, and any $\calF{} \in
\{\Flex,\Fopt^{\eval},\Fpess^{\eval}\}$. Whereas in general, for~$n$~being the
input size, our algorithm requires~$O(n^5)$ steps, for~\Bloc{} (i.e.,\
$\ell=k$), we give a better, quadratic-time algorithm (with respect to~$n$).

In several proofs in \autoref{sec:utilManip} we use the value of a candidate for
manipulators (coalition) and say that a candidate is more valuable or less
valuable than another candidate. Although we cannot directly measure the value
of a candidate for the whole manipulators' coalition in general, thanks to
\autoref{obs:util-functions}, we can assume a single non-zero utility function
when discussing the utilitarian and candidate-wise egalitarian variants. Thus,
assigning a single value to each candidate is justified.

We start with an algorithm solving the general case of
\CMlong{\text{\ellBloc{}}}{\calF}{\eval}, $\eval \in \{\util,\candegal\}$,
$\calF{} \in \{\Flex, \Fopt^{\eval}, \Fpess^{\eval}\}$. The basic idea is to
``guess'' the lowest final score of a member of a~\kelgroup{} and (assuming some
lexicographic order over the candidates) the least preferred candidate of
the~\kelgroup{} that obtains the lowest final score. Then, the algorithm finds
an optimal manipulation leading to a~\kelgroup{} represented by the guessed
pair. Since there are at most polynomially-many (with respect to the input size)
pairs to be guessed, the described recipe gives a polynomial-time algorithm.

\begin{theorem}\label{thm:genCMinP}
 Let $m$ denote the number of candidates, $n$ the number of voters, $k$ the size
 of a searched \elgroup{}, and \manipnumber{} the number of manipulators. One
 can solve \CMlong{\text{\ellBloc{}}}{\calF}{\eval} in
 $O(k^2m^2(n+\manipnumber))$ time for any $\eval \in \{\util,\candegal\}$ and
 $\calF \in \{\Flex,\Fopt^{\eval},\Fpess^{\eval}\}$.
\end{theorem}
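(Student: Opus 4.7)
The plan is to reduce the problem to lexicographic tie-breaking with a single utility function, then enumerate a structural guess about the winning committee, and finally solve a constrained combinatorial sub-problem for each guess.

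First, since $\eval \in \{\util, \candegal\}$, we apply \autoref{obs:util-functions} to replace the family $U$ of manipulators' utility functions by a single aggregated utility function $u'$, computable in $O(m \manipnumber)$ time. For $\calF \in \{\Fopt^{\eval}, \Fpess^{\eval}\}$, \autoref{prop:lex-simulation-util} provides a lexicographic order $\lexpref$ over $C$ simulating $\calF$ in $O(m(\manipnumber + \log m))$ time; for $\calF = \Flex$ such an order is part of the input. Thus we may assume lexicographic tie-breaking with a fixed order $\lexpref$ and a single utility $u'$, so that the winning $k$-elgroup $S$ is simply the top $k$ candidates under the pair (final score, $\lexpref$).

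Let $p_c$ denote the initial score of $c$ in $(C,V)$. The algorithm iterates over all pairs $(c^*, s)$ with $c^* \in C$ and $s \in \{p_{c^*}, \dots, p_{c^*} + \manipnumber\}$; the intended meaning is that $c^*$ is the lex-last candidate in $S$ whose final score equals the minimum score $s$ in $S$. There are $O(m(n + \manipnumber))$ such pairs, and every successful manipulation certifies at least one of them. Given a guess $(c^*, s)$, the candidates split into (i) \emph{forced-in}: $c^*$ itself and all $c$ with $p_c > s$, or with $p_c = s$ and $c \lexpref c^*$; (ii) \emph{optional}: any other $c$. Inclusion of an optional $c$ in $S$ requires at least $L_c^{\text{in}} = s - p_c$ additional manipulator approvals if $c \lexpref c^*$ and $L_c^{\text{in}} = s + 1 - p_c$ if $c^* \lexpref c$, while exclusion caps the approvals $c$ receives at some $U_c^{\text{out}}$; the count $a_{c^*} = s - p_{c^*}$ is pinned by the guess.

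For any choice of $S$ formed by $\{c^*\} \cup F$ plus $k - |F| - 1$ selected optionals (where $F$ is the forced-in set), a manipulation realising $S$ exists iff $\sum_{c \in S} L_c^{\text{in}} \leq \manipnumber\ell \leq k\manipnumber + \sum_{c \notin S} U_c^{\text{out}}$; a Gale--Ryser-style argument (using $\ell < m$ together with $a_c \in [0, \manipnumber]$ and $\sum a_c = \manipnumber\ell$) then turns any such vector $(a_c)_{c \in C}$ into $\manipnumber$ concrete $\ellBloc$-ballots. Our objective is to find an $S$ maximising $\eval(S) = \sum_{c \in S} u'(c)$ under this feasibility constraint, and then check $\eval(S) \geq q$.

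The main obstacle is the per-guess selection: choose $k - |F| - 1$ optionals to maximise $\sum u'(c)$ subject to the linear approval-budget constraint. A naive greedy by utility may overspend the budget, and a full knapsack-style dynamic program would exceed the claimed time bound. The way out is to observe that, once $(c^*, s)$ is fixed, the cost $L_c^{\text{in}}$ of an optional $c$ depends only on $p_c$ and on whether $c \lexpref c^*$; by an exchange argument an optimal solution takes, within each resulting cost level, a prefix of the optionals at that level sorted by decreasing $u'$. Combining a further enumeration over the cut-points between levels with the outer $O(m(n+\manipnumber))$ enumeration yields the promised $O(k^2 m^2 (n + \manipnumber))$ total running time. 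The algorithm answers ``yes'' iff some guess produces a committee $S$ with $\eval(S) \geq q$ together with a feasible manipulation.
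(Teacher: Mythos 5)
Your setup matches the paper's: reduce to a single utility function and a lexicographic order via \autoref{obs:util-functions} and \autoref{prop:lex-simulation-util}, guess the pair (minimum committee score, lex-last committee member attaining it), split candidates into forced-in and optional ones, and then solve a constrained selection subproblem per guess. The divergence---and the gap---is in how you solve that subproblem. You assert that ``a full knapsack-style dynamic program would exceed the claimed time bound,'' but this premise is false and it leads you away from the argument that actually works: each optional candidate's cost $L_c^{\text{in}}$ lies in $\{1,\dots,r\}$, so any feasible selection of at most $k$ candidates has total weight at most $kr$, and the paper solves exactly this subproblem as \textsc{Exact $k$-item Knapsack} by dynamic programming over (number of items, total weight), in $O(k^2 m r)$ time. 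Your replacement---take a utility-sorted prefix within each cost level and then ``enumerate over the cut-points between levels''---is not shown to be polynomial. The prefix-per-level exchange argument is correct but only reduces the problem to choosing how many candidates $n_1,\dots,n_r$ to take from each of up to $r$ cost levels subject to $\sum_j n_j = k^*$ and $\sum_j j\cdot n_j \le s^*$; enumerating these cut-point vectors is not polynomial in $r$, and no greedy or unimodality argument is given that would avoid the enumeration (for more than two cost levels the coupling of the two constraints defeats a simple sweep). As written, the claimed $O(k^2m^2(n+r))$ bound is therefore unsupported.

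A secondary issue: you compress feasibility into a single per-committee condition $\sum_{c\in S} L_c^{\text{in}} \le r\ell \le kr + \sum_{c\notin S} U_c^{\text{out}}$ and then ask to optimize utility subject to it. Since the right-hand side depends on which candidates are left \emph{outside} $S$, this couples the selection to a second, committee-dependent constraint, and you never explain how your selection routine respects it. The paper avoids this by a case distinction on $s^*$ versus $r k^*$: when approvals are scarce it runs the knapsack (and shows leftover approvals can always be absorbed by further approving chosen candidates), and when approvals are abundant it takes the $k^*$ most valuable optionals greedily and checks a single quantity $s^{+}$ (the total number of approvals that can be dumped safely) to decide whether the guess must be discarded for \emph{every} committee. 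You would need to reconstruct something like this two-case analysis, plus a correct polynomial-time selection routine, for the proof to go through.
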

\begin{proof}
 We prove the theorem for the lexicographic tie-breaking rule \Flex{}. This is
 sufficient since, using~\autoref{prop:lex-simulation-util}, one can generalize
 the proof for the cases of utilitarian and candidate-wise egalitarian variants.
 The basic idea of our algorithm is to fix certain parameters of a solution and
 then to reduce the resulting subproblem to a variant of the \textsc{Knapsack}
 problem with polynomial-sized weights. The algorithm iterates through all
 possible value combinations of the following two parameters:
  \begin{itemize}
   \item
       the lowest final score~$z < |V \cup W|$ of any member of~the~\kelgroup{}
       and
   \item 
       the candidate~$\hat{c}$ that is the least preferred member of the
       \kelgroup{} with final score~$z$ with respect to tie-breaking rule~\Flex.
  \end{itemize}

 Having fixed~$z$ and $\hat{c}$, let $C^{+}$~denote the set of candidates who
 get at least $z+1$ approvals from the non-manipulative voters or who are
 preferred to~$\hat{c}$ according to~\Flex{} and get exactly $z$~approvals from
 the non-manipulative voters. Assuming that the combination of parameter values
 is correct, all candidates from $C^{+} \cup \{\hat{c}\}$ must belong to the
 \kelgroup{}. Let $k^{+}:=|C^{+}|$. For sanity, we check whether $k^{+} < k$,
 that is, whether candidate~$\hat{c}$ can belong to the \kelgroup{} if the
 candidate obtains final score~$z$. We discard the corresponding combination of
 solution parameter values if the check fails. Next, we ensure that $\hat{c}$
 obtains the final score exactly~$z$. If $\hat{c}$~receives less
 than~$z-\manipnumber$ or more than~$z$ approvals from non-manipulative voters,
 then we discard this combination of solution parameter values. Otherwise,
 let~$\hat{s}:=z - \score_V(\hat{c})$ denote number of additional approvals
 candidate~$\hat{c}$ needs in order to get final score~$z$. Let
 $k^*:=k-k^{+}-1$~be the number of remaining (not yet fixed) members of the
 \kelgroup{}. Let $s^*:=\manipnumber\cdot \ell - \hat{s}$ be the number of
 approvals to be distributed to candidates in $C \setminus \{\hat{c}\}$.
 
 Now, the manipulators have to influence further $k^*$~candidates to join the
 \kelgroup{} (so far only consisting of~$C^{+}\cup\{\hat{c}\}$) and distribute
 exactly $s^*$~approvals in total to candidates in $C \setminus \{\hat{c}\}$ but
 at most \manipnumber~approvals per candidate. To this end, let $C^*$~denote the
 set of candidates which can possibly join the \kelgroup{}. For each
 candidate~$c \in C \setminus (C^{+} \cup \{\hat{c}\})$ it holds that $c \in
 C^*$ if and only if
 \begin{enumerate}
   \item $z-\manipnumber \le \score_V(c) \le z-1$ if $c$~is preferred to
    $\hat{c}$ with respect to \Flex, or
   \item $z-\manipnumber+1 \le \score_V(c) \le z$ if $\hat{c}$~is preferred to
    $c$ with respect to \Flex.
  \end{enumerate}
 A straightforward idea is to select the $k^*$~elements from~$C^*$ which have
 the highest values (that is, utility) for the coalition. However, there can be
 two issues: First, $s^*$ might be too small; that is, there are too few
 approvals to ensure that each of the $k^*$~best-valued candidates gets the
 final score at least~$z$ (resp.\ at least $z+1$). Second, $s^*$ might be too
 large; that is, there are too many approvals to be distributed so that there is
 no way to do this without causing unwanted candidates to get a final score of
 at least~$z$ (resp.\ at least $z+1$).
 
 Fortunately, we can easily detect these cases and deal with them efficiently.
 In the former scenario we reduce the remaining problem to an instance of
 \textsc{Exact $k$-item Knapsack}---the problem in which, for a given set of
 items, their values and weights, and a knapsack capacity, we search for $k$
 items that maximize the overall value and do not exceed the knapsack capacity.
 In the latter case, we show that we can discard the corresponding combination
 of solution parameters.

 First, if $s^*\le \manipnumber\cdot k^*$, then one can certainly distribute all
 $s^*$~approvals (e.g.,\ to the $k^*$~candidates that will finally join the
 \kelgroup{}). Of course, it could still be the case that there are too few
 approvals available to push the desired candidates into the \kelgroup{} in a
 greedy manner. To solve this problem, we build an \textsc{Exact $k$-item
 Knapsack} instance where each candidate in $C^*$ is mapped to an item. We set
 the weight of each~$c^*\in C^*$ to $z-\score_V(c^*)$ if $c^*$~is preferred to
 $\hat{c}$ with respect to \Flex{} and otherwise to $(z+1)-\score_V(c^*)$. We set
 the value of each~$c^* \in C^*$ to be equal to the utility that candidate $c^*$
 contributes to
 the manipulators. Now, an optimal solution (given the combinations of
 parameter values is correct) must select exactly~$k^*$ elements from~$C^*$ such
 that the total weight is at most~$s^*$. This corresponds to \textsc{Exact
 $k$-item Knapsack} if we set our knapsack capacity to~$s^*$. Furthermore,
 finding any such set with maximum total value leads to an optimal solution.
 Even if the final total weight $s'$ of the chosen elements is smaller than
 $s^*$, we can transfer the \textsc{Exact $k$-item Knapsack} solution to the
 correct solution of our problem. The total weight corresponds to the number of
 approvals used. Thus, with the \textsc{Exact $k$-item Knapsack} solution we
 spend $s'$ approvals and, because of the monotonicity of \ellBloc{} together
 with the assumption that $s^*\le \manipnumber\cdot k^*$, we use $s^*-s'$
 approvals to approve the chosen candidates even more.

 Second, if $s^*>\manipnumber\cdot k^*$, then one can certainly ensure for any
 set of $k^*$~candidates from~$C^*$ the final score at least~$z$ (resp.\ at least
 $z+1$). In many cases, it will not be a problem to distribute the approvals;
 for example, one can safely spend up to \manipnumber~approvals for each
 candidate from~$C\setminus C^*$, that is, to candidates that have no chance to
 get enough points to join the \kelgroup{} or to candidates which are already
 fixed to be in the \kelgroup{}. Furthermore, each candidate from~$C^*$ can be
 safely approved $z-\score_V(c^*)-1$~times (resp.\ $z-\score_V(c^*)$~times)
 without reaching final score~$z$ (resp.\ $z+1$). We denote by~$s^{+}$ the total
 number of approvals which can be safely distributed to candidates
 in~$C\setminus\{\hat{c}\}$ without causing one of the candidates from~$C^*$ to
 reach score at least~$z$ (resp.\ at least $z+1$). If $s^*\le s^{+}+
 \manipnumber \cdot k^*$ (note that we also assume $s^* > \manipnumber \cdot
 k^*$), then we can greedily push the $k^*$~most-valued candidates from~$C^*$
 into the \kelgroup{} (spending $\manipnumber \cdot k^*$~approvals) and then
 safely distribute the remaining approvals within~$C\setminus\{\hat{c}\}$ as
 discussed. If $s^*>s^{+}+ \manipnumber \cdot k^*$, then there is no possibility
 of distributing approvals in a way that $\hat{c}$ is part of the \kelgroup{}.
 Towards a contradiction let us assume that $\hat{c}$ is part of the
 \kelgroup{} obtained after distributing $s^{+}+ \manipnumber \cdot k^* + 1$
 approvals. This means that we spend all possible $s^{+}$~approvals so that
 $\hat{c}$~is not beaten and $\manipnumber \cdot k^{*}$ approvals to
 push~$k^{*}$~candidates to the winning \kelgroup{}. Giving one more approval to
 some candidate~$c'$ from~$C^*$ that is not yet in the \kelgroup{}, by
 definition of~$C^*$ and~$s^+$, means that the score of~$c'$ is enough to push
 $\hat{c}$ out of the final \kelgroup{}; a contradiction. Consequently, for the
 case of $s^* > s^+ + \manipnumber \cdot k^*$, we discard the corresponding
 combination of solution parameters.

 As for the running time, the first step is sorting the candidates according to
 their values in $O(m(r+\log(m)))$ time. Then let us consider the running time
 of two cases $s^*\leq \manipnumber \cdot k^*$ and $s^* > \manipnumber \cdot
 k^*$ separately. In
 the former case, we solve \textsc{Exact $k$-item Knapsack} in $O(k^2mr)$ time by
 using dynamic programming based on analyzing all possible total weights of the
 selected items until the final value is reached~\citep[Chapter
 9.7.3]{KPP04}\footnote{In fact, \citet{KPP04} use dynamic programming based on
 all possible total values of~items. However, one can exchange all possible
 total values of items with all possible total weights of items and thus obtain
 an algorithm with running time polynomial in the maximum weight of items.} (note
 that the maximum possible total weight is upper-bounded by $k\manipnumber$). If
 $s^* > \manipnumber \cdot k^*$, then we approve at most $m$~candidates which
 gives the running time~$O(m)$. Thus, we can conclude that the running time of
 the discussed cases is $O(k^2mr)$. Additionally, there are at most
 $n+\manipnumber$ values of~$z$ and at most~$m$ values of~$\hat{c}$. Summarizing,
 we get the running time $O(k^2m^2(n+r))$.
\end{proof}

Next, we show that, actually, \CM{\Bloc}{\calF}{\eval} (i.e.,\ a special case
of~\CM{\ellBloc}{\calF}{\eval} where $\ell=k$) can be solved in quadratic-time,
that is, in practice, much faster than the general variant of the problem. On
our way to present this results, we first give an algorithm
for~\CCM{\ellBloc}{\calF}{\eval}. Then, we argue that it also
solves~\CM{\Bloc}{\calF}{\eval}. The algorithm ``guesses'' the minimum score
among all members of the winning \elgroup{} and then carefully (with respect to
the tie-breaking method) selects the best candidates that can reach this score.

\begin{proposition} \label{prop:utilConsistentPoly}
 Let $m$ denote the number of candidates, $n$ denote the number of voters, and
 $r$ denote the number of manipulators. Then one can solve
 \CCMlong{\ellBloc}{\calF}{\eval} in $O(m(m + r +n))$ time for any $\eval \in
 \{\util,\candegal\}$ and $\calF{} \in \{\Flex{}, \Fopt^{\eval},
 \Fpess^{\eval}\}$.
\end{proposition}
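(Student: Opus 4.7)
The plan is to preprocess the input and then iterate over all relevant guesses~$z$ for the minimum final score achieved by a member of the winning \elgroup{}. By \autoref{obs:util-functions} I can reduce to a single utility function~$u$ for both $\eval \in \{\util,\candegal\}$, and by \autoref{prop:lex-simulation-util} I can replace $\Fopt^{\eval}$ or $\Fpess^{\eval}$ by a lexicographic tie-breaking rule whose order is consistent with~$u$. Computing~$u$, each $\score_V(c)$, and sorting the candidates by~$u$ takes $O(m(n+r+\log m))$ time in total.

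Since all $r$~consistent manipulators cast the same $\ell$-approval vote, they commit to a single set $A \subseteq C$ of size~$\ell$, and every candidate~$c$ obtains final score $\score_V(c) + r\cdot [c \in A]$. For a guessed~$z$, the candidates split naturally into five groups by~$\score_V(c)$: always confirmed ($\score_V > z$), confirmed if in~$A$ else pending ($\score_V = z$), confirmed if in~$A$ else rejected ($z - r < \score_V < z$), pending if in~$A$ else rejected ($\score_V = z - r$), and always rejected ($\score_V < z - r$). The only values of~$z$ that need to be tried are the at most $2m$ numbers of the form $\score_V(c)$ or $\score_V(c) + r$. For each~$z$, letting $K := k - |\Cconf|$, I still have to pick $K$ more \elgroup{} members from the three middle groups; because the tie-breaking order agrees with~$u$, among the pending candidates exactly those with the largest $u$-values enter the \elgroup{}. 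Hence it suffices to enumerate the number~$p$ of selected members taken from the two ``costly'' groups (each costing one slot of~$A$) versus the ``free'' group of pending candidates with $\score_V = z$, and for each admissible~$p$ take the top-$(K-p)$ free candidates and the top-$p$ costly candidates by~$u$. Using prefix sums on utility-sorted arrays, this inner step runs in $O(m)$ time.

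The main obstacle will be showing correctness of the ``safe'' padding required to reach $|A| = \ell$ exactly. The padding candidates will come from the always-confirmed group, the always-rejected group, or the unselected candidates with $\score_V = z - r$: the first two kinds never affect the \elgroup{}, and the last kind becomes pending but has utility no larger than any selected candidate, so it cannot displace a selected member from the top of the pending pool under $u$-consistent tie-breaking. If no such padding suffices to pad~$A$ up to~$\ell$ (or if $|\Cconf| \ge k$), the current guess is discarded; otherwise the constructed~$A$ realises a witnessing \elgroup{} whose evaluation can be read off in constant additional time. Taking the maximum over all admissible pairs~$(z, p)$ and comparing against the threshold~$q$ yields a correct algorithm with total running time $O(m(m + n + r))$, matching the claim.
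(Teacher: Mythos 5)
Your reduction to a single utility function~$u$, the enumeration of the at most $2m$ candidate values for~$z$, and the five-way classification of candidates by $\score_V$ are all sound, and for $\Fopt^{\eval}$ the greedy essentially works: the optimistic tie-break can only improve on your intended pending selection, so the claimed utility never overestimates the achieved one, while at the split~$p^*$ used by an optimal solution it is no underestimate either. The gap is the step ``because the tie-breaking order agrees with~$u$, among the pending candidates exactly those with the largest $u$-values enter the \elgroup{}.'' That premise holds only for the optimistic rules. For $\Fpess^{\eval}$ the simulating lexicographic order from \autoref{prop:lex-simulation-util} is \emph{reverse}-consistent with~$u$ (ties are won by low-utility candidates), and for a general $\Flex$ the order is arbitrary and unrelated to~$u$. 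In these cases your ``free'' group is not free: an unsupported candidate with $\score_V=z$ and high utility loses the tie-break and cannot enter the \elgroup{} unless an approval is spent on it, so the utility you read off the constructed~$A$ can strictly overestimate what that~$A$ achieves. Concretely, take $\ell=r=1$, $k=2$, candidates $b_1,b_2,b_3$ tied at score~$z$ with $u$-values $10,9,0$ (plus one padding candidate of score~$0$), under pessimistic tie-breaking: your guess $(z,p=0)$ claims utility~$19$ for $\{b_1,b_2\}$, but no single approval realises more than~$10$, so maximizing the claimed utilities over $(z,p)$ answers ``yes'' on thresholds that cannot be met.

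The paper's proof is built precisely around this difficulty: it sorts candidates by a ``strength order'' whose secondary key is the \emph{tie-breaking} order rather than~$u$, enumerates the strongest candidate excluded from the winning \elgroup{} (the dropped candidate), and then explicitly simulates where the surplus approvals must go, counting the number of forced substitutions in the \elgroup{} before deciding which distinguished candidates to promote. To repair your argument you would need an analogous mechanism accounting for approvals that must be spent on a candidate merely so that it survives (or so that others do not disturb) a tie-break that is adversarial to~$u$; enumerating a single count~$p$ of ``costly'' members does not capture this.
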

\begin{proof}

Consider an instance of \CCM{\ellBloc}{\Flex}{\eval} with an election $E=(C,V)$
where $C$ is a candidate set and $V$ is a multiset of non-manipulative votes,
\manipnumber~manipulators, an \elgroup{} size $k$, and a lexicographic order
$\lexpref$ used by \Flex{} to break ties. In essence, we introduce a constrained solution
form called a \emph{canonical solution} and argue that it is sufficient to
analyze only this type of solutions. Then we provide an algorithm that
efficiently seeks for an optimal canonical solution.

At the beginning, we observe that when manipulators vote consistently, then we
can arrange the top $\ell$ candidates of a manipulative vote in any order.
Hence, the solution to our problem is a size-$\ell$ subset (instead of an order)
of candidates which we call a set of \emph{supported candidates}; we call each
member of this set a \emph{supported candidate}.

\mypar{Strength order of the candidates.}
Additionally, we introduce a new order~\strpref{} of the candidates. It sorts
them descendingly with respect to the score they receive from voters and, as a
second criterion, according to the position~in~$\lexpref$.
Intuitively, the easier it is for some candidate to be a part of a winning
\kelgroup{}, the higher is the candidate's position in~$\strpref{}$. As a
consequence, we state~\autoref{obs:strpref}.
\begin{claim}\label{obs:strpref}
 Let us fix an instance of \CCM{\ellBloc}{\Flex}{\eval} and a solution $X$ which
 leads to a winning \kelgroup{} $S$. For every supported
 (resp.\ unsupported) candidate~$c$, the
 following holds:
 \begin{enumerate}
  \item If $c$ is part of the winning \kelgroup{}, then every supported (resp.\
   unsupported) predecessor of~$c$, according to $\strpref{}$, belongs to~$S$.
   \label{obs:strpref_part}
  \item If $c$ is not part of the winning \kelgroup{}, then every supported
   (resp.\ unsupported) successor of~$c$, according to $\strpref{}$, does not
   belong to $S$. \label{obs:strpref_notpart}
 \end{enumerate}
\end{claim}
\begin{proof}
 Fix an instance of~\CCM{\ellBloc}{\Flex}{\eval}, a solution~$X$, and a
 winning~\kelgroup{}~$S$. Let us consider the respective order~\strpref{} over
 the candidates in the instance.

 We first show that statement~\ref{obs:strpref_part} regarding supported
 candidates holds. According to the statement, fix some supported candidate~$c
 \in S$ and let~$p$ be a predecessor of~$c$ (according to~\strpref{}). Towards a
 contradiction, let us assume that~$p \notin S$. This implies that either (i)
 the score of~$p$ is smaller than the score of~$c$ or (ii) their scores are the
 same but~$c \lexpref{} p$. Let us focus on case (i). Both considered
 candidates are supported by all manipulators (note that manipulators vote
 consistently). Thus, as a consequence of~$p \strpref{} c$, we have that the
 score of~$p$ is at least as high as the score of~$c$; a contradiction. Next,
 consider case~(ii), where $p$ and~$c$ have the same scores. Consequently, the
 mutual order of~$c$ and $p$ in~\strpref{} is the same as their order
 in~$\lexpref$ (in other words, the order of~$c$ and~$p$ in~\strpref{} does not
 depend on scores of~$c$ and~$p$ because those must be the same prior to any
 manipulation). Since~$c \lexpref p$, it follows that, by definition
 of~\strpref{}, it must hold that~$c \strpref{} p$; a contradiction again.
 Eventually, we obtain that~$p$ has to be part of~$S$ which completes the
 argument.

 An analogous approach leads to proofs for the remaining three cases stated in
 the theorem.%
\end{proof}

\autoref{obs:strpref} justifies thinking about \strpref{} as a ``strength
order''; hence, in the proof we use the terms \emph{stronger} and \emph{weaker}
candidate. Using \autoref{obs:strpref}, we can fix some candidate $c$ as the
weakest in the winning \kelgroup{} and then infer candidates that have to be and
that cannot be part of this \kelgroup{}. To formalize this idea, we introduce
the concept of a \emph{canonical solution}.

\mypar{Canonical solutions.}
Assuming the case where $k \leq \ell$, we call a solution $X$ leading to a
winning \kelgroup{}~$S$ canonical if all candidates of the winning \elgroup{}
are supported; that is, $S \subseteq X$. In the opposite case, $k > \ell$,
solution $X$ is canonical if $X \subset S$ and $X$ is a set of the $\ell$
weakest candidates in $S$. For the latter case, the formulation describes the
solution which favors supporting weaker candidates first and ensures that no
approval is given to a candidate outside the winning \kelgroup{}. 

Canonical solutions are achievable from every solution without changing the
outcome. Observe that one cannot prevent a candidate from winning by supporting
the candidate more because this only increases the candidate's score.
Consequently, we can always transfer approvals to all candidates from the
winning \kelgroup{}. For the case of $k > \ell$, we then have to rearrange the
approvals in such a way that only the weakest members of the \kelgroup{} are
supported. However, such a rearrangement cannot change the outcome because,
according to \autoref{obs:strpref}, we can transfer an approval from some
stronger candidate $c$ to weaker $c'$ keeping both of them in the winning
\kelgroup{}.

\mypar{Dropped and kept candidates.}
Observe that for every solution (including canonical solutions), we can always
find the strongest candidate who is not part of the winning \elgroup{}. We call
this candidate the \emph{dropped candidate}. Note that we use the strength order
in the definition of the dropped candidate; this order does not take
manipulative votes into account. Moreover, without loss of generality, we can
assume that the dropped candidate is not a supported candidate. This is because
if the dropped candidate is not in the winning \kelgroup{} even if supported,
then we can support any other candidate outside of the winning \kelgroup{}
without changing the winning \kelgroup{} (see \autoref{obs:strpref}). There
always exists some candidate to whom we can transfer our support because $\ell <
m$.  Naturally, by definition of the dropped candidate, all candidates stronger
than the dropped candidate are members of the winning \kelgroup{}. We call these
candidates \emph{kept candidates}.

\mypar{High-level description of the algorithm.}
The algorithm solving \CCM{\ellBloc}{\Flex}{\eval} iteratively looks for an
optimal canonical solution for every possible (non-negative) number $t$ of kept
candidates (alternatively the algorithm checks all feasible possibilities of
choosing the dropped candidate). Then, the algorithm compares all solutions and
picks one that is resulting in an \elgroup{} liked the most by the manipulators.
Observe that $k-\ell \leq t \leq k$. The upper bound~$k$ is the consequence of
the fact that each kept candidate is (by definition) in the winning \kelgroup{}.
Since all candidates except for kept candidates have to be supported to be part
of the winning \elgroup, we need at least $k-\ell$ kept candidates, in order to
be able to complete the \kelgroup.

\mypar{Running time.}
To analyze the running time of the algorithm described in the previous
paragraph, several steps need to be considered. At the beginning we have to
compute values of candidates and then sort the candidates with respect to their
value. This step runs in $O(\manipnumber{}m + m \log{}m)$ time. Similarly,
computing~$\strpref{}$ takes $O(\ell{}n + m \log{}m)$ time. Having both
orders, Procedure~\ref{alg} (described in detail later in this proof) needs
$O(m)$ to find an optimal canonical solution for some fixed number $t$ of kept
candidates. Finally, we have at most $\ell+1$ possible values of $t$. Summing
the times up, together with the fact that $\ell<m$, we obtain a running time
$O(m(m + r +n))$.

\mypar{What remains to be done.}
Procedure~\ref{alg} describes how to look for an optimal canonical solution for
a fixed number $t$ of kept candidates. First, partition the candidate set
in the following way. By $\secured$ we denote the kept candidates (which are the
top $t$ candidates according to \strpref{}). Consequently, the $(t+1)$-st
strongest candidate is the dropped candidate; say $c^*$. For every value of $t$,
the corresponding dropped candidate, by definition, is not allowed to be part of
the winning \elgroup{}. Let
\begin{align*}
D=\{\secured \cup \{c^*\} \not\owns c \mid &(\score_V(c)+ \manipnumber >
\score_V(c^*)) \vee \\
&(\score_V(c) + \manipnumber = \score_V(c^*) \wedge
c\lexpref{} c^*)\}
\end{align*}
be the set of \emph{distinguished candidates}. Each distinguished candidate, if
supported, is preferred over $c^*$ to be selected into the winning \kelgroup{}.
Consequently, the distinguished candidates are all candidates who can
potentially be part of the winning \kelgroup{}. We remark that to fulfill our
assumption that the dropped candidate is not part of a winning \elgroup{}, it is
obligatory to support at least $k-t$ distinguished candidates. Note that
$\secured \cup\: \{c^*\} \cup D$ is not necessarily equal to $C$. The
remaining candidates cannot be part of the winning \kelgroup{} under any
circumstances assuming $t$ kept candidates. Also, set~$D$ might consist of less
than~$k-t$ required candidates (which is the case when there are too few
candidates that, after supported, would outperform~$c^*$). If such a situation
emerges, we skip the respective value of~$t$. Making use of the described
division into $c^*$, $D$, and $\secured$, Procedure~\ref{alg} incrementally
builds set~$X$ of supported candidates associated with an optimal solution until
all possible approvals are used. Observe, that since $k < |C|$ and $\ell < |C|$,
it is guaranteed that for $t=k$ Procedure~\ref{alg} will return a feasible
solution for~$t$; in fact, this solution will always result in a winning
\elgroup{} consisting of all~$t$ kept candidates (irrespective of~$D$).

\begin{algorithm}[t!]
 \DontPrintSemicolon
 \SetKw {Or}{or}
 \caption{A procedure of finding an optimal set of supported candidates.
   \label{alg}}
 \SetKwInput{Input}{Input}\SetKwInOut{Output}{Output}
 \Input{Election $E=(C,V)$; number $\ell$ of approvals in \ellBloc{} rule; size
 $k$ of the winning \kelgroup{}; a partition of $C$ into kept candidates
 $\secured$ (such that $|\secured|=t$ and $k-\ell \leq t \leq k$), a dropped
 candidate $c^*$, and distinguished candidates $D$ (such that $|D|\geq k-t$).}
 \Output{Optimal supported candidates set $X$}

 $X \longleftarrow$ \{the $k-t$ most valuable candidates from $D\}$\; \label{alg:begin-firstX}
$X \longleftarrow X \cup$ \{$\min\{t,\ell-|X|\}$ arbitrary candidates from
$\secured\}$\; \label{alg:end-firstX}

 \If{$\ell\neq|X|$}{
  $A \longleftarrow \{$the $\ell-|X|$ weakest candidates from $C \setminus X\}$\;
  \label{alg:begin-sim}
  $B \longleftarrow \{$top $k$ strongest candidates from $X \cup A\}$\;
 $p \longleftarrow |B \setminus X|$\; \label{alg:end-sim}
 $X \longleftarrow X \cup \{$the $\ell-|X|-p$ weakest candidates from
 $C \setminus X\}$\; \label{alg:start_finalX}
 $X \longleftarrow X \cup \{$the $p$ most valuable candidates from $D \setminus
 X\}$\; \label{alg:end_finalX}
  }
 \Return{$X$}\;
\end{algorithm}

\mypar{Detailed description of the algorithm.}
Before studying Procedure~\ref{alg} in detail, consider~\autoref{fig:alg}
illustrating the procedure on example data. In line~\ref{alg:begin-firstX}, the
procedure builds set~$X$ of supported candidates using the $k-t$ best valued
distinguished candidates. Since only the distinguished candidates might be a
part of the winning \kelgroup{} besides the kept candidates, there is no better
outcome achievable. Then, in line~\ref{alg:end-firstX}, the remaining approvals,
if they exist, are used to support kept candidates. This operation does not
change the resulting \kelgroup{}. Then Procedure~\ref{alg} checks whether all
$\ell$ approvals were used; that is, whether $\ell=|X|$. If not, then there are
exactly $\ell-|X|$ remaining approvals to use. Note that at this stage set $X$
contains $k$ supported candidates who correspond to the best possible
\kelgroup{}, however, without spending all approvals. Let us call this
\kelgroup~$S$. It is possible that there is no way to spend the remaining
$\ell-|X|$ approvals without changing the winning \kelgroup{} $S$. Then
substitutions of candidates occur.  The new candidates in the \kelgroup{} can be
only those that are distinguished and so far unsupported whereas the exchanged
ones can be only so far supported distinguished candidates. This means that each
substitution lowers the overall value of the winning \kelgroup{}. So, the best
what can be achieved is to find the minimal number of substitutions and then
pick the most valuable remaining candidates from $D$ to be substituted. The
minimal number of substitutions can be found by analyzing how many candidates
would be exchanged in the winning \kelgroup{} if the weakest $\ell-|X|$
previously unsupported candidates were supported. The procedure makes such a
simulation and computes the number~$p$ of necessary substitutions, in lines
\ref{alg:begin-sim}-\ref{alg:end-sim}.  Supporting the $\ell-|X|-p$ weakest
unsupported candidates and then the $p$ most valuable so far unsupported
distinguished candidates gives the optimal \kelgroup{} for~$t$~kept candidates
(when all approvals are spent). Note that the number $\ell$ of approvals is
strictly lower than the number of candidates, so one always avoids supporting
$c^*$.

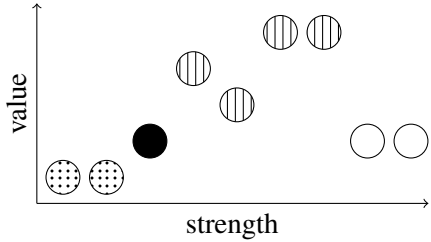
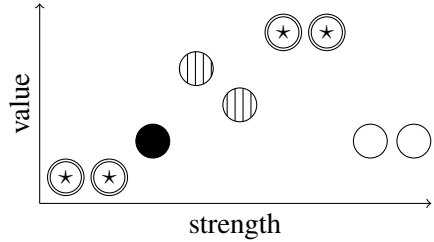
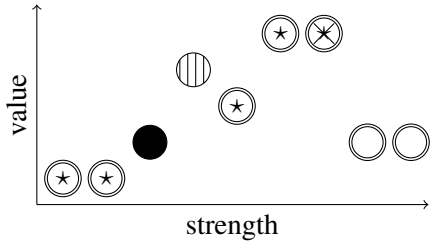
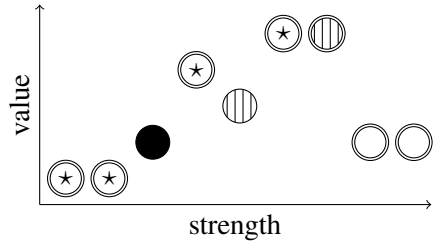
\begin{figure}
 \begin{minipage}[t]{0.48\textwidth}
  \begin{tikzpicture}[scale=\textwidth/10.5cm]
 \pgfdeclarepatternformonly{empty}{\pgfpointorigin}{\pgfpoint{1}{1}}
 {\pgfpoint{1}{1}} {}
 \coordinate (point0) at (0,-0.6);
 \foreach[count=\i] \u/\shape/\pat/\desc in {%
  0/circle/dots/,
  0/circle/dots/,
  1/circle//,
  3/circle/vertical lines/,
  2/circle/vertical lines/,
  4/circle/vertical lines/,
  4/circle/vertical lines/,
  1/circle/empty/,
  1/circle/empty/%
 }
 \node (c\i) at (\i-0.4,\u/1.2) [
   pattern = \pat, \shape, draw,
   minimum size=4.5mm, inner sep=0mm] {\desc};
  \draw [->] (point0) to node [below] {strength}
  (9,-0.6);
  \draw [->] (point0) to node [sloped, above] {value} (0,4);
\end{tikzpicture}
\subcaption{The division of the candidates into the kept candidates (dotted),
 the dropped candidate (filled), the distinguished candidates (vertical lines),
 and the others who cannot be a part of the winning
 \elgroup{}\label{fig:cand-division}.
}
\end{minipage}
\hfill
\begin{minipage}[t]{0.48\textwidth}
  \begin{tikzpicture}[scale=\textwidth/10.5cm]
 \coordinate (point0) at (0,-0.6);
 \foreach[count=\i] \u/\shape/\pat/\desc/\d in {%
  0/circle/empty/$\star$/double,
  0/circle/empty/$\star$/double,
  1/circle///,
  3/circle/vertical lines//,
  2/circle/vertical lines//,
  4/circle/empty/$\star$/double,
  4/circle/empty/$\star$/double,
  1/circle/empty//,
  1/circle/empty//%
 }
 \node (c\i) at (\i-0.4,\u/1.2) [
   pattern = \pat, \shape, draw, \d,
   minimum size=4.5mm, inner sep=0mm] {\desc};
  \draw [->] (point0) to node [below] {strength}
  (9,-0.6);
  \draw [->] (point0) to node [sloped, above] {value} (0,4);
\end{tikzpicture}
\subcaption{An illustration of lines \ref{alg:begin-firstX}-\ref{alg:end-firstX}
of Procedure~\ref{alg}. The double-edged candidates form set $X$. The
starred candidates would form the winning \kelgroup{} if the double-edged
candidates were supported.\label{fig:opt-kegroup}
}
\end{minipage}
\begin{minipage}[t]{0.48\textwidth}
  \begin{tikzpicture}[scale=\textwidth/10.5cm]
 \coordinate (point0) at (0,-0.6);
 \foreach[count=\i] \u/\shape/\pat/\desc/\d in {%
  0/circle/empty/$\star$/double,
  0/circle/empty/$\star$/double,
  1/circle///,
  3/circle/vertical lines//,
  2/circle/empty/$\star$/double,
  4/circle/empty/$\star$/double,
  4/circle/empty/$\star$/double,
  1/circle/empty//double,
  1/circle/empty//double%
 }
 \node (c\i) at (\i-0.4,\u/1.2) [
   pattern = \pat, \shape, draw, \d,
   minimum size=4.5mm, inner sep=0mm, text width=0.3cm, align=center] {\desc};
  \draw [->] (point0) to node [below] {strength}
  (9,-0.6);
  \draw [->] (point0) to node [sloped, above] {value} (0,4);
  \draw (c7.center) node[cross out, draw, minimum size=1mm]{};
\end{tikzpicture}
\subcaption{Supporting the weakest possible candidates to use all approvals.
 The winning \elgroup{} changes. The winners are marked with stars while the
 candidate who is no more in the winning \elgroup{} is crossed out. Such a
 simulation is done in lines \ref{alg:begin-sim}-\ref{alg:end-sim}. As a result,
 the minimum number of substitutions in the \kelgroup{} is computed.
 }
\end{minipage}
\hfill
\begin{minipage}[t]{0.48\textwidth}
  \begin{tikzpicture}[scale=\textwidth/10.5cm]
 \coordinate (point0) at (0,-0.6);
 \foreach[count=\i] \u/\shape/\pat/\desc/\d in {%
  0/circle/empty/$\star$/double,
  0/circle/empty/$\star$/double,
  1/circle///,
  3/circle/empty/$\star$/double,
  2/circle/vertical lines//,
  4/circle/empty/$\star$/double,
  4/circle/vertical lines//double,
  1/circle/empty//double,
  1/circle/empty//double%
 }
 \node (c\i) at (\i-0.4,\u/1.2) [
   pattern = \pat, \shape, draw, \d,
   minimum size=4.5mm, inner sep=0mm] {\desc};
  \draw [->] (point0) to node [below] {strength}
  (9,-0.6);
  \draw [->] (point0) to node [sloped, above] {value} (0,4);
\end{tikzpicture}
\subcaption{An illustration of the solution of the considered case computed by
 Procedure~\ref{alg} in lines~\ref{alg:start_finalX} to~\ref{alg:end_finalX}.
 One candidate from the \kelgroup{} presented in \autoref{fig:opt-kegroup} has
 to be substituted; naturally, it is optimal to pick the most valuable possible
 candidate as a replacement for the substituted one. Supported candidates are
 double-edged and the winning \kelgroup{} is starred.\label{fig:opt-solution}
}
\end{minipage}
\caption{An illustrative example of a run of Procedure~\ref{alg} for $t=2$,
nine candidates, \lBloc{7}, and \namedelgroup{4}. The horizontal position
indicates the strength of a candidate and the vertical position indicates the
value of a candidate. Since the number \manipnumber{} of manipulators
determines only the set of distinguished candidates, we do not specify
\manipnumber{} explicitly. We indicate the set of distinguished candidates
instead. Subfigures~\ref{fig:cand-division}~to~\ref{fig:opt-solution} step by
step present the execution of Procedure~\ref{alg} on the way to find an optimal
\namedelgroup{4}.\label{fig:alg}
}
\end{figure}
The algorithm we presented can be applied also for pessimistic and optimistic
evaluation because of the possibility of simulating these evaluations by a
lexicographic order in time $O(m(\manipnumber+\log(m)))$ (see
\autoref{prop:lex-simulation-util}).
\end{proof}

For \Bloc{}, we will show that manipulators can always vote identically to
achieve an optimal \kelgroup{}. In a nutshell, for every \elgroup{} the
manipulators can only increase the scores of its members by voting exactly for
them. This fact leads to the next corollary.
\begin{corollary}\label{cor:easyCM}
 Let $m$ denote the number of candidates, $n$ denote the number of voters, and
 $r$ denote the number of manipulators. One can solve
 \CMlong{\Bloc{}}{\calF}{\eval} in  $O(m(m + r +n))$ time for any $\eval \in
 \{\util,\candegal\}$ and $\calF{} \in \{\Flex, \Fopt^{\eval},
 \Fpess^{\eval}\}$.
\end{corollary}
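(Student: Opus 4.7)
The plan is to reduce \CM{\Bloc}{\calF}{\eval} to \CCM{\Bloc}{\calF}{\eval} and then invoke \autoref{prop:utilConsistentPoly}. The heart of the argument is the claim that, for \Bloc{} (i.e., $\ell = k$), every ``yes''-instance of \CM{\Bloc}{\calF}{\eval} already admits a successful manipulation in which all~$r$ manipulators cast identical votes.

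To prove this claim, I would take any successful manipulation $W$ that produces a winning \kelgroup~$S$ with $\eval(S) \geq q$ and construct a new vote multiset~$W'$ in which every manipulator approves exactly the candidates in~$S$. Because $|S| = k = \ell$, this yields a well-formed \Bloc{}~vote. Comparing $W$ with $W'$ candidate by candidate: every $c \in S$ has $\score_{W'}(c) = r \geq \score_W(c)$, while every $c \notin S$ has $\score_{W'}(c) = 0 \leq \score_W(c)$. Hence each member of~$S$ weakly outscores each non-member under~$W'$ at least as strongly as under~$W$. Thus~$S$ is still a co-winning \kelgroup{} under $V \cup W'$, and in fact the confirmed/pending partition under~$W'$ is at least as favorable to~$S$ as under~$W$.

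The main obstacle is to make sure the tie-breaking rule still selects (or, for $\Fopt^{\eval}$ and $\Fpess^{\eval}$, can still select) the group~$S$ itself rather than some other co-winning group with worse evaluation. For lexicographic tie-breaking~$\Flex$ this needs a small check: any pending candidate outside~$S$ under~$W'$ was already pending or rejected under~$W$ and had a~strictly lower final score than every member of~$S$ under~$W$ (otherwise it would have been selected instead of some member of~$S$). For $\Fopt^{\eval}$ and $\Fpess^{\eval}$ the rule is free to pick the best group among pending candidates, and since $S$~itself is available, the resulting evaluation under~$W'$ is at least $\eval(S) \geq q$. Conversely, every solution of \CCM{\Bloc}{\calF}{\eval} is trivially a solution of \CM{\Bloc}{\calF}{\eval}, so the two problems are equivalent in this case.

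Having established the equivalence, the corollary follows immediately by applying \autoref{prop:utilConsistentPoly} with $\ell = k$, which solves \CCM{\Bloc}{\calF}{\eval} in $O(m(m + r + n))$ time for every $\eval \in \{\util, \candegal\}$ and $\calF \in \{\Flex, \Fopt^{\eval}, \Fpess^{\eval}\}$. No additional overhead is incurred by the reduction since it is purely conceptual.
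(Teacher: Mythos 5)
Your overall strategy is the same as the paper's: replace an arbitrary successful manipulation~$W$ yielding the winning \kelgroup{}~$S$ by the consistent profile~$W'$ in which every manipulator approves exactly the $k=\ell$ members of~$S$, observe that members of~$S$ can only gain score while non-members can only lose score, and then invoke \autoref{prop:utilConsistentPoly}. (The paper reaches the same endpoint by a sequence of single swaps inside individual manipulative votes, but the monotonicity reasoning is identical.) The score comparison and the treatment of $\Fopt^{\eval}$ are fine.

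Two of your tie-breaking justifications are wrong as stated, however. First, for~\Flex{} you assert that any candidate outside~$S$ that is pending under~$W'$ ``had a strictly lower final score than every member of~$S$ under~$W$.'' This is false: a pending candidate outside~$S$ under~$W$ has the \emph{same} final score as the lowest-scoring members of~$S$ and is merely dispreferred by~$\lexpref$, and it can stay tied under~$W'$ (exactly when it received no manipulator approvals under~$W$ while the relevant member of~$S$ received all~$r$). Second, and more seriously, you claim that $\Fpess^{\eval}$ ``is free to pick the best group,'' so that availability of~$S$ guarantees an evaluation of at least~$\eval(S)$. The pessimistic rule selects a \emph{worst} co-winning group, so availability of~$S$ proves nothing by itself. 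Both issues are repaired by one lemma that your score comparison almost gives you: every \kelgroup{} that is co-winning under~$V\cup W'$ is already co-winning under~$V\cup W$ (a short calculation shows that any member/non-member pair witnessing the contrary would have to be exactly tied under~$W$). This containment implies that the pending set only shrinks, that any surviving tie is still resolved in favour of~$S$ by the order that selected~$S$ under~$W$, and that the pessimistic choice under~$W'$ is at least as good as the pessimistic choice under~$W$, which by assumption evaluates to at least~$q$. Alternatively, for $\eval\in\{\util,\candegal\}$ you can sidestep $\Fopt^{\eval}$ and $\Fpess^{\eval}$ entirely via \autoref{prop:lex-simulation-util}. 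With this repair the equivalence with \CCM{\Bloc}{\calF}{\eval} holds and the corollary follows from \autoref{prop:utilConsistentPoly} as you say.
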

\begin{proof}
 We show that for \CMlong{\Bloc{}}{\calF}{\eval} the manipulators have no
 incentive to deviate from one optimal profile (i.e., they vote in the same
 manner). Let us fix an optimal \kelgroup{} $S$. If there exists a candidate $c
 \in S$ which is not approved by some manipulator~$u^*$, then there exists also
 some candidate $c' \notin S$ which is approved by $u^*$ ($u^*$ approves at most
 ${k-1}$~candidates from $S$). Observe that in the \Bloc{} voting rule by shifting
 a candidate up in a preference order we only increase the candidate's score; as
 a result, we cannot prevent the candidate from winning by doing such a shift.
 Using this observation, we can exchange candidate~$c$ with candidate~$c'$ in
 the preference order of $u^*$ without preventing $c$ from winning. We repeat
 exchanging candidates until all manipulators approve only candidates from $S$.
 Then we obtain an optimal vote by fixing a preference order over those
 candidates arbitrarily (there might be more than one optimal vote but all of
 them place only candidates from set $S$ at the first $k$ places). Concluding,
 we can use the algorithm from \autoref{prop:utilConsistentPoly} which works in
 the given time.
\end{proof}

\subsection{Egalitarian: Hard Even for Simple Tie-Breaking}

In \autoref{sec:hardTB}, we showed that already breaking ties might be
computationally intractable. These intractability results only hold with respect
to the egalitarian evaluation and optimistic manipulators.
We now show that this intractability transfers to coalitional manipulation for
any tie-breaking rule and the egalitarian evaluation. This includes the
pessimistic egalitarian case which we consider to be highly relevant as it
naturally models searching for a ``safe'' voting strategy. 

\begin{proposition}\label{prop:tb-to-cm-reduction}
 For any tie-breaking rule~\calF{}, there is a polynomial-time many-one
 reduction from \TIElong{\egal}{\opt} to
 \CMlong{\text{\ellBloc{}}}{\calF}{\egal}.
\end{proposition}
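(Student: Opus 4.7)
I would construct the CM instance using $\SNTV$ (i.e., the $\ell=1$ case of \ellBloc{}) so that the manipulators' only meaningful strategic choice is which size-$(k-|\Cconf|)$ subset $X \subseteq \Cpend$ to push into the winning committee. The candidate set of the CM instance is the same $C$ as in the \TIE{\egal}{\opt} instance. The non-manipulative voters are constructed to assign every $c \in \Cconf$ a large base score $A$ and every $c \in \Cpend$ a common base score $s$, with $A$ chosen strictly larger than $s$ plus the total number of manipulators (say $A = s + r_C + 1$). Since in $\SNTV$ a single voter contributes exactly one point, these base scores are realised by simply duplicating suitably top-ranked voters, which keeps the instance polynomial.

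To match the number of manipulators, set the number of CM manipulators to $r_C := \max(r_T, k - |\Cconf|)$, where $r_T := |U|$. Let the first $r_T$ manipulators inherit $u_1,\dots,u_{r_T}$ from $U$, and let each remaining ``padding'' manipulator use a fixed utility function $u_0$ with $u_0(c) = q$ for $c \in \Cpend$ and $u_0(c) = 0$ for $c \in \Cconf$. The key property of $u_0$ is that, since every winning committee is of the form $\Cconf \cup X$ with $X \subseteq \Cpend$ and $|X| = k - |\Cconf| \geq 1$, each padding manipulator contributes at least $q$ to its own utility sum, so requiring the egalitarian value to be at least $q$ over all $r_C$ manipulators is equivalent to requiring the egalitarian value of the original $r_T$ manipulators to be at least $q$. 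The threshold $q$ is kept unchanged.

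For correctness, note first that every achievable winning committee must contain $\Cconf$ (the score $A$ dominates regardless of how the manipulators vote) and that its remaining $k - |\Cconf|$ slots must come from $\Cpend$ (no other candidates exist). Hence every CM-outcome is of the form $\Cconf \cup X$ for some $X \subseteq \Cpend$ of the right size, so any CM-strategy achieving egalitarian value at least $q$ directly yields a valid \TIE{\egal}{\opt} extension. Conversely, given any \TIE{} witness $X \subseteq \Cpend$, the strategy in which $|X|$ manipulators top-rank pairwise distinct members of $X$ while all other manipulators top-rank some arbitrary $\Cconf$-candidate produces score $s + 1$ on $X$ and $s$ on $\Cpend \setminus X$, so $\Cconf \cup X$ is the \emph{uniquely} winning committee and the tie-breaking rule \calF{} never actively picks between co-winning committees. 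Thus the same reduction works for every \calF{}.

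The delicate point is the potential mismatch between $r_T$ and $k - |\Cconf|$: if $r_T < k - |\Cconf|$, there are not enough approvals to populate an entire extension, while if $r_T > k - |\Cconf|$, the surplus approvals might create unintended ties within $\Cpend$. Padding manipulators with utility function $u_0$ (which contribute neutrally to the egalitarian threshold) handle the former; the strategy of ``wasting'' any surplus approvals on an already-winning $\Cconf$-candidate handles the latter. Together, these ensure the reduction is polynomial-time and, more importantly, completely indifferent to the choice of \calF{} on the CM side.
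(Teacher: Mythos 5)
Your construction is in the same spirit as the paper's proof---the paper likewise ``pumps'' the manipulator set with neutral utility functions whose total is always at least~$q$, realizes the confirmed/pending score gap via non-manipulative voters, and arranges matters so that the witness committee wins \emph{without} any tie, which is what makes the argument independent of~\calF{}. However, there is a genuine gap: you reduce only to the $\ell=1$ case (\SNTV{}), whereas the proposition is claimed for every member of the \ellBloc{} family (the paper explicitly states that the hardness results ``hold for any $\ell\geq 1$,'' and \autoref{cor:egalCMhard} relies on this). For $\ell\geq 2$ each manipulator must approve exactly $\ell$ candidates, so the coalition is forced to place $\ell\cdot r_C-(k-|\Cconf|)$ surplus approvals somewhere; with your candidate set equal to the original~$C$, these surplus approvals can only land on pending candidates (promoting unwanted ones or creating ties that an adversarial \calF{} could exploit) or on confirmed candidates (but each manipulator can approve each confirmed candidate only once, so this absorbs at most $|\Cconf|$ surplus approvals per vote, which is insufficient when $\ell>|\Cconf|+1$). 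This is exactly why the paper's reduction introduces $(\ell\cdot r)-(k-|\Cconf|)$ dummy candidates with near-zero score to soak up the excess approvals; your construction has no analogue of this and therefore does not extend beyond $\ell=1$.

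A second, smaller hole: your forward-direction strategy has the surplus manipulators ``top-rank some arbitrary $\Cconf$-candidate,'' but the \TIE{\egal}{\opt} input only requires $|\Cconf|<k$, so $\Cconf$ may be empty---and indeed it \emph{is} empty in the instances produced by the hardness reductions of \autoref{thm:egalTB} and \autoref{thm:egalTBcomb} that this proposition is chained with. In the \SNTV{} setting this is repairable (distribute the surplus approvals among the members of~$X$ itself, which only raises their scores further above~$s$), but as written the step is undefined. Fixing both issues essentially forces you back to the paper's dummy-candidate construction.
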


\begin{proof}
We reduce an instance of \TIElong{\egal}{\opt} to
\CMlong{\text{$\ell$-Bloc}}{\calF}{\egal}; however, before we describe the
actual reduction, we present a useful observation concerning
\TIElong{\egal}{\opt} in the next paragraph.

Let us fix an instance $I$ of \TIElong{\egal}{\opt} with a confirmed set $C^+$, a
pending set $P$, a size $k$ of an \elgroup{}, a threshold $q$, and a set of
manipulators represented by a family $U$ of utility functions. We can construct
a new equivalent instance $I'$ of \TIElong{\egal}{\opt} with a
larger set of manipulators' utility functions $U' \supseteq U$. The construction
is a polynomial-time many-one reduction which proves that we can ``pump'' the
number of manipulators arbitrarily for instance $I$. To add a manipulator, it is
enough to set to~$q$ the utility that the manipulator gives to every candidate.
Naturally, such a manipulator cannot have the total utility smaller than~$q$, so
the correct solution for $I$ is also correct for $I'$. Contrarily, when there is
no solution for $I$, it means that for every possible \kelgroup{} $S'$ there is
some manipulator $\bar{u}$ such that $\egal_{\bar{u}}(S')<q$. Consequently, one
cannot find a solution for $I'$ as well, because the set of possible
\kelgroup{}s and their values of egalitarian utility do not change.

Now we can phrase our reduction from \TIElong{\egal}{\opt} to
\CMlong{\text{$\ell$-Bloc}}{\calF}{\egal}. Let us fix an instance $I$ of
\TIElong{\egal}{\opt} with a confirmed set $C^+$, a pending set $P$, a size $k$
of an \elgroup{}, a threshold $q$, and a set $U$ of \manipnumber{} utility
functions. Because of the observation about
``pumping'' instances of \TIElong{\egal}{\opt}, we can assume, without loss of
generality, that $\ell \cdot \manipnumber \geq k - |C^+|$ holds. In the
constructed instance of \CM{\text{$\ell$-Bloc}}{\calF}{\egal} equivalent to $I$,
we build an election that yields sets $P$ and $C^+$ and aim at an \elgroup{} of
size~$k$. However, it is likely that we need to add a set of dummy candidates
that we denote by~$D$. It is important to ensure that the dummy candidates
cannot be the winners of the constructed election. To do so, we keep the score
of each dummy candidate to be at most $1$, the score of each pending candidate
to be $\manipnumber+2$, and the score of each confirmed candidate to be at least
$2\manipnumber+3$. The construction starts from ensuring the scores of the
confirmed candidates. Observe, that in this step we add at most $(2\manipnumber
+ 3) \cdot |C^+|$ voters (in case $\ell=1$). If $\ell>|C^+|$, then we have to
add some dummy candidates in this step. We can upper-bound the number of the
added dummy candidates by $((2\manipnumber+3) \cdot |C^+|)(\ell-1)$ (this bound
is not tight). Analogously, we add new voters such that each pending candidate
has score exactly $\manipnumber+2$. At this step we have the election where we
are able to spend $\ell \cdot \manipnumber \geq k-|C^+|$ approvals. We can
select every possible subset of pending candidates to form the winning
\kelgroup{} by approving candidates in this subset exactly once. However, to be
sure that we are able to distribute all approvals such that there is no tie, we
ensure that the remaining $(\ell \cdot \manipnumber)-(k-|C^+|)$ approvals can be
distributed to some candidates without changing the outcome. To achieve this
goal we add exactly $(\ell \cdot \manipnumber)-(k-|C^+|)$ dummy candidates with
score~$0$. We set the evaluation threshold of the newly constructed instance
to~$q$.

By our construction, we are always able to approve enough pending candidates to
form a \kelgroup{} without considering ties, and we cannot make a dummy candidate
a winner under any circumstances. Thus, if \TIElong{\egal}{\opt} has a solution
$S$, then we approve every candidate $c \in S$ such that $c$ was in the pending
set $P$ before, and we obtain a solution to the reduced instance. In the
opposite case, if there is no such a \kelgroup{} whose egalitarian utility value
is at least $q$, then the corresponding instance of
\CMlong{\text{$\ell$-Bloc}}{\calF}{\egal} also has no solution since the
possible \kelgroup{}s are exactly the same. The reduction runs in polynomial
time.
\end{proof}
Observe that the reduction proving~\autoref{prop:tb-to-cm-reduction} does not
change the \elgroup{} size~$k$. Additionally, the increase of the number of
manipulators in resulting instances is polynomially bounded in the \elgroup{}
size~$k$ of input instances. This is due to the fact that even if we need to
``pump'' an initial instance to achieve~$\ell \cdot \manipnumber \geq k-|C^+|$,
then we add at most
$\left\lceil \frac{k-|C^+|}{\ell} \right\rceil \leq k$ manipulators.
Thus, together with \autoref{thm:egalTB} and \autoref{thm:egalTBcomb},
\autoref{prop:tb-to-cm-reduction} leads to the following corollary.
\begin{corollary}\label{cor:egalCMhard}
 Let~\calF{} be an arbitrary tie-breaking rule. Then,
 \CMlong{\text{\ellBloc{}}}{\calF}{\egal} is \np-hard. Let \manipnumber{} denote
 the number of manipulators, $q$ denote the evaluation threshold and $k$ denote
 the size of an \elgroup{}. Then, parameterized by $\manipnumber+k$,
 \CM{\text{\ellBloc{}}}{\calF}{\egal} is \wone-hard. Parameterized by $k$,
 \CM{\text{\ellBloc{}}}{\calF}{\egal} is \wtwo-hard even if $q=1$ and every
 manipulator only gives either utility one or zero to each candidate.
\end{corollary}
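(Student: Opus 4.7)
My plan is to derive all three hardness claims by combining the reduction from \autoref{prop:tb-to-cm-reduction} with the corresponding hardness results for \TIE{\egal}{\opt} established in \autoref{thm:egalTB} and \autoref{thm:egalTBcomb}. For each claim, I need to check only that the relevant parameter is preserved (or bounded by a function of itself) under the reduction, and that any special structural properties in the input of the \TIElong{\egal}{\opt} hardness instance survive the reduction.

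First, for the plain \np-hardness, I apply the reduction of \autoref{prop:tb-to-cm-reduction} directly: it is polynomial-time many-one, and \autoref{thm:egalTB} states that \TIElong{\egal}{\opt} is \np-hard. Since the reduction is oblivious to the choice of~\calF{}, the claim holds for any tie-breaking rule.

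Second, for \wtwo-hardness parameterized by~$k$, I inspect the reduction carefully: the target \elgroup{} size is the same as the source \elgroup{} size, so $k$ is preserved. Moreover, I need to verify that the special conditions from \autoref{thm:egalTB} (namely $q=1$ and 0/1 utilities for every manipulator) are maintained. The threshold~$q$ is not modified by the reduction. The only manipulators introduced beyond those of the source instance are the ``pumping'' manipulators that assign utility~$q$ to every candidate; with $q=1$ they still use only 0/1 values. Hence the reduction carries the stronger \wtwo-hardness over, yielding \wtwo-hardness of \CM{\text{\ellBloc{}}}{\calF}{\egal} parameterized by~$k$ under the stated restrictions.

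Third, for \wone-hardness parameterized by~$\manipnumber + k$, I use \autoref{thm:egalTBcomb} as the source. The reduction preserves~$k$ exactly and, as the remark preceding the corollary notes, increases the manipulator count by at most $\lceil (k-|C^+|)/\ell \rceil \le k$. Therefore the new combined parameter is bounded by $2k + \manipnumber$, which is a computable function of the source parameter $k+\manipnumber$. This is precisely what is required for a parameterized reduction, so \wone-hardness transfers. The only thing to double-check here is that the ``pumping'' step used inside the reduction (which I apply once, upfront, to enforce $\ell\cdot\manipnumber \geq k - |C^+|$) does not secretly blow up~$\manipnumber$ beyond the claimed bound; the computation above confirms it does not. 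I expect no real obstacle beyond this bookkeeping, since the heavy lifting has already been done in \autoref{prop:tb-to-cm-reduction}, \autoref{thm:egalTB}, and \autoref{thm:egalTBcomb}.
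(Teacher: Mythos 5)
Your proposal is correct and follows exactly the route the paper intends: the corollary is stated as a direct consequence of combining \autoref{prop:tb-to-cm-reduction} with \autoref{thm:egalTB} and \autoref{thm:egalTBcomb}, with the only required bookkeeping being that the reduction preserves~$k$, adds at most $k$ manipulators, keeps the threshold~$q$, and that the pumping manipulators assign utility $q=1$ and hence remain 0/1-valued. Your write-up makes explicit precisely the parameter- and restriction-preservation checks that the paper sketches in the paragraph preceding the corollary.
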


Finally, by using ideas from \autoref{thm:genCMinP} and an adaptation of the ILP
from \autoref{thm:egalTBilp} as a subroutine, we show that, for the combined
parameter ``the number of manipulators and the number of different utility
values'', fixed-parameter tractability of \TIElong{\egal}{\opt} transfers to
coalitional manipulation for both optimistic and pessimistic tie-breaking.
\begin{theorem}\label{egalCMilp}
  Let \manipnumber{} denote the number of manipulators and \udiff{} denote the
  number of different utility values. Parameterized
  by~$\manipnumber+\udiff$, \CMlong{\text{\ellBloc{}}}{\calF}{\egal} with
  $\calF \in \{\Fpess^{\egal}, \Fopt^{\egal{}}\}$ is fixed-parameter tractable.
\end{theorem}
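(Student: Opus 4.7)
The plan is to extend the ILP approach of \autoref{thm:egalTBilp} to the coalitional manipulation setting by combining it with the score-guessing strategy of \autoref{thm:genCMinP}. First, I would enumerate all possible values $z \in \{0, 1, \ldots, n + \manipnumber\}$ for the lowest final score attained by any member of the winning \kelgroup{}; there are polynomially many choices, contributing only a polynomial overhead.

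Fixing $z$, I group every candidate $c$ by its utility type $\tau(c) = (u_1(c), \ldots, u_{\manipnumber}(c))$ (at most $\udiff^{\manipnumber}$ possibilities), and further by its score-role: \emph{fixed-in} if $\score_V(c) > z$, \emph{fixed-out} if $\score_V(c) < z - \manipnumber$, and \emph{flexible} otherwise; flexible candidates are additionally distinguished by the gap $\delta(c) := z - \score_V(c) \in \{0, \ldots, \manipnumber\}$. This yields $O(\udiff^{\manipnumber} \cdot \manipnumber)$ equivalence classes. I then set up an ILP with, for each class, a handful of variables encoding how many of its members end up strictly above $z$ (forced into the \kelgroup{}), exactly at $z$ and selected by tie-breaking, exactly at $z$ but excluded, and strictly below $z$. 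Auxiliary variables track how many manipulator approvals are spent on each class, subject to a total of exactly $\ell \cdot \manipnumber$ and at most $\manipnumber$ approvals per candidate.

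The constraints force the \kelgroup{} to have size exactly $k$, the class counts to respect the supply of candidates in each class, and each candidate's approval budget to be consistent with its assigned score-role (e.g., a flexible candidate assigned ``strictly above $z$'' needs at least $\delta(c) + 1$ approvals). For the \emph{optimistic} rule $\Fopt^{\egal}$, since the tie-breaker picks the best tied completion, it suffices to exhibit any feasible k-elgroup $S$ with $\sum_{c \in S} u_j(c) \geq q$ for every manipulator $j$, which is a direct linear constraint. For the \emph{pessimistic} rule $\Fpess^{\egal}$, we need \emph{every} tied completion to satisfy $\egal(S) \geq q$; decoupling across manipulators, this becomes: for each manipulator~$j$ with fixed-in contribution~$a_j$, the sum of the $k - k^+$ smallest $u_j$-values among the tied candidates is at least $q - a_j$. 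Per manipulator, this condition reduces to a greedy selection over classes sorted by $u_j$, which can be linearised using $O(\udiff^{\manipnumber})$ extra variables picking how many tied candidates of each type contribute to the adversarial sum.

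Overall, the ILP has $f(\manipnumber + \udiff)$ variables, so the algorithm of \citet{Len83} solves it in \fpt{} time; iterating over the $O(n + \manipnumber)$ guesses of $z$ and returning \textsc{Yes} if any guess admits a feasible solution yields the claimed \fpt{} algorithm. The main obstacle I expect is the pessimistic encoding: the ``for every tied completion'' requirement is inherently $\forall$-quantified, but the observation that the adversarial completion simply picks the tied candidates with the smallest $u_j$-values (independently for each $j$) is what turns the obstruction into an ordinary bunch of linear constraints.
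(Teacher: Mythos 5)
Your proposal is correct and follows essentially the same route as the paper's proof: guess a polynomial amount of information about the final score profile, partition candidates by utility type and score gap into $O(\udiff^{\manipnumber}\cdot\manipnumber)$ classes, encode the manipulation as an ILP with $f(\manipnumber+\udiff)$ variables solved via \citet{Len83}, treat the optimistic case by directly maximizing over one witness completion, and treat the pessimistic case by observing that the adversarial completion decouples per manipulator into a greedy smallest-utility selection (which the paper linearises with its $\used$/$\fused$ indicator variables). The only cosmetic difference is that the paper additionally enumerates the numbers of promoted and border candidates in the outer loop rather than leaving them to the ILP.
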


\begin{proof}
 \newcommand{\tiednr}{\ensuremath{b}}
 \newcommand{\toconfirm}{\ensuremath{p}}
 \newcommand{\xfinmem}{\ensuremath{x_i^{j\bullet}}}
 \newcommand{\xvarname}{\ensuremath{x}}
 \newcommand{\xpending}{\ensuremath{\xvarname_i^{j}}}
 \newcommand{\xconfirmed}{\ensuremath{\xvarname_i^{j+}}}
 \newcommand{\worstmanip}{\ensuremath{r^*}}
 \newcommand{\worstutil}{\ensuremath{s^*}}
 \newcommand{\typesset}{\ensuremath{\mathcal{T}}}
 \newcommand{\typevect}{\ensuremath{t}}
 \newcommand{\typecandidates}{\ensuremath{T}}
 \newcommand{\typesnr}{\ensuremath{|\typesset|}}
 \newcommand{\lowestscore}{\ensuremath{z}}
 \newcommand{\manipnumberandzero}{{\ensuremath{[\manipnumber] \cup \{0\}}}}
 \newcommand{\reqapprov}{\ensuremath{o}}
 \newcommand{\notspendapprov}{\ensuremath{\bar{o}}}
 \newcommand{\elgroupsize}{\ensuremath{k}}
 \newcommand{\groupsset}{\ensuremath{\matchal{G}}}
 \newcommand{\groupvar}{\ensuremath{G}}
 \newcommand{\group}{\ensuremath{\groupvar_i^j}}
 \newcommand{\pprocedure}{\ensuremath{\mathcal{P}}}

 In a nutshell, we divide \CM{\text{\ellBloc{}}}{\Fpess^{\egal}}{\egal} and
 \CM{\text{\ellBloc{}}}{\Fopt^{\egal}}{\egal} into subproblems solvable in
 \fpt{} time with respect to the combined parameter ``number of manipulators and
 number of different utility values.'' We show that solving polynomially many
 subproblems is enough to solve the problems.

 \mypar{The main idea.}
 We split the proof into two parts.
 In the first part, we define subproblems and show how to find a solution
 assuming that the subproblems are solvable in \fpt{} time with respect to the
 parameter. In the second part, we show that, indeed, the subproblems are
 fixed-parameter tractable using their ILP formulations. The inputs for
 \CM{\text{\ellBloc{}}}{\Fpess^{\egal}}{\egal} and
 \CM{\text{\ellBloc{}}}{\Fopt^{\egal}}{\egal} are the same, so let us consider
 an arbitrary input with an election $E=(C, V)$ where $|V|=n$, $|C|=m$, a size
 \elgroupsize{} of an \elitegroup, and \manipnumber{} manipulators represented
 by a set \namedorderedsetof{U}{u}{\manipnumber} of their utility functions. Let
 \udiff{} be the number of different utility values.

 An election resulting from a manipulation and a corresponding \kelgroup{}
 emerging from the manipulation can be described by three non-negative integer
 parameters: 
 \begin{enumerate}
  \item the lowest final score $\lowestscore$ of any member of the
   \kelgroup{};
  \item the number \toconfirm{} of \emph{promoted candidates} from the
   \kelgroup{} with a score higher than~\lowestscore{} that, at the same time,
   have score at most \lowestscore{} without taking manipulative votes into
   consideration;
   \label{it:promoted}
  \item the number \tiednr{} of \emph{border candidates} with score
   \lowestscore{}\label{it:border}.
 \end{enumerate}
 Observe that if as a result of a manipulation the lowest final score of members
 in a final \kelgroup{} is $\lowestscore$, then the promoted candidates are part
 of the \kelgroup{} regardless of the tie-breaking method used. For border
 candidates, however, it might be necessary to run the tie-breaking rule to
 determine the \kelgroup{}. In other words, border candidates become pending
 candidates unless all of them are part of the \kelgroup{}. By definition, no
 candidate scoring lower than the border candidates is a member of the
 \kelgroup; thus, the term border candidates. From now on, we refer to the
 election situation characterized by parameters $\lowestscore, \toconfirm,
 \tiednr{}$ as a \emph{(input) state}. Additionally, we call a set of
 manipulators' votes a \emph{manipulation}.

 \mypar{Part 1: High-level description of the algorithm.}
 For now, we assume that there is a procedure~\pprocedure{} which runs in~\fpt{}
 time with respect to the combined parameter ``number of manipulators and number
 of different utility values.'' Procedure~\pprocedure, takes values
 \lowestscore, \toconfirm, \tiednr{} and an instance of the problem, and finds a
 manipulation which leads to a~\kelgroup{} maximizing the egalitarian utility
 under either egalitarian optimistic or egalitarian pessimistic tie-breaking
 with respect to the input state. If such a manipulation does not exist, then
 procedure \pprocedure{} returns ``no.'' The algorithm solving
 \CM{\text{\ellBloc{}}}{\Fpess^{\egal}}{\egal} and
 \CM{\text{\ellBloc{}}}{\Fopt^{\egal}}{\egal} runs \pprocedure{} for all
 possible combinations of values \lowestscore, \toconfirm{}, and \tiednr{}.
 Eventually, it chooses the best manipulation returned by \pprocedure{} or
 returns ``no'' if \pprocedure{} always returned so. Since the value of
 \lowestscore{} is at most $|V + W|$ and \tiednr{} together with \toconfirm{}
 are both upper-bounded by the number of candidates, we run \pprocedure{} at
 most $(n+\manipnumber)m^2$ times. Because the input size grows polynomially
 with respect to the growth of values \manipnumber{}, $m$, and $n$, the overall
 algorithm runs in \fpt{} time with respect to the combined parameter ``number
 of manipulators and number of different utility values.''

 \mypar{Part 2: Basics and preprocessing for the ILP.}
 To complete the proof we describe procedure \pprocedure{} used by the above
 algorithm. In short, the procedure builds and solves an ILP program that finds
 a manipulation leading to the state described by the input values. Before we
 describe the procedure in details, we start with some notation.
 Fix some values of \lowestscore{}, \tiednr{}, \toconfirm{} and some election
 $E=(C,V)$ that altogether form the input of \pprocedure{}. For each candidate
 $c \in C$, let a size-\manipnumber{} vector~$t=(u_1(c),\allowbreak u_2(c),
 \ldots ,u_\manipnumber(c))$, referred to as a \emph{type vector}, define the
 \emph{type}~of~$c$. We denote the set of all possible type vectors by
 \namedorderedsetof{\typesset}{\typevect}{\typesnr}. Observe that $\typesnr
 \leq \udiff^\manipnumber$. With each type vector $\typevect_i$, $i \in
 [\typesnr]$, we associate a set~$\typecandidates_i$ containing only
 candidates of type $\typevect_i$. We also distinguish the candidates with
 respect to their initial score compared~to~\lowestscore{}. A~candidate of type
 $\typevect_i \in \typesset$, $i \in \typesnr$, with score $\lowestscore-j$, $j
 \in \manipnumberandzero$, belongs to group \group{}. We denote all candidates
 with a score (excluding manipulative votes) higher than \lowestscore{} by
 $C^+$, whereas by $C^-$ we denote the candidates with a score (excluding
 manipulative votes) strictly lower than $\lowestscore{}-\manipnumber$. For each
 type $\typevect_i \in \typesset$ of a candidate, we define function
 $\obligatory(\typevect_i)=|C^+ \cap \typecandidates_i|$, which gives the number
 of candidates of type $\typevect{}_i$ that are obligatory part of the winning
 \kelgroup{}.

 At the beginning, procedure \pprocedure{} tests whether the input values
 \lowestscore, \tiednr, and \toconfirm{} represent a correct state. From the
 fact that there has to be at least one candidate with score \lowestscore{}, we
 get the upper bound $k-|C^+|-1$ for value $\toconfirm$. To have enough
 candidates to complete the \kelgroup{}, we need at least $k-|C^+|-\toconfirm$
 candidates with score \lowestscore{} after the manipulation which gives $\tiednr
 \geq k-|C^+|-\toconfirm$. Finally, the state is incorrect if the corresponding
 set $C^+$ contains $k$ or more candidates. If the input values are incorrect,
 then \pprocedure{} returns ``no.'' Otherwise, \pprocedure{} continues with
 building a corresponding ILP program. We give two separate ILP programs---one
 for the optimistic egalitarian tie-breaking and the other one for the
 pessimistic egalitarian tie-breaking. Both programs consist of two parts. The
 first part models all possible manipulations leading to the state described by
 values \lowestscore, \toconfirm, and \tiednr{}. The second one is responsible
 for selecting the best \kelgroup{} assuming the particular tie-breaking and
 considering all possible manipulations according to the first part.  Although
 the whole programs are different from each other, the first parts stay the
 same. Thus, we postpone distinguishing between the programs until we describe
 the second parts. For the sake of readability, we present the ILP programs step
 by step.
 
 \mypar{ILP: Common part.}
 For each group $\group$, $i \in [\typesnr]$, $j \in \manipnumberandzero$, we
 introduce variables \xpending{} and \xconfirmed{} indicating the numbers of,
 respectively, border and promoted candidates from group \group{}. Additionally,
 we introduce variables \reqapprov{} and \notspendapprov{}. The former
 represents the number of approvals used to get the obligatory numbers of border
 and promoted candidates. The latter indicates the number of approvals which are
 to be spent without changing the final \kelgroup{} (thus, in some sense a
 complement of the obligatory approvals) resulting from the manipulation (e.g.,
 approving candidates in~$C^+$, who are part of the winning \kelgroup{} anyway,
 cannot change the outcome). We begin our ILP program with ensuring that the
 values of \xpending{} and \xconfirmed{} are feasible:
 \begin{align}
  \forall{\typevect_i \in \typesset, j \in \manipnumberandzero}\colon&\xconfirmed{} +
  \xpending{} \leq |\group|, \label{eq:sum_selected}\\
  \sum_{\typevect_i \in \typesset, j \in \manipnumberandzero}\, &\xconfirmed{} =
  \toconfirm{}, \label{eq:confirmed_selected}\\ 
  \sum_{\typevect_i \in \typesset, j \in \manipnumberandzero}\, &\xpending{} =
  \tiednr{}, \label{eq:pending_selected}\\
  \forall{\typevect_i \in \typesset}\colon &\xvarname^{0+}_i + \xvarname^0_i =
  |\groupvar^0_i|,
  \label{eq:z_score_picked}\\
  \forall{\typevect_i \in \typesset}\colon
  &\xvarname^{\manipnumber+}_i = 0. \label{eq:z-r_not_confirmed}
 \end{align}
 The expressions ensure that exactly \toconfirm{} candidates are selected to be
 promoted \eqref{eq:confirmed_selected}, exactly \tiednr{} candidates are
 selected to be border ones \eqref{eq:pending_selected}, and that, for every
 group, the sum of border and promoted candidates is not greater that the
 cardinality of the group \eqref{eq:sum_selected}. The last two formulae ensure
 that candidates who have score~\lowestscore{} are either promoted or
 border candidates \eqref{eq:z_score_picked} and that candidates with initial
 score $\lowestscore{}-\manipnumber$ cannot be promoted (i.e., get a score
 higher than \lowestscore{}) \eqref{eq:z-r_not_confirmed}. Next, we add the
 constraints concerning the number of approvals we need to use to perform the
 manipulation described by all variables \xpending{} and \xconfirmed{}. We
 start with ensuring that the manipulation does not exceed the number of
 possible approvals. As mentioned earlier, we store the number of required
 approvals using variable \reqapprov.
 \begin{align}
  \reqapprov{}&=\sum_{\typevect_i \in \typesset,\ j \in \manipnumberandzero}
  \left( \xpending \cdot j + \xconfirmed \cdot (j+1)\right), \\
  \reqapprov{}&\leq \ell\manipnumber.
 \end{align} 
 \noindent Then, we model spending the \notspendapprov{} remaining votes (if
 any) to use all approvals.
 \begin{align}
  \begin{split}
   \notspendapprov&\leq \manipnumber(|C^- \cup C^+|) + \sum_{\typevect_i \in
   \typesset} \sum_{j \in [\manipnumber]} \left(|\group| - \xpending -
   \xconfirmed \right) \left( j-1 \right) \label{eq:wasted_approvals}\\
   &+\sum_{\typevect_i \in \typesset,j \in [\manipnumber]} \left( \xconfirmed
   \cdot (\manipnumber-j-1)\right),
  \end{split}\\
  \notspendapprov&+\reqapprov= \ell\manipnumber. \label{eq:last_of_ILP}
 \end{align}
 The upper bound on the number of votes one can spend without changing the
 outcome presented in equation~\eqref{eq:wasted_approvals} consists of three
 summands. The first one indicates the number of approvals which can be spent
 for candidates whose initial score was either too high or too low to make a
 difference in the outcome of the election resulting from the manipulation. The
 second summand counts the approvals we can spend for potential promoted
 and border candidates that eventually are not part of the winning \kelgroup{};
 we can give them less approvals than are needed to make them border
 candidates. The last summand represents the number of additional approvals that
 we can spend on the promoted candidates to reach the maximum of $r$ approvals
 per candidate. This completes the first part of the ILP program in which we
 modeled the possible variants of promoted and border candidates for the fixed
 state $(\lowestscore{}, \tiednr{}, \toconfirm{}$). 

 \mypar{ILP extension for optimistic egalitarian tie-breaking.}
 In the second part, we find the final \kelgroup{} by completing it with the
 border candidates according to the particular tie-breaking mechanism. Let us
 first focus on the case of the optimistic egalitarian tie-breaking. We
 introduce constraints allowing us to maximize the total egalitarian utility
 value of the final \elgroup{}; namely, for each group $\group$, $i \in
 [\typesnr]$, $j \in \manipnumberandzero$, we add a non-negative, integral variable
 \xfinmem{} indicating the number of border candidates of the given group chosen
 to be in the final \kelgroup{}.  The following constraints ensure that we select
 exactly $k-|C^+|-\toconfirm{}$ border candidates to complete the winning
 \elgroup{} and that, for each group \group{}, we do not select more candidates
 than available.
 \begin{align}
   &\sum_{\typevect_i \in \typesset,\ j \in \manipnumberandzero}\,
   \xfinmem{}=k-|C^+|-\toconfirm{}, \\
   \forall{\typevect_i \in \typesset,\ j \in \manipnumberandzero}&\colon\xfinmem{} \leq
   \xpending.
 \end{align}
 \noindent To complete the description of the ILP, we add the final expression
 defining the egalitarian utility $s$ of the final \kelitegroup{}. The goal of
 the ILP program is to maximize $s$.
 \begin{align}
   \forall{q \in [\manipnumber]}\colon \sum_{\typevect_i \in \typesset,\ j \in
    \manipnumberandzero} t_i[q]\cdot(\xconfirmed + \xfinmem) 
    + \sum_{\typevect_i \in \typesset} t_i[q] \cdot \obligatory(t_i) \geq s.
 \end{align}
 Since the goal is to maximize $s$, our program simulates the egalitarian
 optimistic tie-breaking. 
 
 \mypar{ILP extension for pessimistic egalitarian tie-breaking.}
 \newcommand{\designatedcandidate}{\ensuremath{d_i^q}}
 \newcommand{\tiebreakingwinnersnr}{\ensuremath{k-\toconfirm-|C^+|}} 
 To solve a subproblem for the case of pessimistic egalitarian
 tie-breaking, we need a different approach. We start with an additional
 notation. 
 For each type of
 candidate $\typevect_i \in \typesset$, let $\tiednr_i=\sum_{j \in
 [\manipnumber] \cup \{0\}} \xpending$ denote the number of border candidates of
 this type. For each
 type $\typevect_i \in \typesset$ and manipulator $u_q$, $q \in [\manipnumber]$,
 we introduce a new integer variable~\designatedcandidate{}.  Its value
 corresponds to the number of border candidates of type $\typevect_i$ who are
 part of the worst possible winning \kelgroup{} according to manipulator's $u_q$
 preferences; we call these candidates the \emph{designated candidates} of type
 $\typevect_i$ of manipulator~$u_q$. For each variable \designatedcandidate{}, we
 define a binary variable $\used[\designatedcandidate]$ which has value one if
 at least one candidate of type $\typevect_i$ is a designated candidate of
 manipulator~$u_q$.  Similarly, we define $\fused[\designatedcandidate]$ to
 indicate that all candidates of type $\typevect_i$ are designated by
 manipulator~$u_q$. To give a program which solves the case of pessimistic
 egalitarian tie-breaking, we copy the first part of the previous ILP program
 (expressions from \eqref{eq:sum_selected} to \eqref{eq:last_of_ILP}) and add new
 constraints.  First of all, we ensure that each manipulator designates not more
 than the number of available border candidates from each type and that every
 manipulator designates exactly \tiebreakingwinnersnr{} candidates.
 \begin{align}
  \forall{\typevect_i \in \typesset, q \in [\manipnumber] }\colon& 0 \leq
  \designatedcandidate \leq \tiednr_i,\\
  \forall{q \in [\manipnumber]}\colon&
  \sum_{\typevect_i \in \typesset}\, \designatedcandidate =
  \tiebreakingwinnersnr.
 \end{align}
 \noindent The following forces the semantics of the variables $\used$; that is,
 a variable $\used[\designatedcandidate]$, $i \in [|\typesset|]$, $q \in
 [\manipnumber]$, has value one if and only if variable \designatedcandidate{} is
 at least one.
 \begin{align}
  \forall{\typevect_i \in \typesset, q \in [\manipnumber]}\colon&
  \used[\designatedcandidate] \leq \designatedcandidate,\\
  \forall{\typevect_i \in \typesset, q \in [\manipnumber]}\colon&
  \used[\designatedcandidate]n \geq \designatedcandidate.
 \end{align}
 \noindent Similarly, for the variables $\fused$, we ensure that
 $\fused[\designatedcandidate]$, $i \in [|\typesset|]$, $q \in [\manipnumber]$,
 is one if and only if manipulator $u_q$ designates all available candidates of
 type $\typevect_i$.
 \begin{align}
  \forall{\typevect_i \in \typesset, q \in [\manipnumber]}\colon&
  \fused[\designatedcandidate] \geq 1 - (\tiednr_i - \designatedcandidate),\\
  \forall{\typevect_i \in \typesset, q \in [\manipnumber]}\colon&
  \tiednr_i - \designatedcandidate \leq n(1-\fused[\designatedcandidate]).
 \end{align}
 \noindent Since our task is to perform pessimistic tie-breaking, we have to
 ensure that the designated candidates for each manipulator are the candidates
 whom the manipulator gives the least utility. We impose it by forcing that the
 more valuable candidates (for a particular manipulator) are used only when all
 candidates of all less valuable types (for the manipulator) are used (i.e.,
 they are fully used). To achieve this we make use of the $\used{}$ and
 $\fused{}$ variables in the following constraint.
 \begin{align}
  \forall{q \in [\manipnumber] \cup \{0\}}\forall{\typevect_i, \typevect_{i'}
  \in \typesset{} \colon
   t_i[q]>t_{i'}[q]}\colon&
   \used[d_i^q] \leq \fused[d_{i'}^q].
 \end{align}
 Finally, we give the last expression where $s$ represents the pessimistic
 egalitarian \kelgroup{}'s utility which our ILP program wants to maximize:
 
 \begin{align}
  \forall{q \in [\manipnumber]}\colon \sum_{\typevect_i \in \typesset}
  (\designatedcandidate + \obligatory(\typevect_i))
  \cdot t_i[q] \geq s.
 \end{align}

 The ILP programs, for both tie-breaking variants, use at most $O(\manipnumber t)$
 variables so, according to \citet{Len83}, are in \fpt{} with respect to the
 combined parameter~$\manipnumber + \udiff$. Consequently, procedure
 \pprocedure{} is in \fpt{} with respect to the same parameter. 
 
\end{proof}

After presenting the \fpt{} result for egalitarian coalitional manipulation with
optimistic or pessimistic egalitarian tie-breaking in~\autoref{egalCMilp}, we
proceed with an analogous result for egalitarian coalitional manipulation with
one of the four remaining tie-breaking rules (that is, \{optimistic,
pessimistic\} $\times$ \{utilitiarian, can\-di\-date-wise utilitarian\})
in~\autoref{egalCMilplex}.

\begin{theorem}\label{egalCMilplex}
 Let \manipnumber{} denote the number of manipulators and \udiff{} denote the
 number of different utility values. Parameterized
 by~$\manipnumber+\udiff$, \CMlong{\text{\ellBloc{}}}{\Flex}{\egal} is
 fixed-parameter tractable.
\end{theorem}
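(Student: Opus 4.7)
The plan is to adapt the ILP-based framework of the proof of \autoref{egalCMilp} to the lexicographic tie-breaking case. As there, I would enumerate over all feasible state tuples $(z, p, b)$ describing the lowest final score $z$ of a winning \kelgroup{} member, the number $p$ of promoted candidates (score strictly greater than $z$ after the manipulation, but at most $z$ before), and the number $b$ of border candidates (final score exactly $z$); there are $O((n+\manipnumber)m^2)$ such tuples. The new ingredient needed for \Flex{} is to encode the fact that the border candidates belonging to the winning \kelgroup{} are exactly the $k - |C^+| - p$ border candidates highest in the fixed order $\lexpref$.

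To do this with only a few ILP variables, I would additionally guess a \emph{pivot} candidate $c^* \in C$, intended to be the worst-in-$\lexpref$ border candidate that wins tie-breaking (equivalently, the $(k - |C^+| - p)$-th border candidate along $\lexpref$). This adds at most a factor of $m$ to the number of subproblems; the degenerate case $b = k - |C^+| - p = 0$, in which no tie-breaking is needed, can be handled separately by an ILP without any border variables. Once $c^*$ is fixed, every other candidate is either strictly above $c^*$ or strictly below $c^*$ in $\lexpref$, and this split is determined by the input.

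With $c^*$ fixed, I would refine each group from the proof of \autoref{egalCMilp} (candidates bundled by their utility-type vector $(u_1(c), \ldots, u_\manipnumber(c))$ together with their initial score offset $j \in \{0, 1, \ldots, \manipnumber\}$) into two sub-groups, ``above $c^*$'' and ``below $c^*$'' in $\lexpref$; the sub-group sizes are computable from the input. For each sub-group I introduce analogues of the promoted-count and border-count variables of \autoref{egalCMilp}, so four variables per original group in total. The capacity constraints, the total-promoted and total-border constraints, and the approvals-accounting equations of \autoref{egalCMilp} carry over directly. The crucial new constraint is that the number of border candidates summed over all ``above'' sub-groups, plus one (for $c^*$ itself), equals $k - |C^+| - p$: this ensures that every border above $c^*$ is a winner and every border below $c^*$ is a loser, exactly as \Flex{} prescribes. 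Infeasible combinations (for instance, when $c^*$'s initial score is incompatible with $z$, or $c^* \in C^+$) are detected and discarded. The egalitarian objective variable $s$ is lower-bounded per manipulator by their total utility from $C^+$, from promoted candidates, from border candidates above $c^*$, and from $c^*$ itself, and we maximize~$s$. The resulting ILP has $O(\manipnumber \cdot \udiff^{\manipnumber})$ variables, so it is solvable in \fpt{} time by~\citet{Len83}, and iterating over all polynomially many $(z,p,b,c^*)$ guesses remains \fpt{} in $\manipnumber + \udiff$.

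The main conceptual obstacle is encoding \Flex{} tie-breaking with only \fpt{}-many ILP variables: unlike $\Fopt^{\egal}$ and $\Fpess^{\egal}$ handled in \autoref{egalCMilp}, where the tie-breaking ``follows'' the utility functions and can be modeled by per-manipulator indicator variables on utility types, $\Flex{}$ imposes a rigid global order on all candidates which a priori does not respect the type-based bundling. The pivot trick sidesteps this: by guessing the single candidate at the tie-breaking boundary, the global lex constraint collapses into a per-group above/below partition, which is compatible with counting candidates by type. Once this encoding is in place, the rest of the argument is a straightforward adaptation of \autoref{egalCMilp}.
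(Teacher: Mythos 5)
Your proposal is correct and takes essentially the same approach as the paper: the paper likewise guesses the lowest winning score $z$ together with a pivot candidate $\hat{c}$ (the least $\lexpref$-preferred winner with final score $z$), splits each utility-type/score-offset group into candidates preferred and not preferred to $\hat{c}$, and solves an ILP with $O(\manipnumber\cdot\udiff^{\manipnumber})$ variables via Lenstra's theorem. The only cosmetic difference is that you additionally enumerate the counts $p$ and $b$ and keep separate promoted/border variables, whereas the paper folds this into the ILP by charging position-dependent approval costs ($j$ versus $j+1$) to the two sides of the pivot.
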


\begin{proof}
 \newcommand{\groupvar}{\ensuremath{G}}
 \newcommand{\group}{\ensuremath{\groupvar_i^j}}
 \newcommand{\typesset}{\ensuremath{\mathcal{T}}}
 \newcommand{\typevect}{\ensuremath{t}}
 \newcommand{\typesnr}{\ensuremath{|\typesset|}}
 \newcommand{\typecandidates}{\ensuremath{T}}
 \newcommand{\manipnumberandzero}{{\ensuremath{[\manipnumber] \cup \{0\}}}}

  The general proof idea is to show an algorithm which solves problem
  \CMlong{\text{\ellBloc{}}}{\Flex}{\egal}.

  To solve \CMlong{\text{\ellBloc{}}}{\Flex}{\egal} we create an ILP program for
  all possible value combinations of the following parameters:
  \begin{itemize}
   \item the lowest final score~$z < |V \cup W|$ of any member of the
     \kelgroup{} and
   \item the candidate~$\hat{c}$ which is the least preferred member of the
     \kelgroup{} with final score~$z$ with respect to the tie-breaking
     rule~\Flex.
  \end{itemize}
 
 Having $z$~fixed, let $C^{+}$~denote the set of candidates which get at least
 $z+1$ approvals from the non-manipulative voters or which are preferred
 to~$\hat{c}$ with respect to~\calF and get exactly $z$~approvals from the
 non-manipulative voters. Assuming that the combination of parameter values is
 correct, all candidates from $C^{+} \cup \{\hat{c}\}$ must belong to the
 \kelgroup{}. We check whether $|C^+| < k$, that is, whether there is space for
 candidate~$\hat{c}$ in the \kelgroup{}. If the check fails, then we skip the
 corresponding combination of solution parameter values. Next, we ensure that
 $\hat{c}$ obtains final score exactly~$z$.  If $\hat{c}$~receives less
 than~$z-\manipnumber$ or more than~$z$ approvals from non-manipulative voters,
 then we discard this combination of solution parameter values. Otherwise,
 let~$\hat{s}:=z - \score_V(\hat{c})$ denote number of additional approvals
 candidate~$\hat{c}$ needs in order to get final score~$z$.
  
 We define the type of some candidate $c_i$ to be the size-\manipnumber{} vector
 $t_j=(u_1(c_i),\allowbreak u_2(c_i), \ldots ,u_\manipnumber(c_i))$. We denote
 by \namedorderedsetof{\typesset}{\typevect}{\typesnr} the set of all
 possible types. Observe that $\typesnr \leq \udiff^\manipnumber$. With each
 type vector $\typevect_i$, $i \in [\typesnr]$, we associate a set
 $\typecandidates_i$ containing only the candidates of type $\typevect_i$.
 Having $\hat{c}$ (and $z$) fixed, we distinguish candidates according to types
 further. For $j \in \manipnumberandzero$, all candidates with score $z-j$ that
 are preferred (resp. not preferred) to candidate $\hat{c}$ according to
 \calF{}, fall into group $\groupvar_{i}^{j+}$ (resp.\ $\groupvar_{i}^{j-}$).
 For
 each type $\typevect_i \in \typesset$ of a candidate, we define function
 $\obligatory(\typevect_i)=|C^+ \cap \typecandidates_i|$ which gives the number
 of candidates of type $\typevect{}_i$ who are obligatory part of the winning
 \kelgroup{}. We denote by $C_r$ candidates which do not fall to any of such
 groups. 
  
 We give the following ILP formulation of the problem using
 $2\manipnumber \typesnr+2$ variables. For all groups $\groupvar_i^{j+}$ and
 $\groupvar_i^{j-}$, $i \in \typesnr$, $j \in \manipnumberandzero$, we introduce
 variables $x_i^{j+}$ and $x_i^{j-}$ respectively. The variables indicate,
 respectively, the number of candidates from groups $\groupvar_i^{j+}$ and
 $\groupvar_i^{j-}$ whom we push to the winning \kelgroup{}. Also, we introduce
 two additional variables $s$ and $u$. The former one represents the minimal
 value of the total utility achieved by manipulators. The latter one indicates
 the number of votes which were spent without changing the outcome. To shorten
 the ILP we define
 $$
 \mful_{z}^{\hat{c}}:=\sum_{\typevect_i \in \typesset ,j \in
 [\manipnumber]} x_i^{j+} \cdot j + \sum_{\typevect_i \in \typesset, j \in
 [\manipnumber-1] \cup \{0\}} x_i^{j-} \cdot (j+1).
 $$
 Intuitively, $\mful_{z}^{\hat{c}}$ is the number of approvals used to make
 potential winners the winners. Also, we define
 $$
 \fbid_{z}^{\hat{c}}:=\sum_{\typevect_i \in \typesset, j \in
 \manipnumberandzero} \left [(\manipnumber-j+1)(|\groupvar_i^{j+}|-x_i^{j+})+
 (\manipnumber-j)(|\groupvar_i^{j-}|-x_i^{j-}) \right].
 $$
 $\fbid_{z}^{\hat{c}}$ represents the number of approvals which cannot be used
 if one wants to avoid pushing candidates outside of the solution (given by
 values of the variables $x$) to the winning \kelgroup{}; for example, if some
 candidate $c$ needs $j$ approvals to be part of the winning committee, then we
 subtract $r-j+1$ approvals from the whole pool of $r$ approvals for this
 candidate because we can use only $j-1$ approvals not to push~$c$ into the
 \kelgroup{}. We define the following constraints to construct our program the
 goal of which is to maximize
 $s$:
 \begin{align}
  \forall{\typevect_i \in \typesset, j \in \manipnumberandzero, \bullet \in
  \{+,-\}}\colon& x_i^{j\bullet} \le |\groupvar_i^{j\bullet}|
  \label{for:upper-cand},\\
  \forall{\typevect_i \in \typesset}\colon&x_i^{z-} = 0 \label{for:no-unn},\\
  \forall{\typevect_i \in \typesset}\colon&x_i^{0+} = |\groupvar_i^{0+}|
  \label{for:best-border-in},\\
  &\mful_{z}^{\hat{c}} \le \manipnumber \cdot \ell - \hat{s}
  \label{for:upper-mful},\\
  &u \le (|C|-1)\manipnumber{} - \mful_{z}^{\hat{c}} - \fbid_{z}^{\hat{c}},
  \label{for:upper-wasted}\\
  &u + \mful_{z}^{\hat{c}} = \manipnumber \cdot \ell - \hat{s}
  \label{for:upper-all},\\
  &\sum_{\typevect_i \in \typesset, j \in \manipnumberandzero, \bullet \in \{+,-\}}
  x_i^{j\bullet} = k - |C^+| - 1 \label{for:kelgroup},\\
  \forall{q \in [\manipnumber]} \colon &\sum_{\typevect_i \in \typesset, j \in
   \manipnumberandzero,
   \bullet \in \{+,-\}} x_i^{j\bullet} \cdot t_i[q] + \nonumber\\
   &\sum_{\typevect_i \in \typesset} t_i[q] \cdot \obligatory(t_i) \geq s.
  \label{for:utils}
 \end{align}
	 
 Constraint~\eqref{for:upper-cand} ensures that the candidates picked into a
 solution are available and can be part of the solution. Observe that candidates
 in $\groupvar_i^{0+}$ have to be part of the solution and candidates in
 $\groupvar_i^{z-}$ cannot be part of the solution. These two facts are ensured
 by Constraints~\eqref{for:no-unn} and \eqref{for:best-border-in}.
 Constraint~\eqref{for:upper-mful} forbids spending more votes than possible to
 push some candidates to the \kelgroup{}. The same role for ``wasted'' approvals
 plays Constraint~\eqref{for:upper-wasted}. The upper bound of wasted approvals
 is counted in the following way: From the number of all ``places'' of putting
 approvals (we subtract one from the number of candidates because we cannot put
 any approvals except for~$\hat{s}$ to candidate~$\hat{c}$), we first subtract
 the approvals already given to candidates in the \kelgroup{} (i.e.,\
 $\mful_{z}^{\hat{c}}$). Next, we subtract all ``places''  of approvals that
 will cause the unchosen potential candidates to be chosen (i.e.,\
 $\fbid_{z}^{\hat{c}}$).
 Constraint~\eqref{for:upper-all} ensures that, altogether, we spend exactly as
 many approvals as required, and Constraint~\eqref{for:kelgroup} holds only when
 a proper number of candidates are pushed to be part of \kelgroup{}. The last
 equation forces maximization of the egalitarian utility of the winning
 \kelgroup{} when $s$ is maximized.
     
 Using our technique we can obtain a solution by making $O(nm)$ ILPs with at
 most $2\manipnumber \udiff^{\manipnumber}+2$ variables. According to Lenstra's
 famous result~\citep{Len83}, the constructed ILPs yield fixed-parameter
 tractability with respect to the combined parameter $\manipnumber + \udiff$.
\end{proof}

\section{Conclusion}
\label{sec:conclusion}
\begin{table*}
 \caption{Computational complexity of tie-breaking and coalitional manipulation.
  Our results for \ellBloc{} hold for any~$\ell\geq 1$, and thus cover \SNTV.
  The parameters are the size~$k$ of the \elgroup{}, the
  number~\manipnumber{} of manipulators, and the number~\udiff{} of different
  utility values. Furthermore, $m$~is the number of candidates and~$n$ is the
  number of voters. The result marked with $\dagger{}$ holds for all possible
  combinations of the respective evaluation and behavior variants. The results
  marked with~$\diamond$ hold also for $\calF=\Flex$.}
  \centering
 \begin{tabular}{l c l}
  \multicolumn{3}{l}{\TIElong{\eval}{\bhav}, easy cases:}\\
  \toprule
  settings (evaluation, behavior) & complexity & reference\\
  \midrule
  utilitarian or cand.\-wise egalitarian,   & \multirow{2}{*}{$O(m \cdot
  (\manipnumber + \log m))$} & \multirow{2}{*}{Cor.~\ref{cor:TBeasy} $\dagger$} \\
  optimistic or pessimistic &                                \\[1ex]
  egalitarian, pessimistic     & $O(\manipnumber \cdot m \log m)$ &
  Thm.~\ref{thm:egalTB}\\
 \end{tabular}

 \vspace{3ex}

 \begin{tabular}{l c l}
  \multicolumn{3}{l}{\TIElong{\egal}{\opt} (egalitarian, optimistic):}\\
  \toprule
  parameters, restrictions               & complexity & reference \\
  \midrule
  general                  & $\np$-hard            & Thm.~\ref{thm:egalTB} \\
  $k$, 0/1 utilities and $q=1$ & $\wtwo$-hard              & Thm.~\ref{thm:egalTB} \\
  $\manipnumber+k$         & $\wone$-hard              & Thm.~\ref{thm:egalTBcomb} \\
  $\manipnumber+\udiff$     & $\fpt$                & Thm.~\ref{thm:egalTBilp} \\
 \end{tabular}

 \vspace{3ex}

  \begin{tabular}{l c l}
  \multicolumn{3}{l}{\CMlong{\ellBloc}{\calF}{\eval}}\\
  \multicolumn{3}{l}{utilitarian/cand.\-wise egalitarian, optimistic/pessimistic:}\\
  \toprule
  restrictions                  & complexity & reference \\
  \midrule
  general                      & $O(k^2m^2(n+r))$ &
  Thm.~\ref{thm:genCMinP} $\diamond$\\
  consistent manipulators      & $O(m(m + r +n))$ & Prop.~\ref{prop:utilConsistentPoly} $\diamond$\\
  $\ell=k$                     & $O(m(m + r +n))$ & Cor.~\ref{cor:easyCM} $\diamond$\\
 \end{tabular}

 \vspace{3ex}

 \begin{tabular}{l c l}
  \multicolumn{3}{l}{\CMlong{\ellBloc}{\calF}{\eval}}\\
  \multicolumn{3}{l}{egalitarian, optimistic/pessimistic:}\\
  \toprule
  parameters, restrictions & complexity & reference \\
  \midrule
  general         & $\np$-hard                   & Cor.~\ref{cor:egalCMhard} $\diamond$\\
  $k$, 0/1 utilities and $q=1$ & $\wtwo$-hard                     & Cor.~\ref{cor:egalCMhard} $\diamond$\\
  $\manipnumber+k$   & $\wone$-hard                     & Cor.~\ref{cor:egalCMhard} $\diamond$\\
  $\manipnumber+\udiff$ & $\fpt$                       &
  Thm.~\ref{egalCMilp} and Thm.~\ref{egalCMilplex} $\diamond$\\
 \end{tabular}

\label{tab:results}
\end{table*}
We developed a new model for and started the first systematic study of
coalitional manipulation for multiwinner elections.
Our analysis revealed that multiwinner coalitional manipulation requires models
which are significantly more complex than those for single-winner coalitional manipulation
or multiwinner non-coalitional manipulation.
Our model assumes a given, fixed coalition of manipulators can compensate
their (potential) utility loss after a manipulation in some way. Thus, in
particular, our model does not account for questions like whether a particular
coalition can be stable, how coalitions are forming, or what to do to avoid a
coalition's split. Yet, we think that there are two important reasons why the
model is, in fact, useful. First, there are situation where manipulators will not leave
the coalition despite of being worse off after some manipulation. Such a
situation might naturally occur if manipulators share a common goal and they
agree either on losing their individual utility for the good of the whole group
or on compensating their utility losses internally among themselves.
Second, assessing the quality of possible manipulations for a given coalition is
essential to answer more general questions about coalitions (e.g.,\ what is the
most profitable coalition) that lead to a new research direction (see the last
paragraph of this section for a broader discussion on this direction).

In our work, on the one hand, we generalized tractability results for
coalitional manipulation of $\ell$-Approval by \citet{CSL07} and \citet{Lin11}
and for non-coalitional manipulation of \Bloc{} by \citet{MPRZ08} and
\citet{OZE13} to tractability of coalitional manipulation of \ellBloc{} in case
of utilitarian or candidate-wise egalitarian evaluation of \elgroup{}s. On the
other hand, we showed that coalitional manipulation becomes 
intractable in case of egalitarian evaluation of \elgroup{}s.

Let us discuss a few findings in more detail (Table~\ref{tab:results} surveys
all our results). We studied lexicographic, optimistic, and pessimistic
tie-breaking and showed that, with the exception of egalitarian group
evaluation, winner groups can be determined very efficiently. The intractability
(NP-hardness, parameterized hardness in form of \wone{}- and \wtwo{}-hardness)
for the egalitarian case, however, turns out to hold even for quite restricted
scenarios. We also demonstrated that numerous tie-breaking rules can be
``simulated'' by (carefully chosen) lexicographic tie-breaking, again except for
the egalitarian case. Interestingly, the hardness of egalitarian
tie-breaking holds only for the optimistic case while for the pessimistic case
it~is~efficiently solvable. Hardness for the egalitarian optimistic scenario,
however, translates into hardness results for coalitional manipulation
\emph{regardless} of the specific tie-breaking rule. On~the~contrary,
coalitional manipulation becomes tractable for the other two evaluation
strategies---``candidate-wise'' egalitarian and utilitarian. Additionally, for
few candidates and few different utility values the voters assign to the
candidates, manipulation becomes tractable also for the egalitarian optimistic
scenario.

In our study, we entirely focused on shortlisting as one of the simplest tasks
for multiwinner elections to analyze our evaluation functions. It is interesting
and non-trivial to develop models for multiwinner rules that aim for
proportional representation or diversity. For shortlisting, extending our
studies to non-approval-like scoring-based voting correspondences would be
a~natural next step. In this context, already seeing what happens if one extends
the set of individual scores from being only 0 or~1 to more (but few) numbers is
of interest. Moreover, we focused on deterministic tie-breaking mechanisms,
ignoring randomized tie-breaking---another issue for future research.

An analysis of the manipulators' behavior, briefly mentioned at the beginning of
this section, directing towards game theory seems promising as well. (Even more
so since we identified polynomial-time algorithms for a few variants of
coalitional manipulation.) One very interesting question about coalitions is,
for example, whether a particular coalition is stable. Intuitively, the utility
for~every voter that is a part of~the manipulating coalition should not be below
the~utility the voter receives when voting sincerely. This is of course only a
necessary condition to ensure the stability of a coalition. A more sophisticated
analysis of stability needs to consider game-theoretic aspects such as Nash or
core stability~\citep{NRTV07}.

\subsection*{Acknowledgments}
We thank the anonymous \emph{IJCAI '17} reviewers for their constructive
and valuable feedback.
Robert Bredereck was from mid-September 2016 to mid-September 2017 on
postdoctoral leave at the University of Oxford, supported by the DFG fellowship
BR 5207/2. Andrzej Kaczmarczyk was supported by the DFG project AFFA (BR 5207/1
and NI 369/15).

\bibliographystyle{plainnat}
\bibliography{bibliography}

\end{document}